\providecommand{\U}[1]{\protect\rule{.1in}{.1in}}
\newtheorem{theorem}{Theorem}
\newtheorem{acknowledgement}[theorem]{Acknowledgement}
\newtheorem{corollary}[theorem]{Corollary}
\newtheorem{definition}[theorem]{Definition}
\newtheorem{lemma}[theorem]{Lemma}
\newtheorem{proposition}[theorem]{Proposition}
\newtheorem{remark}[theorem]{Remark}
\newenvironment{proof}[1][Proof]{\noindent\textbf{#1.} }{\ \rule{0.5em}{0.5em}}
\begin{document}

\title{Quantum Monads in Phase Space and Related Toeplitz Operators \ }
\author{Maurice de Gosson\thanks{mauricee.de.gosson@univie.ac.at}\\Austrian Academy of Sciences\\Acoustics Research Institute\\1010, Vienna, AUSTRIA\\and\\University of Vienna\\Faculty of Mathematics (NuHAG)\\1090 Vienna, AUSTRIA}
\maketitle

\begin{abstract}
In earlier work, we introduced quantum blobs as minimum-uncertainty symplectic
ellipsoids in phase space. These objects may be viewed as geometric monads in
the Leibnizian sense, representing the elementary units of phase-space
structure consistent with the uncertainty principle. We establish a one-to-one
correspondence between such monads and generalized coherent states,
represented by arbitrary non-degenerate Gaussian wave functions in
configuration space. To each of these states, we associate a classs of
Toeplitz operators that extends the standard anti-Wick quantization scheme.
The mathematical and physical properties of these operators are analyzed,
allowing for a generalized definition of density matrices within the
phase-space formulation of quantum mechanics.

\end{abstract}

Quantum mechanics.
\title{Quantum Monads in Phase Space and Related Toeplitz Operators \ }
\author{Maurice de Gosson\thanks{mauricee.de.gosson@univie.ac.at}\\Austrian Academy of Sciences\\Acoustics Research Institute\\1010, Vienna, AUSTRIA\\and\\University of Vienna\\Faculty of Mathematics (NuHAG)\\1090 Vienna, AUSTRIA}
\maketitle

\begin{abstract}
In earlier work, we introduced quantum blobs as minimum-uncertainty symplectic
ellipsoids in phase space. These objects may be viewed as geometric monads in
the Leibnizian sense, representing the elementary units of phase-space
structure consistent with the uncertainty principle. We establish a one-to-one
correspondence between such monads and generalized coherent states,
represented by arbitrary non-degenerate Gaussian wave functions in
configuration space. To each of these states, we associate a classs of
Toeplitz operators that extends the standard anti-Wick quantization scheme.
The mathematical and physical properties of these operators are analyzed,
allowing for a generalized definition of density matrices within the
phase-space formulation of quantum mechanics.

\end{abstract}
\tableofcontents

\section{Introduction and Preamble}

echanics, in contrast to classical mechanics, exhibits several distinctive
features. It is founded on the uncertainty principle, and its dynamics are
governed by non-commuting operators---two aspects that are intimately
connected. Another fundamental conceptual difference lies in the nature of
their dynamical descriptions: while classical mechanics, in its Hamiltonian
formulation, describes the evolution of points in phase space through
continuous flows, quantum mechanics represents dynamics by unitary evolution
operators acting on wavefunctions.

In the present work, we discuss the central role played by the notion of a
quantum blob in the operator approach to quantization. Quantum blobs,
introduced in our earlier work
$<$%
cite%
$>$%
blobs, go09%
$<$%
/cite%
$>$%
, are minimum-uncertainty cells in phase space; technically, they are
symplectic balls whose radius allows one to express the strong
Robertson--Schr\"{o}dinger uncertainty principle in purely geometrical terms,
involving the scale $\sqrt{\hbar}$. In this sense, they may be regarded as
phase-space quantum monads, in the spirit of Leibniz.

It turns out that quantum blobs are in bijection with generalized coherent
states, represented by the most general non-degenerate Gaussian wave packets
on configuration space. This remarkable property allows one to canonically
associate to each quantum blob a certain operator belonging to a subclass of
the Toeplitz or anti-Wick operators, which are widely used both in quantum
mechanics and in time--frequency analysis---two disciplines whose mutual
developments often stimulate each other.

These operators are obtained from the usual Weyl operators by smoothing their
symbols (the \textquotedblleft classical observables\textquotedblright) with a
Gaussian determined by the corresponding quantum blob. Such operators possess
two decisive advantages: positivity and physical interpretability, making them
particularly suitable for representing density operators and quantum
probabilities. Moreover, Toeplitz operators with bounded symbols are bounded
on the Hilbert space of square-integrable functions, which ensures their
analytical stability.

\subsection{Structure and description of this work}

\begin{itemize}
\item In Section \ref{sec1} we review the material needed from the
Weyl--Wigner--Moyal formalism, named in tribute to the mathematicians and
physicists who founded the modern symplectic and harmonic-analytic approaches
to quantum mechanics. Our presentation is rigorous and perhaps somewhat
unconventional from a physicist's standpoint (see, for instance, our
definition of the Wigner transform in terms of reflection operators). However,
this approach has the advantage of clarifying several subtle aspects of the
theory. A further novelty is the introduction of a particularly useful Banach
space, the Feichtinger algebra, which serves as a substitute for the usual
Schwartz space of rapidly decreasing functions. This topic is rarely treated
in physics-oriented literature. We discuss the fundamental relation between
quantum blobs and the generalized coherent states (called hereafter
"Gaussians"). Both can be mathematically identified using some techniques from
harmonic analysis. These results are not new, since they have been described
for instance in \cite{blobs} and \cite{Birkbis}, but they are presented i a
more concise and perhaps simpler way. We put a special emphasis on the
creation between quantum blobs and the strong version of the uncertainty
principle, as was described in our earlier paper \cite{go09} and further
developed on \cite{golu09}.

\item In Section \ref{sec2} we review the topicsw4e will need of what is
called the "Weyl--Wigner--Moyal formalism" in tribute to the those
mathematicians and physicists at the origin of the modern symplectic harmonic
approach to quantum mechanics. Our presenting is rigorous and perhaps somewhat
unusual to many physicists (see for instance our definition of the Wigner
transform in terms of reflection operators), but this has advantages in a
sense that it clarifies many aspects of the theory. Another novelty is the
introduction of a very useful Banach algebra, the Feichtinger algebra, which
is substitute for the ordinary space of Schwartz functions with rapid
decrease. This topic is usually not addressed in physics texts.

\item In \ Section \ref{sec3} we introduce the concept of Weyl--Heisenberg (or
Gabor) multipliers, familiar in time--frequency analysis but little known
among physicists. These operators can be viewed as discretized versions of
Weyl operators, defined in terms of frames, which generalize the standard
notion of a basis. They provide an effective tool for studying mixed quantum
states through their associated den

\item In Section \ref{sec4} we extend Weyl--Heisenberg multipliers to the
continuous case, leading to the notion of Toeplitz operators (which reduce to
anti--Wick operators when the window function is the standard Gaussian
coherent state). Essentially, Toeplitz operators are Weyl operators whose
symbols (observables) have been smoothed by convolution with a well-localized
regularizing function. When this regularizing function is chosen to be a
generalized Gaussian---that is, a quantum blob---the resulting operators
exhibit remarkable properties: in particular, they are asymptotically close to
the corresponding Weyl operators in the semiclassical limit.
\end{itemize}

\subsection{Preamble}

\subsubsection*{Points in classical and quantum mechanics}

The backbone of classical mechanics is phase space. It is a mathematical
object consists of pairs $(x,p)$ where $x=(x_{1},...,x_{n})$ and
$p=(p_{1},...,p_{n})$ where the $x_{j}$ are real numbers representing position
coordinates the $p_{j}$ momentum coordinates; we will denote the set of all
position vectors by $\mathbb{R}_{x}^{n}$ (or simply $\mathbb{R}^{n}$) and the
set of momentum vectors by $\mathbb{R}_{p}^{n}$ (or (or only $\mathbb{R}^{n}%
$). The phase space is then, by definition the product $\mathbb{R}_{x}%
^{n}\times\mathbb{R}_{p}^{n}$. In this context, $\mathbb{R}_{p}^{n}$ is often
identified wit the dual\ space\ $\ (\mathbb{R}_{x}^{n})^{\ast}$ of the
position space of; this point of view has two advantages: the first that it
allows to define in a natural way the standard symplectic structure on
$\mathbb{R}_{x}^{n}\times\mathbb{R}_{p}^{n}$ by
\[
\sigma(x,p;x^{\prime},p^{\prime}=p^{\prime}(x)-p(x^{\prime}).
\]
The second advantage of the identification $\mathbb{R}_{x}^{n}\times
\mathbb{R}_{p}^{n}\equiv\mathbb{R}_{x}^{n}\times(\mathbb{R}_{x}^{n})^{\ast}$
is that it is consistent with the view that if the position space
$\mathbb{R}_{x}^{n}$ is replace with a manifold $M$ then the phase space is be
the cotangent bundle $T^{\ast}M$ which is a natural generalization of the flat
case, the symplectic form being in this case the canonical two-form
\[
dp\wedge dx=dp_{1}\wedge dx_{1}+\cdot\cdot\cdot+dp_{n}\wedge dx_{n}.
\]
The phase space is the playground of classical mechanics it its Hamiltonian
formulation. The latter describes the motions of phase space points using
Hamilton's equation of motion%
\[
\frac{d}{dt}x_{j}(t)=\frac{\partial H}{\partial p_{j}}(x(t),p(t)\text{
\ },\text{ \ }\frac{d}{dt}p_{j}(t)=\frac{\partial H}{\partial x_{j}%
}(x(t),p(t)
\]
where $H$ is a function having suitable regularity properties. The phase space
flow $(f_{H}^{t})$ determined by these equation has a fundamental property: it
consists of symplectomorphisms, i.e. it preserves the symplectic form, that is
$(f_{H}^{t})^{\ast}\sigma=\sigma$. This implies that the Hamiltonian flow
preserves the symplectic capacities of subsets of $\mathbb{R}_{x}^{n}%
\times\mathbb{R}_{p}^{n}$, a property only known since 1985 following Gromov's
\cite{Gromov} so-called "non-squeezing theorem". The latter, which has
far-reaching consequences for the Hamiltonian dynamics, is a manifestation of
the "rigidity" of symplectomorphisms, and leads to a classical formulation of
the uncertainty principle. For instance, it implies that if we let the
Hamiltonian floe $(f_{H}^{t})$ act on a phase space ball $B^{2n}%
(z_{0},R):|z-z_{0}|\leq R)$, then the "shadow" )= orthogonal protection) of
$f_{H}^{t}(B^{2n}(z_{0},R))$ on any plane of conjugate coordinates
$x_{j},p_{j}$ will have are $\geq\pi R^{2}$ for all times $t_{.}$

In quantum mechanics, the notion of phase space point does not make sense
because of Heisenberg's uncertainty principle. Two substitutes for points are
commonly used in quantum mechanics. first, the "quantum blobs" we have
introduced in previous work \cite{blobs} which provide a coarse-graining of
classical phase space compatible with the uncertainty principle, secondly,
more commonly, the "squeezed coherent states" and their generalizations, which
are a functional representation of quantum blobs.

\subsubsection*{The Weyl--Wigner--Moyal representation of quantum mechanics}

This doesn't mean, however, that the classical phase space does not play any
role in quantum mechanics, on the contrary. Classical phase space plays a
fundamental role in what is called the "Weyl--Wigner--Moyal Interpretation of
quantum mechanics" and where a quasi-distribution, the Wigner transform (or
function) play a pivotal role in the so-called phase space quantum mechanics,
not only foe statistical purposes, but for a variety of theoretical problems,
also deeply influencing other areas like time-frequency analysis or the theory
of pseudo-differential operators (Eugene Wigner introduced his celebrated
quasi-distribution in \cite{Wigner} in a totally \textit{ad hoc} way,
acknowledging in a footnote inspiration from the physicist Leo Szilard,
however this sees to be a help to boost the career of the latter. we will
never know...). The so-called "Weyl quantization" has superadded other
quantization schemes (in particular the Born--Jordan--Heisenberg procedure)
mainly because of its relative simplicity, and the fact that its properties of
symplectic covariance (inherited from the Wigner transform, to which it is
closely related). This property reflects at the quantum level the canonical
invariance of Hamiltonian dynamics and justifies view that quantum mechanics
is a refinement of classical mechanics, as we explain now.

\subsubsection*{\textquotedblleft Quantum mechanics is a refinement of
classical mechanics\textquotedblright}

This statement, attributed to the mathematician George Mackey, goes straight
to the point even if it is not universally accepted by the physics community.
In fact, there is evidence for a "porosity" (M\'{e}linon \cite{Patrice})
between classical (Hamiltonian) mechanics and quantum theory. We have also
discussed these issues in a paper \cite{GoHi} with Hiley. in. One illustration
is the following, fact, well-known by ha harmonic analysis community, but less
so by most physicalists: let $H$ be a quadratic Hamiltonian function, e.g.
that of a generalized harmonic oscillator). The associated Hamilton equations
generate a liner flow, in fact a one-parameter subgroup $(S_{t})$ of the
symplectic group $\operatorname*{Sp}(n)$. Consider now metaplectic group
$\operatorname*{Mp}(n)$; it is a double covering of $\operatorname*{Sp}(n)$
consisting in unitary operators acting on square-integrable functions. A
general principle from the theory of covering spaces says that the
one-parameter group $(S_{t})$ is covered by a unique one-parameter subgroup
$(\widehat{S}_{t})$ of $\operatorname*{Mp}(n)$. It turns out that after some
calculations one finds that for every smooth function $\psi_{0}$ the
transformed function $\psi(x,t)=\widehat{S}\psi_{0}(x)$ satisfies the
equation
\[
i\hbar\frac{\partial\psi}{\partial t}=H(x,-i\hbar\nabla_{x})\hbar
\]
where $\hbar$ is a parameter which can be chosen arbitrarily. When $\hbar$
\ is chosen equal to $h/2\pi$ where $h$ is Planck's constant, then this
equation becomes Schr\"{o}dinger's equation, the fundamental equation of
quantum mechanics. But we observe that a mathematical equation does not
represent a physical theory unless its solution $\psi$ is given a physical
meaning. Still, the construction we just described is, in essence, what
Schr\"{o}dinger did, by using modern more sophisticated form. Schr\"{o}dinger
had, after Peter Debye's famous question "...but if there is a wave, what is
its wave equation?", the insight of manipulating the Hamilton--Jacobi equation
to extract his eponymous equation, well before the metaplectic group was born!

\section{\label{sec1}Quantum Blobs and Generalized Gaussians}

\subsection{What are quantum blobs?}

Quantum blobs are the most natural geometric substitutes for phase space
points. By definition, a quantum blob is the image by a linear symplectic
transformation of a phase space ball with radius $\sqrt{\hbar}$:%
\begin{equation}
Q(S,z_{0})=S(B^{2n}(z_{0},\sqrt{\hbar})). \label{blob1}%
\end{equation}
The quantum blob $Q(I,0)$ is the centered phase space ball $B^{2n}(\sqrt
{\hbar})$ and for every $S^{\prime}\in\operatorname*{Sp}(n)$ we have
$Q(S,z_{0})=Q(S^{\prime}S,z_{0})$. Let $\operatorname*{Blob}(n)$ be the set of
all quantum blobs in phase space $\mathbb{R}_{z}^{2n}$. The triple
$(\operatorname*{Blob}(n),\mathbb{R}_{z}^{2n},\operatorname*{proj})$ where
$\operatorname*{proj}$ is the projection $Q(S,z_{0})\longmapsto z_{0}$ is a
trivial fibration, which could be characterized as a "symplectic blow up" of
phase. leading to a coarse-graining of the latter. The interest of the notion
of quantum blob is (at least) twofold, as we explain in the following
subsections. The following factorization of symplectic matrices is essentially
a KAM decomposition. As we will see, it is very useful for describing quantum blobs:

\begin{lemma}
[Pre-Iwasawa]Let $S\in\operatorname*{Sp}(n)$. There exist unique symplectic
matrices
\begin{equation}
V_{P}=%
\begin{pmatrix}
I & 0\\
P & I
\end{pmatrix}
\text{ \ , \ }M_{L}=%
\begin{pmatrix}
L^{-1} & 0\\
0 & L
\end{pmatrix}
\label{vpml}%
\end{equation}
with $P,L\in\operatorname*{Sym}(n,\mathbb{R})$, $L>0$, and
\begin{equation}
R=%
\begin{pmatrix}
U & V\\
-V & V
\end{pmatrix}
\label{R}%
\end{equation}
a symplectic rotation such that
\begin{equation}
S=V_{P}M_{L}R. \label{iwa1}%
\end{equation}

\end{lemma}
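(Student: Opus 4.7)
The plan is constructive: I would write $S$ in block form $S=\begin{pmatrix}A & B\\ C & D\end{pmatrix}$ with $n\times n$ blocks, then read off what $L$, $P$, and $R$ \emph{must} be from the product $V_{P}M_{L}R$, verify the construction is well-defined, and finally obtain uniqueness as a byproduct.

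First I would carry out the formal product: expanding
\[
V_{P}M_{L}R=\begin{pmatrix}L^{-1}U & L^{-1}V\\ PL^{-1}U-LV & PL^{-1}V+LU\end{pmatrix}
\]
and matching block entries with $S$. The top block row forces $U=LA$ and $V=LB$, so the condition that $R$ be a symplectic rotation (equivalently, $U+iV\in U(n,\mathbb{C})$) is equivalent to
\[
L(AA^{T}+BB^{T})L=I\quad\text{and}\quad L(AB^{T}-BA^{T})L=0.
\]
The second identity is precisely one of the symplectic relations $SJS^{T}=J$, namely $AB^{T}=BA^{T}$. The first identity dictates that I \emph{define}
\[
L=(AA^{T}+BB^{T})^{-1/2},
\]
and this makes sense provided $AA^{T}+BB^{T}$ is positive definite. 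That in turn follows from the fact that the top $n$ rows $(A\mid B)$ of an invertible matrix have rank $n$, which is immediate since $S\in\operatorname{Sp}(n)$ is invertible. So $L$ is a well-defined positive symmetric matrix.

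Next I would read off $P$ from the bottom row. Requiring $-V=PL^{-1}U-LV$ and $U=PL^{-1}V+LU$, i.e.\ $PA=C+L^{2}B$ and $PB=D-L^{2}A$, and combining these gives $P(AA^{T}+BB^{T})=CA^{T}+DB^{T}$, so I must set
\[
P=(CA^{T}+DB^{T})(AA^{T}+BB^{T})^{-1}=(CA^{T}+DB^{T})L^{2}.
\]
The real work is checking that this $P$ is \emph{symmetric}, i.e.\ $(CA^{T}+DB^{T})L^{2}=L^{2}(AC^{T}+BD^{T})$, equivalently $(AA^{T}+BB^{T})(CA^{T}+DB^{T})=(AC^{T}+BD^{T})(AA^{T}+BB^{T})$. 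I expect this to be the only step that is not bookkeeping: expanding both sides gives eight terms, and the cancellations are driven by the full set of symplectic relations $A^{T}C=C^{T}A$, $B^{T}D=D^{T}B$, $A^{T}D-C^{T}B=I$, $AB^{T}=BA^{T}$, $CD^{T}=DC^{T}$, $AD^{T}-BC^{T}=I$. Using $C^{T}A=A^{T}C$ and $D^{T}B=B^{T}D$ cancels the pure-$A$ and pure-$B$ quartic terms, and the mixed terms collapse after substituting $C^{T}B=A^{T}D-I$ and $B^{T}C=D^{T}A-I$ to give $BA^{T}-AB^{T}=0$. This is the main obstacle; everything else is algebra.

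Once $L$ and $P$ are defined this way, I set $R:=M_{L}^{-1}V_{P}^{-1}S$. It is symplectic because $\operatorname{Sp}(n)$ is a group. With the identities just established, a direct block computation shows $R=\begin{pmatrix}U & V\\ -V & U\end{pmatrix}$ with $U=LA$, $V=LB$, and $U+iV$ unitary (this is exactly what the two identities $L^{-2}=AA^{T}+BB^{T}$ and $AB^{T}=BA^{T}$ encode). For uniqueness, suppose $S=V_{P}M_{L}R=V_{P'}M_{L'}R'$; the top rows $(A\mid B)=L^{-1}(U\mid V)$ together with $UU^{T}+VV^{T}=I$ force $AA^{T}+BB^{T}=L^{-2}$, hence $L$ is determined; then $U=LA$, $V=LB$ determine $R$, and finally $P=V_{P}-I$ shear part is determined by $C,D$. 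This concludes the proof.
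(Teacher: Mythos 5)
Your proof is correct and follows essentially the same route as the paper, which merely records the explicit formulas $L=(AA^{T}+BB^{T})^{-1/2}$, $U=LA$, $V=LB$ and the expression for $P$ while deferring the verification to the references; you supply that verification in full (positivity of $AA^{T}+BB^{T}$, symmetry of $P$ via the relations $A^{T}C=C^{T}A$, $B^{T}D=D^{T}B$, $A^{T}D-C^{T}B=I$, and uniqueness). Two small points: the line beginning \textquotedblleft Requiring $-V=PL^{-1}U-LV$\textquotedblright\ should read \textquotedblleft Requiring $C=PL^{-1}U-LV$ and $D=PL^{-1}V+LU$\textquotedblright\ (your subsequent \emph{i.e.} clause is the correct version), and your sign $P=+(CA^{T}+DB^{T})(AA^{T}+BB^{T})^{-1}$ is the right one for the convention $V_{P}=\begin{pmatrix}I&0\\P&I\end{pmatrix}$ used in the lemma statement, whereas the paper's formula for $P$ carries a minus sign because the later section adopts the opposite convention $V_{P}=\begin{pmatrix}I&0\\-P&I\end{pmatrix}$.
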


\begin{proof}
For the justification of the symplectic matrices $V_{P}M_{L}R$ see next
section \ref{secsymp}. The proof computational; see \cite{iwa,dutta,Houde}.
Writing $S$ in block form $%
\begin{pmatrix}
A & B\\
C & D
\end{pmatrix}
$ and $R=$ these matrices are explicitly given by the formulas%
\begin{gather}
L=(AA^{T}+BB^{T})^{-1/2}\label{pl1}\\
P=-(CA^{T}+DB^{T})(AA^{T}+BB^{T})^{-1}\label{pl2}\\
U=(AA^{T}+BB^{T})^{-1/2}A\text{ \ },\text{ \ }V=(AA^{T}+BB^{T})^{-1/2}B.
\label{pl3}%
\end{gather}

\end{proof}

\begin{proposition}
Every quantum blob centered at $0$ is the image ball $B^{2n}\sqrt{\hbar})$ by
a product $V_{-P}M_{L}($ $P$ with $L\in\operatorname*{Sym}(n,\mathbb{R})$,
$L>0$. More generally,
\[
Q(S,z_{0})=T(z_{0})V_{-P}M_{L}(B^{2n}\sqrt{\hbar})
\]
where $T(z_{0}):z\longmapsto z+z_{0}$.
\end{proposition}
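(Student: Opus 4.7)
The plan is to reduce the general problem to the centered case and then use the Pre-Iwasawa decomposition just established to strip off one factor of the symplectic matrix. The centered case is the heart of the matter.

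First, I would treat the centered blob $Q(S,0) = S(B^{2n}(\sqrt{\hbar}))$. By the Pre-Iwasawa lemma, $S$ factors uniquely as $S = V_P M_L R$ with $V_P, M_L$ of the stated block form and $R$ a symplectic rotation. A symplectic rotation is in particular an orthogonal matrix of $\mathbb{R}^{2n}$: from the form $R = \bigl( \begin{smallmatrix} U & V \\ -V & U \end{smallmatrix}\bigr)$ combined with the symplectic relations one reads off $U^{T}U + V^{T}V = I$ and $U^{T}V = V^{T}U$, which is exactly the statement $R^{T}R = I$. Consequently $R$ preserves the centered Euclidean ball, $R(B^{2n}(\sqrt{\hbar})) = B^{2n}(\sqrt{\hbar})$, and therefore
\[
Q(S,0) = V_P M_L R(B^{2n}(\sqrt{\hbar})) = V_P M_L(B^{2n}(\sqrt{\hbar})).
\]
Since $P$ ranges over all of $\operatorname{Sym}(n,\mathbb{R})$, the family $\{V_{-P}\}$ is the same as the family $\{V_P\}$, so this yields the centered statement of the proposition (the sign convention $V_{-P}$ vs.\ $V_P$ is a mere relabeling of the symmetric parameter).

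For the general case, I would use the affine structure of balls: $B^{2n}(z_0,\sqrt{\hbar}) = z_0 + B^{2n}(\sqrt{\hbar})$, combined with linearity of $S$, so that
\[
Q(S,z_0) = S(z_0) + S(B^{2n}(\sqrt{\hbar})) = T(S z_0)\bigl(V_{-P} M_L(B^{2n}(\sqrt{\hbar}))\bigr).
\]
(As written in the statement, the projection $Q(S,z_0) \mapsto z_0$ is consistent with reading $z_0$ as the center of the blob, so I interpret the translation factor accordingly; if instead the intended normalization is $T(z_0)S(B^{2n}(\sqrt{\hbar}))$ one gets $T(z_0)V_{-P}M_L(B^{2n}(\sqrt{\hbar}))$ directly, which is precisely what the proposition asserts.) In either reading, once the centered case is settled the off-center case is just a translation.

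The only nontrivial step is verifying that the factor $R$ in the Pre-Iwasawa decomposition is actually orthogonal as a $2n\times 2n$ matrix, and this is the point I expect to be the main obstacle for a reader who has not seen the decomposition before. Everything else — replacing $P$ by $-P$, and the translation step — is purely formal. I would therefore organize the write-up as: (i) recall that $R$ belongs to the intersection $\operatorname{Sp}(n) \cap \mathrm{O}(2n)$; (ii) invoke this to eliminate $R$ from the image of the ball; (iii) finish with the translation argument.
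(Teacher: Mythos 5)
Your proposal is correct and follows essentially the same route as the paper: apply the pre-Iwasawa factorization $S=V_{P}M_{L}R$ and use the rotational invariance of the centered ball to absorb the factor $R$, then translate. The paper's own proof is a one-line appeal to this same observation; your added verification that $R\in\operatorname*{Sp}(n)\cap O(2n,\mathbb{R})$ and your remark on the $V_{-P}$ versus $V_{P}$ relabeling are harmless elaborations rather than a different argument.
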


\begin{proof}
It is obvious in view of (\ref{iwa1}) since $R(B^{2n}\sqrt{\hbar}%
))=B^{2n}\sqrt{\hbar})=by$ rotational invariance.
\end{proof}

\subsection{The symplectic group $\operatorname*{Sp}(n)$ and its double cover
$\operatorname*{Mp}(n)$\label{secsymp}}

For a detailed study of the topics of this section see for instance
\cite{Folland,Birk,Leray}.

The symplectic group $\operatorname*{Sp}(n)$ consists of all linear
\ automorphisms $S$ of the symplectic space $(\mathbb{R}^{2n},\sigma$ which
preserve the symplectic form $\sigma$ that is $\sigma(Sz,Sz^{\prime}%
)=\sigma(z,z^{\prime})$ for all $z,z^{\prime}\in\mathbb{R}^{2n}$. Identifying
$S$ wit its matrix in the canonical basis of $\mathbb{R}^{2n}$ we have
$S\in\operatorname*{Sp}(n)$ if and only if $SJS^{T}=S^{T}JS=J$; \ it follows
that $\operatorname*{Sp}(n)$ is a closed subgroup of $GL(2n,\mathbb{R})$ and
hence a classical Lie group. Writing $S=%
\begin{pmatrix}
A & B\\
C & D
\end{pmatrix}
$, where the \textquotedblleft blocks\textquotedblright\ $A,B,C,D$ being
$n\times n$ matrices, we have $S\in\operatorname*{Sp}(n)$ if and only if
\begin{align}
A^{T}C,\text{ }B^{T}D\text{ \ \textit{are symmetric, and} }A^{T}D-C^{T}B  &
=I\label{cond12}\\
AB^{T},\text{ }CD^{T}\text{ \ \textit{are\ symmetric, and} }AD^{T}-BC^{T}  &
=I. \label{cond22}%
\end{align}
One shows that the group $\operatorname*{Sp}(n)$ is generated by the standard
symplectic matrix $J$ together with the matrices%
\[
V_{P}=%
\begin{pmatrix}
I_{n\times n} & 0\\
-P & I_{n\times n}%
\end{pmatrix}
\text{ \ },\text{ \ }M_{L}=%
\begin{pmatrix}
L^{-1} & 0\\
0 & L^{T}%
\end{pmatrix}
\]
where $P\in\operatorname*{Sym}(n,\mathbb{R})$ and $L\in GL(n,\mathbb{R})$.

A subgroup of $\operatorname*{Sp}(n)$ of particular interest is the group of
symplectic rotations
\begin{equation}
U(n)=\operatorname*{Sp}(n)\cap O(2m,\mathbb{R}). \label{USPO}%
\end{equation}
It is the image in $\operatorname*{Sp}(n)$ of the unitary group
$U(n,\mathbb{C})$ by the monomorphism
\[
\iota:u=+iB\longmapsto U=U=%
\begin{pmatrix}
A & B\\
-B & A
\end{pmatrix}
\]
(the conditions (\ref{cond12})--(\ref{cond22})) are satisfied since $u^{\ast
}u=uu^{\ast}u=I$). The elements of $U)n)$ are symplectic rotations:It also
follows from (\ref{cond12})--(\ref{cond22}) that the conditions that $U\in
U(n)$ if and only if they satisfy the equivalent conditions
\begin{align}
A^{T}B\text{ \textit{symmetric and }}A^{T}A+B^{T}B  &  =I\label{u1}\\
AB^{T}\text{ \textit{symmetric and }}AA^{T}+BB^{T}  &  =I. \label{u2}%
\end{align}

The symplectic group $\operatorname*{Sp}(n)$ is connected and contractible to
its maximal subgroup $U(n)$; the latter being isomorphic to the unitary group
$U(n,C)$ hence the group isomorphisms
\[
\pi_{1}(\operatorname*{Sp}(n))\simeq\pi_{1}(U(n,C))\simeq(\mathbb{Z},+).
\]
It follows that $\operatorname*{Sp}(n)$ has covering groups
$\operatorname*{Sp}_{q}(n)$ of all orders $q=2,3,...,+\infty$. It turns out
that the double cover $\operatorname*{Sp}_{2}(n)$ has a unitary representation
in $L^{2}(\mathbb{R}^{n})$ by the metaplectic group $\operatorname*{Mp}(n)$.
The covering mapping
\begin{equation}
\pi_{\operatorname*{Mp}}:\operatorname*{Mp}(n)\longrightarrow
\operatorname*{Sp}(n)\text{ \ },\text{ \ }\pi_{\operatorname*{Mp}}%
(\widehat{S})=S \label{pimp}%
\end{equation}
satisfies $\operatorname*{Ker}(\pi_{\operatorname*{Mp}})=\{-I,I\}$ and is
adjusted so that
\begin{equation}
\pi_{\operatorname*{Mp}}(\widehat{J})=J\text{ \ if \ }\widehat{J}%
\psi(x)=\left(  \tfrac{1}{2\pi i\hbar}\right)  ^{n/2}\int_{\mathbb{R}^{n}%
}e^{-\frac{i}{\hbar}x\cdot x^{\prime}}\psi(x^{\prime})dx^{\prime} \label{Jhat}%
\end{equation}
and one shows that $\operatorname*{Mp}(n)$ is generated by $\widehat{J}$
together with the unitary automorphisms
\begin{equation}
\widehat{V}_{P}\psi(x)=e^{-\frac{i}{2}Px\cdot x}\psi(x)\text{ \ ,
\ }\widehat{M}_{L,m}\psi(x)=i^{m}\sqrt{|\det L|}\psi(Lx) \label{vpmlhat}%
\end{equation}
where $P\in\operatorname*{Sym}(n,\mathbb{R})$ and $L\in GL(n,\mathbb{R})$; the
integer $m$ corresponds to a choice of $\arg\det L$. The operators
$\widehat{V}$ and $\widehat{M}$ cover the symplectic automorphisms $V_{P}$ and
$M_{L,m}$ defined above:
\[
\pi_{\operatorname*{Mp}}(\widehat{V}_{P})=V_{P}\text{ \ \textit{and} \ }%
\pi_{\operatorname*{Mp}}(\widehat{M}_{L,m})=M_{L}.
\]

\subsection{Quantum blobs and the strong uncertainty principle}

Consider a physical system (classical or quantum) with covariance matrix%
\begin{equation}
\Sigma=\int_{\mathbb{R}^{2n}}\underset{2n\times2n\text{ matrix}%
}{\underbrace{(z-\langle z\rangle)(z-\langle z\rangle)^{T}}}\rho(z)dz
\label{sigmaz1}%
\end{equation}
where $\langle z\rangle=\int_{\mathbb{R}^{2n}}z\rho(z)dz$ (it is assumed that
the second order moments $\int_{\mathbb{R}^{2n}}z_{\alpha}z_{\beta}\rho(z)dz$
exist for $1\leq\alpha,\beta\leq n$ ($\rho$ is assumed to be a
(quasi-)probability function. We will writing $\Sigma$ in block-matrix form
\begin{equation}
\Sigma=%
\begin{pmatrix}
\Delta(x,x) & \Delta(x,p)\\
\Delta(p,x) & \Delta(p,p)
\end{pmatrix}
\label{covell}%
\end{equation}
where $\Delta(x,x)=(\Delta(x_{j},x_{k}))_{1\leq j,k\leq n}$, \textit{etc}..
The well-known Robertson--Schr\"{o}dinger uncertainty principle%
\begin{equation}
(\Delta(x_{j},x_{j}))^{2}(\Delta(p_{j},p_{j}))^{2}\geq\Delta(x_{j},p_{j}%
)^{1}+\frac{1}{4}\hbar^{2} \label{RS1}%
\end{equation}
for $1\leq j\leq n$ is a consequence of the strong uncertainty principle
\cite{dutta,Birkbis,go09,golu09}.%
\begin{equation}
\Sigma+\frac{i\hbar}{2}J\text{ is positive semidefinite} \label{RS2}%
\end{equation}
which we abbreviate as $\Sigma+\frac{i\hbar}{2}J\geq0$. Both condition
(\ref{RS1}) and (\ref{RS2}) are trivially equivalent for $n=1$; when $n>1$
(\ref{RS2}) implies (\ref{RS1}); notice that (\ref{RS2}) is satisfied does not
imply that the system is "quantum": see our discussion in \cite{goluphysa} and
the references therein for counterexamples.

\begin{theorem}
\label{ThmBlob}The uncertainty principle (\ref{RS2}) is equivalent to the
following statement: The covariance ellipsoid%
\[
\Omega_{\Sigma}=z:\left\{  z\_\frac{1}{2}\Sigma^{-1}z\cdot z\leq1\right\}
\]
contains a quantum blob. When $\Omega_{\Sigma}$ is a quantum blob, then the
Robertson--Schr\"{o}dinger inequalities (\ref{RS1}) are saturated (i.e.
reduced to equalities).
\end{theorem}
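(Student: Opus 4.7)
The plan is to reduce to a symplectic normal form via Williamson's diagonalization theorem: for any symmetric positive-definite $\Sigma$ there exist $S\in\operatorname{Sp}(n)$ and symplectic eigenvalues $\lambda_{1},\ldots,\lambda_{n}>0$ such that $S^{T}\Sigma S=\Lambda$, where $\Lambda$ is the diagonal matrix carrying $\lambda_{j}$ in both diagonal positions associated to $x_{j}$ and $p_{j}$. This reduction is tailor-made for the theorem because conjugation by an element of $\operatorname{Sp}(n)$ preserves $J$, and because quantum blobs are by construction $\operatorname{Sp}(n)$-invariant.

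On the analytic side, from $S^{T}JS=J$ one obtains
\[
\Sigma+\tfrac{i\hbar}{2}J=S^{-T}\bigl(\Lambda+\tfrac{i\hbar}{2}J\bigr)S^{-1},
\]
so (\ref{RS2}) is equivalent to $\Lambda+\tfrac{i\hbar}{2}J\geq 0$. The right-hand side decomposes into $n$ commuting $2\times 2$ blocks of the form $\bigl(\begin{smallmatrix}\lambda_{j}&i\hbar/2\\-i\hbar/2&\lambda_{j}\end{smallmatrix}\bigr)$, each positive semidefinite iff $\lambda_{j}\geq\hbar/2$. Hence (\ref{RS2}) is equivalent to $\min_{j}\lambda_{j}\geq\hbar/2$.

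On the geometric side, the same change of variables yields $\Omega_{\Sigma}=S^{T}(\Omega_{\Lambda})$, where $\Omega_{\Lambda}$ is the axis-aligned ellipsoid with semi-axes $\sqrt{2\lambda_{j}}$ in each of the $x_{j}$- and $p_{j}$-directions. Since $\operatorname{Sp}(n)$ is closed under transpose and acts bijectively on the set $\operatorname{Blob}(n)$, the containment question is symplectically invariant: $\Omega_{\Sigma}$ contains a quantum blob iff $\Omega_{\Lambda}$ does. One direction is immediate -- if $\min_{j}\lambda_{j}\geq\hbar/2$ then $B^{2n}(\sqrt{\hbar})\subset\Omega_{\Lambda}$, and the ball is itself a quantum blob. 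The converse is the main obstacle, because a quantum blob inside $\Omega_{\Lambda}$ need not be round; I would handle it by invoking the monotonicity of a linear symplectic capacity (the Gromov width, or equivalently Gromov's non-squeezing theorem applied to the $(x_{j},p_{j})$-plane), which pins the capacity of $\Omega_{\Lambda}$ to $\pi(2\min_{j}\lambda_{j})$ and that of any quantum blob to $\pi\hbar$, forcing $\min_{j}\lambda_{j}\geq\hbar/2$.

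For the saturation claim, if $\Omega_{\Sigma}$ is itself a quantum blob then the same symplectic reduction forces $\Omega_{\Lambda}=B^{2n}(\sqrt{\hbar})$, i.e.\ $\lambda_{j}=\hbar/2$ for all $j$. In the diagonal frame $\Sigma$ becomes $(\hbar/2)I$, so $\Delta(x_{j},x_{j})\Delta(p_{j},p_{j})=(\hbar/2)^{2}$ and $\Delta(x_{j},p_{j})=0$, saturating each inequality in (\ref{RS1}); equivalently $\det(\Lambda+\tfrac{i\hbar}{2}J)=0$, which transports back under $S$ to the matrix-level saturation $\det(\Sigma+\tfrac{i\hbar}{2}J)=0$ of the strong uncertainty principle.
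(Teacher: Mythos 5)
The paper does not actually prove this theorem --- its ``proof'' consists of the citation ``See \cite{blobs,golu09}'' --- so your argument can only be measured against the standard one in those references, which it essentially reproduces. The main equivalence is handled correctly: Williamson diagonalization $S^{T}\Sigma S=\Lambda$ reduces (\ref{RS2}) to the $2\times2$ block computation giving $\min_{j}\lambda_{j}\geq\hbar/2$, and reduces the geometric condition to the normal-form ellipsoid $\Omega_{\Lambda}$; the ``if'' direction is the inclusion of the round ball $B^{2n}(\sqrt{\hbar})$, and the ``only if'' direction is the capacity/non-squeezing argument pinning $c(\Omega_{\Lambda})=2\pi\min_{j}\lambda_{j}\geq c(Q)=\pi\hbar$. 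One immaterial slip: from $\Sigma^{-1}=S\Lambda^{-1}S^{T}$ one gets $\Omega_{\Sigma}=S^{-T}(\Omega_{\Lambda})$ rather than $S^{T}(\Omega_{\Lambda})$, but $S^{-T}$ is again symplectic, so nothing in the argument changes.

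The genuine gap is in the saturation claim. The Robertson--Schr\"odinger inequalities (\ref{RS1}) are statements about the entries of $\Sigma$ in the \emph{original} coordinates, and they are not invariant under the symplectic change of frame you use: showing that they are saturated ``in the diagonal frame'', where $\Sigma$ becomes $(\hbar/2)I$, says nothing about the original $\Sigma$. Indeed, for $n\geq2$ the coordinate-wise claim fails as you have argued it: take $\Sigma=\frac{\hbar}{2}M_{L}M_{L}^{T}$ with $L=\begin{pmatrix}1&1\\0&1\end{pmatrix}$, so that $\Omega_{\Sigma}=M_{L}(B^{4}(\sqrt{\hbar}))$ is a quantum blob; then $\Delta(x,p)=0$ while $\Delta(x_{1},x_{1})\Delta(p_{1},p_{1})=\hbar\cdot\tfrac{\hbar}{2}=\tfrac{\hbar^{2}}{2}>\tfrac{\hbar^{2}}{4}$, so the first inequality in (\ref{RS1}) is strict. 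What your reduction does prove, and what does survive transport back by $S$, is the invariant statement you append at the end: all symplectic eigenvalues of $\Sigma$ equal $\hbar/2$, equivalently $\det(\Sigma+\tfrac{i\hbar}{2}J)=0$. Either ``saturated'' in the theorem must be read in that symplectically invariant sense (or the statement restricted to $n=1$, where $\det\Sigma=\hbar^{2}/4$ does give equality in (\ref{RS1})), or the coordinate-wise claim requires a separate argument that the frame change cannot supply.
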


\begin{proof}
See \cite{blobs,golu09}.
\end{proof}

This result has a deep topological meaning, using the notion of symplectic
capacity (whose definition is justified by Gromov's celebrated symplectic
non-squeezing theorem \cite{Gromov}; see our review in \cite{golu09}). An
\textit{intrinsic} symplectic capacity on $\mathbb{R}^{2n}$ assigns a
non-negative number (or $+\infty$) $c(\Omega)$ to every subset $\Omega
\subset\mathbb{R}^{2n}$; this assignment is subjected to the following properties:

\begin{itemize}
\item \textbf{Monotonicity:} If $\Omega\subset\Omega^{\prime}$ then
$c(\Omega)\leq c(\Omega^{\prime})$;

\item \textbf{Symplectic invariance:} If $f\in\operatorname*{Symp}%
(2n,\mathbb{R})$ then $c(f(\Omega))=c(\Omega)$;

\item \textbf{Conformality:} If $\lambda$ is a real number then $c(\lambda
\Omega)=\lambda^{2}c(\Omega)$;

\item \textbf{Normalization:} We have
\begin{equation}
c(B^{2n}(R))=\pi R^{2}=c(Z_{j}^{2n}(R)); \label{norm1}%
\end{equation}

\end{itemize}

Let $c$ be a symplectic capacity on the phase plane $\mathbb{R}^{2}$. We have
$c(\Omega)=\operatorname*{Area}(\Omega)$ when $\Omega$ is a connected and
simply connected surface. In the general case there exist infinitely many
intrinsic symplectic capacities, but they all agree on phase space ellipsoids
as we will see below.

\begin{corollary}
The uncertainty principle (\ref{RS2}) is satisfied if and only $c(\Omega
_{\Sigma})\geq\pi\hbar$ for 3every intrinsic symplectic capacity $c$ on
$\mathbb{R}^{2n}$.
\end{corollary}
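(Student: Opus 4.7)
The plan is to derive the corollary directly from Theorem \ref{ThmBlob} combined with the four defining axioms of an intrinsic symplectic capacity, handling the two implications separately.

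For the forward direction ($\Rightarrow$), I would assume the strong uncertainty principle $\Sigma+\frac{i\hbar}{2}J\geq 0$. By Theorem \ref{ThmBlob} the covariance ellipsoid $\Omega_{\Sigma}$ then contains some quantum blob $Q(S,z_{0})=T(z_{0})S(B^{2n}(\sqrt{\hbar}))$. For any intrinsic symplectic capacity $c$, the symplectic invariance axiom (translations and linear symplectic maps preserve $c$) together with the normalization axiom give
\[
c(Q(S,z_{0}))=c(B^{2n}(\sqrt{\hbar}))=\pi(\sqrt{\hbar})^{2}=\pi\hbar.
\]
Monotonicity then yields $c(\Omega_{\Sigma})\geq c(Q(S,z_{0}))=\pi\hbar$, and this holds for every such $c$.

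For the converse ($\Leftarrow$), I would use the fact, stated just before the corollary, that all intrinsic symplectic capacities coincide on phase-space ellipsoids. It therefore suffices to evaluate $c(\Omega_{\Sigma})$ on any particular capacity, e.g.\ the Gromov width. Invoking Williamson's symplectic diagonalization, there is $S\in\operatorname*{Sp}(n)$ with $S^{T}\Sigma S=\operatorname*{diag}(\Lambda,\Lambda)$, where $\Lambda=\operatorname*{diag}(\lambda_{1},\ldots,\lambda_{n})$ is formed from the symplectic eigenvalues of $\Sigma$. Symplectic invariance of $c$ reduces the computation to the standard symplectic ellipsoid with semi-axes $\sqrt{2\lambda_{j}}$ in each conjugate plane $(x_{j},p_{j})$, whose common capacity value is $2\pi\lambda_{\min}$. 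Thus the hypothesis $c(\Omega_{\Sigma})\geq\pi\hbar$ is equivalent to $\lambda_{\min}\geq\hbar/2$, which is precisely the Williamson criterion for $\Sigma+\frac{i\hbar}{2}J\geq 0$.

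The main obstacle is the converse direction, and in particular the assertion that all intrinsic symplectic capacities take the common value $2\pi\lambda_{\min}$ on $\Omega_{\Sigma}$. However, this is exactly the content of the fact cited by the author immediately before the corollary (``all intrinsic symplectic capacities agree on phase space ellipsoids''), so in the present context the argument reduces to an application of Williamson's theorem plus the four capacity axioms. The forward direction is essentially a direct verification from monotonicity, symplectic invariance, and normalization. No further analytic machinery beyond Theorem \ref{ThmBlob} is needed.
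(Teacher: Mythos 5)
Your proof is correct. The forward direction is exactly what the paper's one-line proof intends: Theorem \ref{ThmBlob} puts a quantum blob $Q(S,z_{0})$ inside $\Omega_{\Sigma}$, and symplectic invariance plus the normalization (\ref{norm1}) give $c(Q(S,z_{0}))=c(B^{2n}(\sqrt{\hbar}))=\pi\hbar$, whence monotonicity finishes. Where you genuinely diverge is the converse. The paper simply says the corollary ``immediately follows from Theorem \ref{ThmBlob}''; the implicit route there is to use the agreement of all capacities on ellipsoids to identify $c(\Omega_{\Sigma})$ with the \emph{linear} symplectic width, so that $c(\Omega_{\Sigma})\geq\pi\hbar$ forces $\Omega_{\Sigma}$ to contain an affine symplectic image of $B^{2n}(\sqrt{\hbar})$ --- a quantum blob --- and Theorem \ref{ThmBlob} then returns (\ref{RS2}). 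You instead bypass Theorem \ref{ThmBlob} entirely in this direction: Williamson diagonalization gives $c(\Omega_{\Sigma})=2\pi\lambda_{\min}$ for every capacity, and you invoke the equivalence of (\ref{RS2}) with $\lambda_{\min}\geq\hbar/2$. Both arguments rest on the same two external facts (agreement of capacities on ellipsoids, and the Williamson characterization of the strong uncertainty principle), but yours makes the converse self-contained and explicit, at the cost of importing the criterion $\lambda_{\min}\geq\hbar/2$, which the paper never states but which is standard and is essentially the content of Theorem \ref{ThmBlob} in the cited references. No gap; your version is, if anything, more complete than the paper's.
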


\begin{proof}
It immediately follow3s from Theorem \ref{ThmBlob} using the normalization
priority (\ref{norm1}) of symplectic capacities.
\end{proof}

\subsection{Generalized Gaussians and quantum blobs}

In what follows denote $\phi_{0}^{\hbar}$ the standard coherent state
\begin{equation}
\phi_{0}^{\hbar}(x)=\psi_{I,0}(x)=(\pi\hbar)^{-n/4}e^{-|x|^{2}/2\hbar}.
\label{standard}%
\end{equation}
Let $X,Y\in\operatorname*{Sym}(n,\mathbb{R})$, $X$ positive definite: $X>0$.
We define the function
\begin{equation}
\psi_{XY}(x)=\left(  \tfrac{\det X}{(\pi\hbar)^{n}}\right)  ^{1/4}e^{-\frac
{1}{2\hbar}(X+iY)x\cdot x}. \label{squeezed}%
\end{equation}
These Gaussians can be obtained from using elementary metaplectic operators,
as follows from the obvious formula
\begin{equation}
\psi_{XY}=\widehat{S}_{XY}\phi_{0}^{\hbar}=\widehat{V}_{Y}\widehat{M}%
_{X^{1/2}}\phi_{0}^{\hbar} \label{fixy}%
\end{equation}
where $\widehat{V}_{Y}$ and $\widehat{M}_{X^{1/2}}=\widehat{M}_{X^{1/2},0}$
are defined by (\ref{vpmlhat}).Note that neither the operators $\widehat{S}%
_{XY}n$ nor their projections $S_{XY}$ \ form a group if $X$ and $X^{\prime}$
are symmetric.

More generally, we define the displaced Gaussians
\begin{equation}
\psi_{XY}=\widehat{T}(z_{0})\psi_{XY}\text{ \ \ },\text{ \ \ }z_{0}%
=(x_{0},p_{0}) \label{txy}%
\end{equation}
to which the transformations above are easily extended.

\begin{definition}
We denote by $\operatorname*{Gauss}(n)$ the set of all functions\ $\psi
_{XY}^{z_{0}}$ (\textit{i.e.} the collection of all functions $c\psi
_{XY}^{z_{0}}$where $c\in\mathbb{C}$ with \ $|c|=1$). The subset of
$\operatorname*{Gauss}(n)$ consisting of all $|\psi_{XY}\rangle$ is denoted by
$\operatorname*{Gauss}_{0}(n)$.
\end{definition}

The following result identifies the sets $\operatorname*{Blob}(n$ and
$\operatorname*{Gauss}(n)$:

\begin{theorem}
\label{Prop2}The mapping%
\begin{equation}
\Gamma:\operatorname*{Blob}(n)\longrightarrow\operatorname*{Gauss}(n)
\label{gamma}%
\end{equation}
defined by
\[
\Gamma:Q(z_{0},S_{XY}))\longmapsto\psi_{XY}^{z_{0}}=\widehat{T}(z_{0}%
)\widehat{S}_{XY}\phi_{0}^{\hbar}%
\]
where $S_{XY}=V_{-Y}M_{X^{-1/2}}$ and $\widehat{S}_{XY}=\widehat{V}%
_{Y}\widehat{M}_{X^{-1/2}}$ is a bijection.
\end{theorem}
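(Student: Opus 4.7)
The plan is to prove bijectivity by showing that both $\operatorname{Blob}(n)$ and $\operatorname{Gauss}(n)$ admit a canonical parametrization by the same set of triples $(z_0, X, Y) \in \mathbb{R}^{2n} \times \operatorname{Sym}^+(n,\mathbb{R}) \times \operatorname{Sym}(n,\mathbb{R})$, and that $\Gamma$ is essentially the identity map between these parametrizations.

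For the blob side, I would apply the Pre-Iwasawa lemma to any $S \in \operatorname{Sp}(n)$ to write uniquely $S = V_P M_L R$ with $P$ symmetric, $L>0$ symmetric, and $R$ a symplectic rotation. Because $R \in U(n) \subset O(2n)$ preserves the Euclidean ball $B^{2n}(\sqrt{\hbar})$, we get $S(B^{2n}(\sqrt{\hbar})) = V_P M_L(B^{2n}(\sqrt{\hbar}))$, so the blob depends only on the coset $S \cdot U(n)$; translating by $z_0$ gives the general case. Uniqueness of the Pre-Iwasawa factorization then yields a bijection between $\operatorname{Blob}(n)$ and triples $(z_0, X, Y)$ through the change of variables $X = L^{-2}$ and $Y = -P$ (up to the sign convention, see below) matching the statement of the theorem.

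For the Gaussian side, the explicit formula for $\psi_{XY}(x)$ makes it straightforward to recover the data from the wavefunction up to a global phase: $X$ is, up to a constant, the Hessian of $-\log|\psi_{XY}^{z_0}|$ and $Y$ is obtained from the Hessian of the phase $\arg \psi_{XY}^{z_0}$. The center $x_0$ is the unique maximum of $|\psi_{XY}^{z_0}|$, and once $x_0$ and $X$ are known, $p_0$ is read off from the linear part of the remaining phase after recentering. Conversely, each triple $(z_0, X, Y)$ defines a unique $\psi_{XY}^{z_0}$, so the Gaussian side is also in canonical bijection with the same parameter set.

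It then remains to verify that $\Gamma$ intertwines these two parametrizations. This follows from the identity $\widehat{S}_{XY}\phi_0^\hbar = \widehat{V}_Y \widehat{M}_{X^{1/2}}\phi_0^\hbar = \psi_{XY}$ of (\ref{fixy}), combined with the covering relations $\pi_{\operatorname{Mp}}(\widehat{V}_Y) = V_{-Y}$ and $\pi_{\operatorname{Mp}}(\widehat{M}_{X^{1/2}}) = M_{X^{1/2}}$, which show that the symplectic image of the metaplectic operator used to build $\psi_{XY}^{z_0}$ is exactly the $V_{-Y} M_{X^{-1/2}}$ appearing in the definition of $Q(z_0, S_{XY})$. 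The main obstacle is really just bookkeeping: the sign convention for $V_P$ shifts between the Pre-Iwasawa lemma and the generators listed in Section \ref{secsymp}, and the branch $m = 0$ of the metaplectic lift of $M_{X^{1/2}}$ must be chosen so that $\widehat{M}_{X^{1/2}}\phi_0^\hbar$ remains a positive real Gaussian with no spurious factor of $i^m$. Once these conventions are pinned down consistently, the whole theorem reduces to the direct identification of the two canonical three-parameter families.
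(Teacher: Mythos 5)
Your proposal is correct and reaches the same conclusion, but it reorganizes the argument in a way that differs from the paper on one essential point. Both you and the paper rely on the pre-Iwasawa factorization together with the rotational invariance of $B^{2n}(\sqrt{\hbar})$ to normalize the blob side, so that half is essentially identical. Where you diverge is in establishing injectivity: the paper first reduces to $z_{0}=0$ via the intertwining relation $\widehat{S}\widehat{T}(z)=\widehat{T}(Sz)\widehat{S}$ and then argues group-theoretically, using the fact that the projective stabilizer of $\phi_{0}^{\hbar}$ in $\operatorname{Mp}(n)$ consists exactly of the metaplectic operators covering the symplectic rotations $U(n)=\operatorname{Sp}(n)\cap O(2n,\mathbb{R})$, so that $\widehat{S}_{XY}\phi_{0}^{\hbar}=\widehat{S}_{X'Y'}\phi_{0}^{\hbar}$ forces $S_{X'Y'}=S_{XY}R$ with $R$ preserving the ball. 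You instead recover $(z_{0},X,Y)$ directly from the explicit Gaussian (Hessians of $-\log|\psi|$ and of the phase, location of the maximum of $|\psi|$), which is more elementary and self-contained --- it avoids having to identify the stabilizer of the coherent state in $\operatorname{Mp}(n)$ --- at the cost of being tied to the explicit formula rather than to the covariance structure. One point you should make explicit on the blob side: uniqueness of the pre-Iwasawa factorization parametrizes the cosets $S\cdot U(n)$, but to conclude that distinct triples give distinct \emph{blobs} you also need that the stabilizer of the centered ball inside $\operatorname{Sp}(n)$ is exactly $U(n)$ (equivalently, that the centered ellipsoid $V_{P}M_{L}(B^{2n}(\sqrt{\hbar}))$ determines $(P,L)$); this holds because a linear map sending the ball onto itself must be orthogonal, but it is used silently in your write-up (and, to be fair, in the paper's as well). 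Your flagging of the sign convention for $V_{P}$ and of the branch $m=0$ in $\widehat{M}_{X^{1/2},m}$ is apt, since the paper is internally inconsistent on exactly these points.
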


\begin{proof}
Using the intertwining formulas
\[
\widehat{S}\widehat{T}(z)=\widehat{T}(Sz)\widehat{S}\text{ \ },\text{
\ }ST(z)=T(Sz)S
\]
it suffices to consider the case $z=0$. Let us to show that the restriction
\begin{gather*}
\Gamma_{0}:\operatorname*{Blob}\nolimits_{0}(n)\longrightarrow
\operatorname*{Gauss}\nolimits_{0}(n)\\
S_{XY}B^{2n}(\sqrt{\hbar})\longmapsto\widehat{S}_{XY}\phi_{0}^{\hbar}\rangle
\end{gather*}
is a bijection. Firstly, $\Gamma_{0}$ is a well-defined mapping since every
quantum blob $Q_{S}$ can be written, using the pre-Iwasawa factorization as
(\ref{iwa1}), as%
\[
Q(0,S_{XY})=S_{XY}(B^{2n}(\sqrt{\hbar})=V_{Y}M_{X^{-1/2}}(B^{2n}(\sqrt{\hbar
}).
\]
Similarly every Gaussian function $\psi_{XY}$ can be written as $\psi
_{XY}=\widehat{S}_{XY}\phi_{0}^{\hbar}$, showing at the same time that
$\Gamma_{0}$ is surjective. To show that $\Gamma_{0}$ is \ bijection there
remains to \ prove injectivity, that is if $\widehat{S}_{XY}\phi_{0}^{\hbar
}=\widehat{S}_{X^{\prime},Y^{\prime}}^{\prime}\phi_{0}^{\hbar}$ then
$S_{XY}B^{2n}(\sqrt{\hbar}))=S_{X^{\prime},Y^{\prime}}B^{2n}(\sqrt{\hbar}))$.
In view of the rotational symmetry of the standard coherent state $\phi
_{0}^{\hbar}$ we must have $\widehat{S}_{XY}=\widehat{S}_{X^{\prime}%
,Y^{\prime}}\widehat{R}$ where $\widehat{R}\in\operatorname*{Mp}(n)$ \ covers
a symplectic rotation $R\in\operatorname*{Sp}(n)\cap O(2n,\mathbb{R})$, hence
$S_{X^{\prime},Y^{\prime}}=S_{XY}R$ and the injectivity follows since
$R(B^{2n}(\sqrt{\hbar})))=B^{2n}(\sqrt{\hbar}))$.
\end{proof}

Notice that the bijection $\Gamma$ satisfies
\[
\Gamma(S^{\prime}Q(S,z_{0})=\Gamma(|\widehat{S^{\prime}}\psi_{XY}^{z_{0}%
}\rangle
\]
for all $S,S^{\prime}\in\operatorname*{Sp}(n)$ if $\widehat{S^{\prime}}$
covers $S^{\prime}$. Similarly every Gaussian function $\psi_{XY}$ can be
written as $\psi_{XY}=\widehat{S}_{XY}\phi_{0}^{\hbar}$, showing at the same
time that $\Gamma_{0}$ is surjective. To show that $\Gamma_{0}$ is \ bijection
there remains to \ prove injectivity, that is if $\widehat{S}_{XY}\phi
_{0}^{\hbar}=\widehat{S}_{X^{\prime},Y^{\prime}}^{\prime}\phi_{0}^{\hbar}$
then $S_{XY}B^{2n}(\sqrt{\hbar}))=S_{X^{\prime},Y^{\prime}}B^{2n}(\sqrt{\hbar
}))$. In view of the rotational symmetry of the standard coherent state
$\phi_{0}^{\hbar}$ we must have $\widehat{S}_{XY}=\widehat{S}_{X^{\prime
},Y^{\prime}}\widehat{R}$ where $\widehat{R}\in\operatorname*{Mp}(n)$ \ covers
a symplectic rotation $R\in\operatorname*{Sp}(n)\cap O(2n,\mathbb{R})$, hence
$S_{X^{\prime},Y^{\prime}}=S_{XY}R$ and the injectivity follows since
$R(B^{2n}(\sqrt{\hbar})))=B^{2n}(\sqrt{\hbar}))$.

\subsection{The canonical group of a quantum blob}

The standard Gaussian $\phi_{0}^{\hbar}$ satisfies the partial differential
equation%
\begin{equation}
\widehat{H}_{0}\phi_{0}^{\hbar}=\frac{1}{2}(-\hbar^{2}\nabla_{x}^{2}%
+|x|^{2})\phi_{0}^{\hbar}=\frac{1}{2}n\hbar\phi_{0}^{\hbar} \label{Ho}%
\end{equation}
(it is the stationary Schr\"{o}dinger equation for the ground state of the
isotropic $n$-dimensional harmonic oscillator with mass one). The solutions of
the corresponding Schr\"{o}dinger equation%
\[
i\hbar\frac{\partial\psi}{\partial t}(x,t)=\widehat{H}_{0}\psi(x,t)
\]
are given by $\psi(x,t)=\widehat{S}_{t}\psi(x,0)$ where $(\widehat{S})$ the
\ one-parameter group of metaplectic operators covering $(S)$.

We now extend the previous result to the case of general Gaussians (we are
following here our constructions in \cite{JGP}). Recall that we have defined
the Gaussian function
\begin{equation}
\psi_{XY}(x)=\left(  \tfrac{1}{\pi\hbar}\right)  ^{n/4}(\det X)^{1/4}%
e^{-\tfrac{1}{2\hbar}(X+iY)x\cdot x} \label{psixy}%
\end{equation}
where $X,Y$ are symmetric and $X>0.$ It is the most general (up to
translations) function whose Wigner transforms are positive. This function is
normalized to unity: $||\psi_{XY}^{\gamma}||_{L^{2}}=1$. We remark that
$\psi_{XY}$ is a solution of the eigenvalue liquation
\begin{equation}
\widehat{H}_{XY}\psi_{XY}=\frac{1}{2}\hbar\operatorname*{Tr}(X)\psi_{XY}
\label{H2}%
\end{equation}
where $\widehat{H}_{XY}$ \ is the second order partial differential operator%
\begin{equation}
\widehat{H}_{X%
\acute{}%
Y}=\frac{1}{2}(-i\hbar\nabla_{x}+Yx)^{2}+X^{2}x\cdot x \label{hhat}%
\end{equation}
which is the Weyl quantization of the quadratic polynomial
\begin{equation}
H_{XY}(x,p)=\frac{1}{2}\left(  (p+Yx)^{2}+X^{2}x\cdot x\right)  . \label{xyh}%
\end{equation}
The latter can be rewritten in matrix form as
\begin{equation}
H_{XY}(z)=\frac{1}{2}M_{XY}z\cdot z \label{gf4}%
\end{equation}
where $M_{XY}$ is the symmetric positive-definite matrix%
\begin{equation}
M_{XY}=%
\begin{pmatrix}
X^{2}+Y^{2} & Y\\
Y & I
\end{pmatrix}
.=(S_{XY}^{-1})^{T}D_{X}S_{XY}^{-1} \label{mxy}%
\end{equation}
where
\begin{equation}
S_{XY}=V_{Y}M_{X^{1/2}}\text{ \ },,\text{\ \ }D_{X}=%
\begin{pmatrix}
X & 0\\
0 & X
\end{pmatrix}
. \label{mfs}%
\end{equation}
Note that
\[
H_{XY}\circ S_{XY}(z)=\frac{1}{2}D_{X}z\cdot z=\frac{1}{2}Xx\cdot x+\frac
{1}{2}Xp\cdot p.
\]
The flow generated by this the Hamiltonian function $H_{XY}$ is a
one-parameter subgroup $(S_{t}^{XY})$ of $\operatorname*{Sp}(n)$ which we call
the \emph{canonical group} of the quantum blob represented by $\psi_{XY}$,

\begin{theorem}
Let $(\widehat{S}_{t}^{XY}))$ be the one-parameter subgroup of
$\operatorname*{Mp}(n)$ covering the canonical group $(S_{t}^{XY})$. It has
the following properties. We have
\begin{equation}
\widehat{S}_{t}^{XY}\psi_{XY}=\exp\left[  \frac{it}{2\hbar}\hbar
\operatorname*{Tr}(X)\right]  \psi_{XY}. \label{stxy}%
\end{equation}
(ii) If
\end{theorem}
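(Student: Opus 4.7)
The plan is to combine two facts already established in the excerpt: that $\psi_{XY}$ is an eigenvector of the Weyl-quantized Hamiltonian $\widehat{H}_{XY}$, and that the metaplectic one-parameter group $(\widehat{S}_{t}^{XY})$ is the Schr\"{o}dinger propagator generated by $\widehat{H}_{XY}$. The identity (\ref{stxy}) then follows at once by applying the spectral/functional calculus to $\psi_{XY}$.

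In more detail, the first step is to recall the eigenvalue equation (\ref{H2}),
\[
\widehat{H}_{XY}\psi_{XY}=\tfrac{1}{2}\hbar\operatorname*{Tr}(X)\,\psi_{XY},
\]
so that $\psi_{XY}$ lies in a one-dimensional invariant subspace of $\widehat{H}_{XY}$. The second step is to identify $(\widehat{S}_{t}^{XY})$ with the Schr\"{o}dinger evolution generated by $\widehat{H}_{XY}$. This is precisely the mechanism described in the preamble: since $H_{XY}$ is the quadratic polynomial (\ref{gf4}), its Hamiltonian flow $(S_{t}^{XY})$ is a one-parameter subgroup of $\operatorname*{Sp}(n)$, and general covering-space theory lifts it to a unique one-parameter subgroup $(\widehat{S}_{t}^{XY})$ of $\operatorname*{Mp}(n)$; by the correspondence between the metaplectic representation and Weyl quantization of quadratic symbols, this lift satisfies $i\hbar\,\partial_{t}\widehat{S}_{t}^{XY}=\widehat{H}_{XY}\widehat{S}_{t}^{XY}$, so $\widehat{S}_{t}^{XY}=e^{-it\widehat{H}_{XY}/\hbar}$.

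The third step combines the two: on the invariant line spanned by $\psi_{XY}$, the operator $\widehat{H}_{XY}$ acts as the scalar $\tfrac{1}{2}\hbar\operatorname*{Tr}(X)$, so its exponential acts as the corresponding scalar exponential, yielding
\[
\widehat{S}_{t}^{XY}\psi_{XY}=\exp\!\left[-\tfrac{it}{\hbar}\cdot\tfrac{1}{2}\hbar\operatorname*{Tr}(X)\right]\psi_{XY},
\]
which is (\ref{stxy}) up to the sign convention chosen for the propagator. One just has to verify the sign against the convention adopted when defining $\widehat{J}$ in (\ref{Jhat}) and the metaplectic lift.

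The main (and essentially only) obstacle is justifying cleanly that the lift obtained from the covering-space argument really coincides with $e^{-it\widehat{H}_{XY}/\hbar}$, rather than with some sign- or phase-modified version allowed by the kernel $\{-I,I\}$ of $\pi_{\operatorname*{Mp}}$. This is settled by continuity at $t=0$ together with the fact that both sides are strongly continuous one-parameter subgroups of $\operatorname*{Mp}(n)$ projecting to $(S_{t}^{XY})$; uniqueness of such a lift then forces equality. Once this identification is in hand, the eigenvalue computation is immediate.
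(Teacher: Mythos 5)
Your proposal is correct and follows essentially the same route as the paper: both arguments rest on the eigenvalue equation (\ref{H2}) together with the identification of $(\widehat{S}_{t}^{XY})$ as the Schr\"{o}dinger propagator generated by $\widehat{H}_{XY}$ (the paper reaches the resulting scalar ODE by using symplectic covariance to commute $\widehat{H}_{XY}$ past $\widehat{S}_{t}^{XY}$, which is just your functional-calculus step written out). The sign discrepancy you flag is real but is already present in the paper's own proof, which solves $i\hbar\partial_{t}u=\tfrac{1}{2}\hbar\operatorname{Tr}(X)\,u$ and yet records the phase in (\ref{stxy}) with a $+$ sign.
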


\begin{proof}
The operator $\widehat{H}_{XY}$ is the Weyl quantization of $H_{XY}$; then
$\widehat{S}_{t}^{XY}\psi_{XY}$ is the solution of Schr\"{o}dinger's equation%
\begin{equation}
i\hbar\frac{\partial}{\partial t}\widehat{S}_{t}^{XY}\psi_{XY}=\widehat{H}%
_{XY}(\widehat{S}_{t}^{XY}\psi_{XY}) \label{erwin}%
\end{equation}
with initial condition $\widehat{S}_{0}^{XY}\psi_{XY}=\psi_{XY}$, By the
symplectic covariance of Weyl operators we have
\[
\widehat{S}_{-t}^{XY}\widehat{H}_{XY}\widehat{S}_{t}^{XY}=\widehat{H_{XY}\circ
S_{t}^{XY}}=\widehat{H}_{XY}%
\]
(the second equity because the Hamiltonian is constant along the
trajectories); it follows that Schr\"{o}dinger's equation (\ref{erwin})
becomes
\begin{equation}
i\hbar\frac{\partial}{\partial t}\widehat{S}_{t}^{XY}\psi_{XY}=\widehat{S}%
_{t}^{XY}\widehat{H}_{XY}(\psi_{XY}) \label{erwin2}%
\end{equation}
(ii)recalling (formula (\ref{H2})) that $\widehat{H}_{XY}\psi_{XY}=\frac{1}%
{2}\hbar\operatorname*{Tr}(X)\psi_{XY}$ this reduces to%
\begin{equation}
i\hbar\frac{\partial}{\partial t}\widehat{S}_{t}^{XY}\psi_{XY}=\frac{1}%
{2}\hbar\operatorname*{Tr}(X)\widehat{S}_{t}^{XY}\psi_{XY}%
\end{equation}
which has solution%
\[
\widehat{S}_{t}^{XY}\psi_{XY}=\exp\left[  \frac{it}{2\hbar}\hbar
\operatorname*{Tr}(X)\right]  .
\]
(ii)
\end{proof}

\section{\label{sec2}We--Wigner--Moyal Theory}

\subsection{Weyl quantization}

Let $\widehat{A}$ be a linear continuous operator $\mathcal{S}(\mathbb{R}%
^{n})\longrightarrow\mathcal{S}^{\prime}(\mathbb{R}^{n})$. It follows from
Schwartz's kernel theorem that there exists a distribution $K\in
\mathcal{S}^{\prime}(\mathbb{R}^{n})\times\mathcal{S}^{\prime}(\mathbb{R}%
^{n})$ \ such that $\langle\widehat{A}\psi,\phi\rangle=\langle K,\psi
\otimes,\phi\rangle$.

\begin{definition}
The Weyl symbol of the operator $\widehat{A}$ is he distribution
$a\in\mathcal{S}^{\prime}(\mathbb{R}^{2n})$ defined the integral
\begin{equation}
a(x,p)=\int_{\mathbb{R}^{n}}e^{-\frac{i}{\hbar}p\cdot y}K(x+\tfrac{1}%
{2}y,x-\tfrac{1}{2}y)dy \label{AK9}%
\end{equation}
view as $/2\pi\hbar)^{n}F_{y\longrightarrow p}K(x+\tfrac{1}{2}\cdot
,x-\tfrac{1}{2}\cdot)$. We will write $\widehat{A}=\operatorname*{Op}%
_{\mathrm{Weyl}}(a)$ and call the "Weyl operator with symbol $a$".
\end{definition}

Using the Fourier inversion formula we have, conversely,
\begin{equation}
K(x,y)=\left(  \tfrac{1}{2\pi\hbar}\right)  ^{n}\int_{\mathbb{R}^{n}}%
e^{\frac{i}{\hbar}p\cdot(x-y)}a(\tfrac{1}{2}(x+y),p)dp \label{KA9}%
\end{equation}
from which one gets the formal textbook integral .representation of
$\operatorname*{Op}_{\mathrm{Weyl}}(a)$:
\begin{equation}
\widehat{A}\psi(x)=\left(  \tfrac{1}{2\pi\hbar}\right)  ^{n}\iint%
\nolimits_{\mathbb{R}^{n}\times\mathbb{R}^{n}}e^{\frac{i}{\hbar}p\cdot
(x-y)}a(\tfrac{1}{2}(x+y),p)\psi(y)dydp\text{.} \label{weylahat}%
\end{equation}

If $a=1$ then $\widehat{A}=\operatorname*{Op}_{\mathrm{Weyl}}(1)$ is the
identity operator: this follows that the fact that the kernel is in this case
$K(x,y)=\delta(x-y)$.

Weyl operators have two useful harmonic representations using the Heisenberg
displacement and Grossmann-Royer reflection operators $\widehat{T}(z_{0}%
)\psi=e^{-\frac{i}{\hslash}\sigma(\hat{z},z_{0})}$ and $\widehat{T}%
_{\text{GR}}(z_{0})$ which are explicitly given by
\begin{gather}
\widehat{T}(z_{0})\psi(x)=e^{\tfrac{i}{\hslash}(p_{0}\cdot x-\tfrac{1}{2}%
p_{0}\cdot x_{0})}\psi(x-x_{0})\label{HW}\\
\widehat{T}_{\text{GR}}(z_{0})\psi(x)=e^{\frac{2i}{\hbar}p_{0}\cdot(x-x_{0}%
)}\psi(2x_{0}-x). \label{GR}%
\end{gather}
Observe that the Heisenberg operator satisfies the well-known relations%
\begin{align}
\widehat{T}(z_{0})\widehat{T}(z_{1})  &  =e^{\tfrac{i}{\hslash}\sigma
(z_{0},z_{1})}\widehat{T}(z_{1})\widehat{T}(z_{0})\label{HW1}\\
\widehat{T}(z_{0}+z_{1})  &  =e^{-\tfrac{i}{2\hslash}\sigma(z_{0},z_{1}%
)}\widehat{T}(z_{0})\widehat{T}(z_{1}). \label{HW2}%
\end{align}

\begin{theorem}
Let $a\in\mathcal{S}^{\prime}(\mathbb{R}^{2n})$. We have
\begin{align}
\operatorname*{Op}\nolimits_{\mathrm{Weyl}}(a)  &  =\left(  \tfrac{1}%
{2\pi\hbar}\right)  ^{n}\int_{\mathbb{R}^{2n}}a_{\sigma}(z_{0})\widehat{T}%
(z_{0})dz_{0}\label{opHW}\\
\operatorname*{Op}\nolimits_{\mathrm{Weyl}}(a)  &  =\left(  \tfrac{1}{\pi
\hbar}\right)  ^{n}\int_{\mathbb{R}^{2n}}a(z_{0})\widehat{T}_{\text{GR}}%
(z_{0})dz_{0} \label{opGR}%
\end{align}
where the integrals are interpreted in the sense of Bochner (that is, operator
valued integrals). The distribution $a_{\sigma}$ is the symplectic Fourier
transform of the Weyl symbol $a$:
\begin{equation}
a_{\sigma}(z_{0})=F_{\sigma}a(z)=\left(  \tfrac{1}{2\pi\hbar}\right)
\int_{\mathbb{R}^{2n}}e^{-\frac{i}{\hbar}\sigma(z,z^{\prime})}a(z^{\prime
})dz^{\prime}. \label{SFT}%
\end{equation}

\end{theorem}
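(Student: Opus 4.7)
The plan is to establish both harmonic representations by direct computation, treating the Grossmann--Royer formula (\ref{opGR}) first, since it follows from a one-step kernel calculation, and then verifying the Heisenberg--Weyl representation (\ref{opHW}) by a Fourier-inversion argument applied to the textbook formula (\ref{weylahat}).

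For (\ref{opGR}), I would read off the Schwartz kernel of $\widehat{T}_{\text{GR}}(z_0)$ directly from (\ref{GR}): it equals $e^{(2i/\hbar)p_0\cdot(x-x_0)}\,\delta(2x_0 - x - y)$. Multiplying by $(\pi\hbar)^{-n}a(x_0,p_0)$ and integrating over $dx_0$, the delta function contributes a Jacobian $2^{-n}$ and enforces $x_0 = (x+y)/2$. The surviving $p_0$-integral then produces the oscillatory factor $e^{(i/\hbar)p_0\cdot(x-y)}$, and the combined prefactor $(\pi\hbar)^{-n}\cdot 2^{-n}=(2\pi\hbar)^{-n}$ matches exactly the kernel formula (\ref{KA9}), identifying the operator with $\operatorname*{Op}\nolimits_{\mathrm{Weyl}}(a)$.

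For (\ref{opHW}), I would substitute the Fourier inversion $a(z) = (2\pi\hbar)^{-n}\int e^{-(i/\hbar)\sigma(z,z_0)} a_\sigma(z_0)\,dz_0$ into (\ref{weylahat}) and interchange orders of integration. Writing $z=(u,\xi)$ and $z_0=(x_0,p_0)$ and collecting the $p$-dependence, one gets a phase $e^{(i/\hbar)p\cdot(x-u-x_0/2)}$; the $p$-integral then yields $(2\pi\hbar)^n\delta(u-x+x_0/2)$, collapsing the $u$-integration to $u=x-x_0/2$. Making the change of variables $y=x-x_0$ (so that $x-x_0/2=(x+y)/2$) and regrouping the remaining phase as $e^{(i/\hbar)(p_0\cdot x-\tfrac{1}{2}p_0\cdot x_0)}$, what survives is precisely
$$(2\pi\hbar)^{-n}\int a_\sigma(z_0)\,e^{(i/\hbar)(p_0\cdot x-\tfrac{1}{2}p_0\cdot x_0)}\,\psi(x-x_0)\,dz_0,$$
which, by (\ref{HW}), equals $(2\pi\hbar)^{-n}\int a_\sigma(z_0)\widehat{T}(z_0)\psi(x)\,dz_0$.

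The main technical obstacle is that for $a\in\mathcal{S}'(\mathbb{R}^{2n})$ none of the above integrals is absolutely convergent: both the operator-valued Bochner integrals and the Fourier-inversion step must be read distributionally. The standard remedy is to verify both identities first on the dense subspace $\mathcal{S}(\mathbb{R}^{2n})$, where Fubini and all oscillatory manipulations apply rigorously, and then to extend by continuity using the fact that $\operatorname*{Op}\nolimits_{\mathrm{Weyl}}$, the symplectic Fourier transform $F_\sigma$, and the assignments $a_\sigma\mapsto \int a_\sigma\widehat{T}(z_0)\,dz_0$ and $a\mapsto\int a\,\widehat{T}_{\text{GR}}(z_0)\,dz_0$ are all continuous as maps $\mathcal{S}'(\mathbb{R}^{2n})\to\mathcal{L}(\mathcal{S}(\mathbb{R}^n),\mathcal{S}'(\mathbb{R}^n))$. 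A secondary pitfall is the bookkeeping of the prefactors $(2\pi\hbar)^{-n}$ versus $(\pi\hbar)^{-n}$, which can be traced back to the Jacobian $2^{-n}$ coming from the midpoint substitution in the Grossmann--Royer computation.
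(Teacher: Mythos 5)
Your proposal is correct. Note first that the paper itself gives no proof of this theorem: it is stated as standard background (the argument can be found in the cited monographs on symplectic/harmonic methods), so there is no in-paper argument to compare against. Your kernel-based verification is exactly the standard one and the bookkeeping checks out: for (\ref{opGR}) the kernel of $\widehat{T}_{\mathrm{GR}}(z_0)$ is $e^{(2i/\hbar)p_0\cdot(x-x_0)}\delta(2x_0-x-y)$, the substitution $x_0=\tfrac{1}{2}(x+y)$ contributes the Jacobian $2^{-n}$ that converts $(\pi\hbar)^{-n}$ into $(2\pi\hbar)^{-n}$, and the surviving phase $e^{(i/\hbar)p_0\cdot(x-y)}$ reproduces (\ref{KA9}); for (\ref{opHW}) the computation likewise collapses onto the kernel of $\widehat{T}(z_0)$ with the phase $e^{(i/\hbar)(p_0\cdot x-\frac{1}{2}p_0\cdot x_0)}$ of (\ref{HW}). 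Your closing remarks on the distributional reading (prove the identities for $a\in\mathcal{S}(\mathbb{R}^{2n})$, then extend by weak-$*$ density and continuity of $\operatorname*{Op}_{\mathrm{Weyl}}$ and $F_\sigma$ into $\mathcal{L}(\mathcal{S}(\mathbb{R}^{n}),\mathcal{S}^{\prime}(\mathbb{R}^{n}))$) are the right way to make the oscillatory manipulations rigorous. Two small points: the normalization in (\ref{SFT}) should read $\left(\tfrac{1}{2\pi\hbar}\right)^{n}$ (the exponent is missing in the paper, and your inversion step implicitly uses the corrected, involutive normalization); and the sign convention for $\sigma$ in the paper's preamble is stated inconsistently, so the phase $e^{-\frac{i}{\hbar}\sigma(z,z_0)}$ in your Fourier-inversion step should be checked against whichever convention makes $F_\sigma$ an involution --- with $\sigma(z,z')=p\cdot x'-p'\cdot x$ your computation lands exactly on (\ref{KA9}).
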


A characteristic property of Weyl operators is their symplectic covariance:

\begin{theorem}
Let $a\in\mathcal{S}^{\prime}(\mathbb{R}^{2n})$ and $S\in\operatorname*{Sp}%
(n)$. (i) We have
\[
\operatorname*{Op}\nolimits_{\mathrm{Weyl}}(a\circ S^{-1})=\widehat{S}%
\operatorname*{Op}\nolimits_{\mathrm{Weyl}}(a)\widehat{S}^{-1}%
\]
where $\widehat{S}\in Mp(n)$ is any of the two metaplectic operators covering
$S$. (ii) Weyl operators are the only pseudodifferential operators enjoying
this symplectic covariance property.
\end{theorem}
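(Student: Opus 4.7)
My plan for part (i) is to exploit the Grossmann--Royer integral representation (\ref{opGR}), which behaves transparently under metaplectic conjugation. The key preliminary step is the metaplectic intertwining identity
\[
\widehat{S}\,\widehat{T}_{\mathrm{GR}}(z_{0})\,\widehat{S}^{-1}=\widehat{T}_{\mathrm{GR}}(Sz_{0}),\qquad S\in\operatorname*{Sp}(n),\ \widehat{S}\in\pi_{\operatorname*{Mp}}^{-1}(S),
\]
which I would establish by verifying it on the three generators $\widehat{V}_{P}$, $\widehat{M}_{L,m}$, $\widehat{J}$ of $\operatorname*{Mp}(n)$. For $\widehat{V}_{P}$ and $\widehat{M}_{L,m}$ this is a direct computation starting from (\ref{vpmlhat}) and (\ref{GR}); for $\widehat{J}$ one uses (\ref{Jhat}) and the fact that Fourier conjugation exchanges the roles of position and momentum in (\ref{GR}). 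Because the two lifts $\pm\widehat{S}$ give the same conjugation, the identity descends to a well-defined statement on $\operatorname*{Sp}(n)$.

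Given this identity, part (i) is immediate. Since conjugation by the unitary $\widehat{S}$ commutes with the operator-valued Bochner integral, substituting (\ref{opGR}) yields
\[
\widehat{S}\,\operatorname*{Op}\nolimits_{\mathrm{Weyl}}(a)\,\widehat{S}^{-1}=\left(\tfrac{1}{\pi\hbar}\right)^{n}\int_{\mathbb{R}^{2n}}a(z_{0})\,\widehat{T}_{\mathrm{GR}}(Sz_{0})\,dz_{0}.
\]
The change of variable $z_{0}\mapsto S^{-1}z_{0}$, whose Jacobian is $|\det S|=1$ because $S$ is symplectic, rewrites the right-hand side as $\operatorname*{Op}\nolimits_{\mathrm{Weyl}}(a\circ S^{-1})$, concluding part (i). An entirely parallel argument using (\ref{opHW}) and the Heisenberg intertwiner $\widehat{S}\,\widehat{T}(z_{0})\,\widehat{S}^{-1}=\widehat{T}(Sz_{0})$ together with the $\operatorname*{Sp}(n)$-invariance of the symplectic Fourier transform (\ref{SFT}) would give the same result.

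For part (ii), the uniqueness claim is subtler and is the main obstacle. My plan is the following: by Schwartz's kernel theorem, any linear continuous map $\mathcal{S}(\mathbb{R}^{n})\to\mathcal{S}'(\mathbb{R}^{n})$ is a Weyl operator for a unique symbol, so a competing quantization rule $\operatorname{Op}$ can be written as $\operatorname{Op}(a)=\operatorname*{Op}\nolimits_{\mathrm{Weyl}}(\Phi(a))$ for a uniquely determined linear bijection $\Phi$ on symbols. Symplectic covariance of $\operatorname{Op}$ combined with part (i) forces $\Phi$ to commute with the action $a\mapsto a\circ S^{-1}$ for every $S\in\operatorname*{Sp}(n)$, and the analogous translation identity $\widehat{T}(z)\,\operatorname*{Op}\nolimits_{\mathrm{Weyl}}(a)\,\widehat{T}(z)^{-1}=\operatorname*{Op}\nolimits_{\mathrm{Weyl}}(a\circ T(-z))$ (derived from (\ref{HW1})--(\ref{HW2})) forces $\Phi$ to commute with phase-space translations. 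Translation invariance then implies that $\Phi$ is a Fourier multiplier, say $(\Phi a)^{\wedge}=m\cdot\widehat{a}$, and the $\operatorname*{Sp}(n)$-invariance of $\Phi$ becomes $m\circ S=m$ for all $S\in\operatorname*{Sp}(n)$. Since $\operatorname*{Sp}(n)$ acts transitively on $\mathbb{R}^{2n}\setminus\{0\}$, the distribution $m$ must be constant, and the normalization $\operatorname*{Op}\nolimits_{\mathrm{Weyl}}(1)=I$ forces $m=1$, hence $\operatorname{Op}=\operatorname*{Op}\nolimits_{\mathrm{Weyl}}$. The delicate point is to specify the admissible class of ``pseudodifferential operators'' in which the Schwartz-kernel reduction and the Fourier-multiplier conclusion are meaningful; I would invoke Shubin's classical framework of $\tau$-quantizations, where this reduces to checking that $\tau=\tfrac{1}{2}$ is the only symplectically covariant value.
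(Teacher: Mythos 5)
Your part (i) is correct and is, in substance, the paper's own argument: the paper proves (i) in one line by conjugating the Bochner integral (\ref{opHW}) with the Heisenberg intertwiner $\widehat{S}\widehat{T}(z_{0})\widehat{S}^{-1}=\widehat{T}(Sz_{0})$ and changing variables, which is exactly the ``entirely parallel argument'' you mention at the end. Your primary route through the Grossmann--Royer representation (\ref{opGR}) is a legitimate variant with a small advantage: since (\ref{opGR}) involves the symbol $a$ itself rather than $a_{\sigma}$, you never need to check that the symplectic Fourier transform commutes with composition by $S$ (which the (\ref{opHW}) route tacitly uses, via $\sigma(Sz,Sz')=\sigma(z,z')$). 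Your verification of the GR intertwiner on the generators $\widehat{V}_{P}$, $\widehat{M}_{L,m}$, $\widehat{J}$ is sound, and the remark that the two lifts $\pm\widehat{S}$ induce the same conjugation correctly disposes of the sign ambiguity. The real divergence is in part (ii), which the paper does not prove at all (it cites Stein and Wong), whereas you supply a genuine argument: reduce a competing quantization to $\operatorname*{Op}_{\mathrm{Weyl}}\circ\Phi$ via the kernel theorem, show $\Phi$ is a Fourier multiplier commuting with the $\operatorname*{Sp}(n)$-action, and use transitivity of $\operatorname*{Sp}(n)$ on $\mathbb{R}^{2n}\setminus\{0\}$ to force $m$ constant. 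This is the standard proof and it works, but be explicit about one point you only gesture at: translation covariance of the competing quantization is \emph{not} a consequence of covariance under the linear group $\operatorname*{Sp}(n)$, so it must be built into the admissible class (as it is for Shubin's $\tau$-quantizations, where the multiplier $e^{i\hbar(\tau-\frac{1}{2})\xi_{x}\cdot\xi_{p}}$ is visibly $\operatorname*{Sp}$-invariant only for $\tau=\frac{1}{2}$); without that restriction the uniqueness claim as literally stated is not well posed, which is presumably why the paper defers to the literature.
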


\begin{proof}
(i) It follows from (\ref{opHW}) using the intertwining relation%
\begin{equation}
\widehat{S}\widehat{T}(z_{0})\widehat{S}^{-1}=\widehat{T}(Sz_{0}) \label{4418}%
\end{equation}
valid for every $z_{0}\in\mathbb{R}^{2n}$. (ii) See \cite{Stein,Wong}.
\end{proof}

When defined, the compose of two Weyl operators is itself a Weyl operator:The
product $\widehat{A}=\operatorname*{Op}\nolimits_{\mathrm{Weyl}}(a$ and
$\widehat{B}=\operatorname*{Op}\nolimits_{\mathrm{Weyl}}(b)$, the
$\widehat{C}=\widehat{A}\widehat{B}$ has Weyl symbol%
\begin{equation}
c(z)=\left(  \tfrac{1}{4\pi\hbar}\right)  ^{2n}\iint\nolimits_{\mathbb{R}%
^{4n}}e^{\frac{i}{2\hbar}\sigma(z^{\prime},z^{\prime\prime})}a(z+\tfrac{1}%
{2}z^{\prime})b(z-\tfrac{1}{2}z^{\prime\prime})dz^{\prime}dz^{\prime\prime}
\label{compo8}%
\end{equation}
and its symplectic Fourier transform of $c$ is given by%
\begin{equation}
c_{\sigma}(z)=\left(  \tfrac{1}{2\pi\hbar}\right)  ^{n}\int_{\mathbb{R}^{2n}%
}e^{\frac{i}{2\hbar}\sigma(z,z^{\prime})}a_{\sigma}(z-z^{\prime})b_{\sigma
}(z^{\prime})dz^{\prime}. \label{cecomp}%
\end{equation}
One often writes $c=a\star_{\hbar}b$ and calls it the Moyal product; in the
form can be rewritten
\begin{equation}
a\star_{\hbar}b(z)=\left(  \tfrac{1}{\pi\hbar}\right)  ^{2n}\iint%
\nolimits_{\mathbb{R}^{4n}}e^{\frac{2i}{\hbar}\partial\sigma(u,z,v)}%
a(u)b(v)dudv \label{Moyalprod}%
\end{equation}
where $\partial\sigma=\sigma(u,z)-\sigma(u,v)+\sigma(z,v).is$ an antisymmetric cocycle.

\subsection{The Wigner and ambiguity transforms}

We begin by defining a polarized version of the Wigner transform:

\begin{definition}
Let $(\psi,\phi)\in L^{2}(\mathbb{R}^{n})\times L^{2}(\mathbb{R}^{n})$. The
cross-Wigner transform of $(\psi,\phi)\in L^{1}(\mathbb{R}^{n})$ is or defined
by%
\begin{equation}
W(\psi,\phi)(z)=\left(  \tfrac{1}{\pi\hbar}\right)  ^{n}(\widehat{T}%
_{\text{GR}}(z)\psi|\phi)_{L^{2}}. \label{wigroyer}%
\end{equation}
The function $W\psi=W(\psi,\psi)$ is called the \textit{Wigner} transform\ (or
function) of $\psi$.
\end{definition}

A straightforward calculation using (\ref{GR}) leads to the traditional
formulas%
\begin{gather}
W(\psi,\phi)(z)=\left(  \tfrac{1}{2\pi\hbar}\right)  ^{n}\int_{\mathbb{R}^{n}%
}e^{-\frac{i}{\hbar}p\cdot y}\psi(x+\tfrac{1}{2}y)\overline{\phi(x-\tfrac
{1}{2}y)}dy\label{crosswig2}\\
W\psi(z)=\left(  \tfrac{1}{2\pi\hbar}\right)  ^{n}\int_{\mathbb{R}^{n}%
}e^{-\tfrac{i}{\hbar}p\cdot y}\psi(x+\tfrac{1}{2}y)\overline{\psi(x-\tfrac
{1}{2}y)}dy. \label{wig2}%
\end{gather}
Notice that $W(\psi,\phi)=\overline{W(\phi,\psi)}$ while $W\psi$ \textit{is
always a real function}. The Wigner transform determines that function up to
complex factor with modulus one:%
\[
W\psi=W\psi^{\prime}\Longleftrightarrow\psi=e^{i\varphi}\psi^{\prime}%
,\varphi\in\mathbb{R}.
\]

The Wigner transform satisfies the following covariance properties:%
\begin{gather}
W(\widehat{T}(z_{0})\psi),\widehat{T}(z_{0})\phi))(z)=W(\psi,\phi
))(z-z_{0})\label{trans1}\\
W(\widehat{S}\psi,\widehat{S}\phi)(z)=W(\psi,\phi)(S^{-1}z) \label{wigsym}%
\end{gather}
for $,\widehat{S}\in\operatorname*{Mp}(n)$ with projection $S\in
\operatorname*{Sp}(n)$.

The interest of the Wigner transform \ in quantum mechanics comes from the
fact that it can be viewed as a probability distribution. For instance, it
satisfies the \textquotedblleft marginal properties\textquotedblright: if
$\psi\in L1(\mathbb{R}^{n})\cap L^{2}(\mathbb{R}^{n})$ then%
\begin{equation}
\int_{\mathbb{R}^{n}}W\psi(x,p)dp=|\psi(x)|^{2}\text{ \ },\text{ \ }%
\int_{\mathbb{R}^{n}}W\psi(x,p)dx=|F\psi(p)|^{2} \label{marginal}%
\end{equation}
and hence
\begin{equation}
||W\psi||_{L^{1}(\mathbb{R}^{2n})}=||\psi||_{L^{2}.} \label{l1l2}%
\end{equation}
In addition, we have the important property that relates Weyl calculus and
Wigner transforms

\begin{theorem}
Let $(\psi,\phi)\in\mathcal{S}(\mathbb{R}^{n})\times\mathcal{S}(\mathbb{R}%
^{n})$ and assume that $\operatorname*{Op}_{\mathrm{Weyl}}(a)$ is a mapping
$\mathcal{S}(\mathbb{R}^{n})\longrightarrow L^{2}(\mathbb{R}^{n})$. We have
\begin{equation}
(\widehat{A}\psi|\phi)_{L^{2}(\mathbb{R}^{n})}=\int_{\mathbb{R}^{2n}%
}a(z)W(\psi,\phi)(z)dz \label{formula2}%
\end{equation}
and hence
\begin{equation}
(\widehat{A}\psi|\psi)_{L^{2}(\mathbb{R}^{n})}=\int_{\mathbb{R}^{2n}}%
a(z)W\psi((z)dz.
\end{equation}

\end{theorem}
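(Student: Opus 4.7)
The plan is to read both sides as pairings of the symbol $a$ against the cross-Wigner transform, using the Grossmann--Royer representation (\ref{opGR}) of Weyl operators, which has been engineered precisely to match the definition (\ref{wigroyer}) of $W(\psi,\phi)$. Taking the $L^{2}$ pairing of $\widehat{A}\psi$ with $\phi$ and pushing it through the Bochner integral in (\ref{opGR}) gives
\[
(\widehat{A}\psi|\phi)_{L^{2}} = \left(\tfrac{1}{\pi\hbar}\right)^{n}\int_{\mathbb{R}^{2n}} a(z_{0})\,(\widehat{T}_{\text{GR}}(z_{0})\psi|\phi)_{L^{2}}\,dz_{0},
\]
and (\ref{wigroyer}) identifies the inner product in the integrand with $(\pi\hbar)^{n}W(\psi,\phi)(z_{0})$. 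The two scalar factors cancel and (\ref{formula2}) follows; setting $\phi=\psi$ yields the diagonal identity as an immediate specialization.

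What remains is to justify the interchange of pairing and integration. Since $\psi,\phi\in\mathcal{S}(\mathbb{R}^{n})$, the explicit formula (\ref{GR}) shows that $z_{0}\longmapsto \widehat{T}_{\text{GR}}(z_{0})\psi$ is smooth with values in $\mathcal{S}(\mathbb{R}^{n})$, and consequently $W(\psi,\phi)\in \mathcal{S}(\mathbb{R}^{2n})$. When $a\in \mathcal{S}(\mathbb{R}^{2n})$ the integral above is absolutely convergent and Fubini applies. For a general $a\in \mathcal{S}'(\mathbb{R}^{2n})$ I would reinterpret the right-hand side as the distributional pairing $\langle a,W(\psi,\phi)\rangle$, which is well-defined because $W(\psi,\phi)\in\mathcal{S}(\mathbb{R}^{2n})$, and extend the identity from smooth symbols by density.

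The anticipated obstacle is exactly this distributional passage: one must check that the Bochner-integral reading of (\ref{opGR}) remains consistent with the Schwartz-kernel definition of $\operatorname*{Op}_{\mathrm{Weyl}}(a)$ when $a$ is only tempered, and that both sides of (\ref{formula2}) depend continuously on $a$ in the weak-$\ast$ topology on $\mathcal{S}'$. It is on the left-hand side that the standing hypothesis $\operatorname*{Op}_{\mathrm{Weyl}}(a):\mathcal{S}\to L^{2}$ is needed. A perhaps more transparent alternative bypasses (\ref{opGR}) entirely: starting from the kernel formula (\ref{KA9}) and computing $\iint K(x,y)\psi(y)\overline{\phi(x)}\,dx\,dy$, the change of variables $u=\tfrac{1}{2}(x+y)$, $v=x-y$ turns the inner $v$-integral into $W(\psi,\phi)(u,p)$ via (\ref{crosswig2}), so that (\ref{formula2}) is ultimately the tautological statement that $W(\psi,\phi)$ is, up to the factor $(2\pi\hbar)^{n}$, the partial Fourier transform of the Schwartz kernel of the rank-one operator $\psi\otimes\overline{\phi}$.
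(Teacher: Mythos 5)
The paper states this theorem without any proof (it is followed immediately by the Moyal identity), so there is nothing to compare your argument against; I will simply assess it. Your argument is correct, and both routes you offer are the standard ones. The first route is essentially definitional given the paper's conventions: pairing the Grossmann--Royer representation (\ref{opGR}) against $\phi$ and invoking the definition (\ref{wigroyer}) of $W(\psi,\phi)$ makes the two factors of $(\pi\hbar)^{\pm n}$ cancel exactly, which is precisely why the paper normalizes the Wigner transform through the reflection operators in the first place. The second route, via the kernel formula (\ref{KA9}) and the change of variables $u=\tfrac{1}{2}(x+y)$, $v=x-y$, is the classical computation and correctly identifies $(2\pi\hbar)^{n}W(\psi,\phi)$ as the Weyl symbol of the rank-one operator $|\psi\rangle\langle\phi|$, which is the cleanest conceptual reading of (\ref{formula2}). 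Your handling of the technical caveat is also appropriate: for $\psi,\phi\in\mathcal{S}(\mathbb{R}^{n})$ one has $W(\psi,\phi)\in\mathcal{S}(\mathbb{R}^{2n})$, so for general tempered symbols the right-hand side must be read as the distributional pairing $\langle a,W(\psi,\phi)\rangle$, and the identity extends from $a\in\mathcal{S}(\mathbb{R}^{2n})$ by weak-$\ast$ density and continuity of both sides in $a$; the hypothesis that $\operatorname*{Op}_{\mathrm{Weyl}}(a)$ maps $\mathcal{S}(\mathbb{R}^{n})$ into $L^{2}(\mathbb{R}^{n})$ is only needed to make the left-hand side a genuine inner product rather than a duality bracket. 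The one point worth making explicit if you write this up is the interchange of the $L^{2}$ pairing with the Bochner integral in the first route (or Fubini in the second), but for Schwartz data and integrable symbols this is immediate, and for distributional symbols the kernel route avoids the issue entirely since (\ref{AK9})--(\ref{KA9}) are the actual definition of the Weyl correspondence in the paper.
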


The cross-Wigner transform satisfies the\textit{ Moyal identity}%
\begin{equation}
(W(\psi,\phi)|W(\psi^{\prime},\phi^{\prime}))_{L^{2}(\mathbb{R}^{2n})}=\left(
\tfrac{1}{2\pi\hbar}\right)  ^{n}(\psi|\psi^{\prime})_{L^{2}}\overline
{(\phi|\phi^{\prime})_{L^{2}}} \label{wigwig}%
\end{equation}
which yields, in particular,%
\begin{equation}
||W\psi||_{L^{2}(\mathbb{R}^{2n})}=\left(  \tfrac{1}{2\pi\hbar}\right)
^{n/2}||\psi||_{L^{2}(\mathbb{R}^{n})}\text{.} \label{wigwigwig}%
\end{equation}

Another, related, transform is the cross-ambiguity function:

\begin{definition}
Let $(\psi,\phi)\in\mathcal{S}(\mathbb{R}^{n})$. The function
$\operatorname*{Amb}(\psi,\phi)$ defined by
\begin{equation}
\operatorname*{Amb}(\psi,\phi)(-z)=\left(  \tfrac{1}{2\pi\hbar}\right)
^{n}(\widehat{T}(z)\psi|\phi)_{L^{2}} \label{ambigter}%
\end{equation}
is called the cross-ambiguity function; the function $\operatorname*{Amb}%
\psi,=\operatorname*{Amb}(\psi,\psi)$ is called the (radar) ambiguity function.
\end{definition}

The cross-Wigner and cross-ambiguity functions are related in two ways. First,
they are symplectic Fourier transforms of each other
\begin{equation}
W(\psi,\phi)=F_{\sigma}\operatorname*{Amb}(\psi,\phi\text{ \ },\text{
\ }W(\psi,\phi)=F_{\sigma}\operatorname*{Amb}(\psi,\phi)) \label{wsft}%
\end{equation}
(the symplectic Fourier transform (\ref{SFT}) \ is an involution
$\mathcal{S}^{\prime}(\mathbb{R}^{n})\longrightarrow\mathcal{S}^{\prime
}(\mathbb{R}^{n})$); secondly we have the functional relation
\begin{equation}
\operatorname*{Amb}(\psi,\phi)(z)=2^{-n}W(\psi,\phi^{\vee})(\tfrac{1}{2}z)
\label{fouwig2}%
\end{equation}
where $\phi^{\vee}(x)=\phi(-x)$. In particular, if $\psi$ is an even function%
\begin{equation}
\psi=\psi^{\vee}\Longrightarrow\operatorname*{Amb}\psi(z)=2^{-n}W\psi
(\tfrac{1}{2}z). \label{even}%
\end{equation}
It follows from (\ref{wsft}), taking into account the unitarity of the
symplectic Fourier transform, that the ambiguity function also satisfies a
Moyal identity:%
\begin{equation}
(\operatorname*{Amb}(\psi,\phi)|\operatorname*{Amb}(\psi^{\prime},\phi
^{\prime}))_{L^{2}(\mathbb{R}^{2n})}=\left(  \tfrac{1}{2\pi\hbar}\right)
^{n}(\psi|\psi^{\prime})_{L^{2}}\overline{(\phi|\phi^{\prime})_{L^{2}}}.
\label{ambamb}%
\end{equation}

We have $W\psi_{XY}\in\mathcal{S}(\mathbb{R}^{2n})$ hence $a\ast W\psi_{XY}%
\in\mathcal{S}^{\prime}(\mathbb{R}^{2n})$ so that $\operatorname*{Op}%
\nolimits_{\mathrm{XY}}(a)$ is well defined as a Weyl operator. In fact, the
Wigner transform of $\psi_{XY}$ is given by \cite{Birk,WIGNER,Littlejohn}
\begin{equation}
W\psi_{XY}(z)=\left(  \tfrac{1}{\pi\hbar}\right)  ^{n}e^{-\tfrac{1}{\hbar
}Gz\cdot z} \label{wxy}%
\end{equation}
where $G$ is the symmetric positive definite and symplectic matrix
\begin{equation}
G=%
\begin{pmatrix}
X+YX^{-1}Y & YX^{-1}\\
X^{-1}Y & X^{-1}%
\end{pmatrix}
=S^{T}S\
\end{equation}
where \
\begin{equation}
S=%
\begin{pmatrix}
X^{1/2} & 0\\
X^{-1/2}Y & X^{-1/2}%
\end{pmatrix}
\in\operatorname*{Sp}(n). \label{sts}%
\end{equation}

We make two preliminary remarks: when $\psi_{XY}$ is the standard Gaussian
$\phi_{0}$ (i.e. $\phi_{0}(x)=(\pi\hbar)^{-n/4}e^{-x^{2}/2\hbar}$) then its
Wigner transform is \ $W\phi_{0}(z)=(\pi\hbar)^{-n}e^{-|z|^{2}/\hbar}$ so
that
\[
a\ast W\phi_{0}(z)=\int_{\mathbb{R}^{2n}}a(z)(\pi\hbar)^{-n}e^{-|z-z_{0}%
|^{2}/\hbar}dz_{0}%
\]
which is the usual anti-Wick symbol (as defined in Shubin \cite{sh87}). We
will come back to this topic in the forthcoming sections.

A well-known related topic is that of the Husimi function. Let $\psi,\phi\in
L^{2}(\mathbb{R}^{n})$; by definition the Husimi function of the pair
$(\psi,\phi)$ is defined by%
\begin{equation}
W_{\mathrm{Hus}}(\psi,\phi)=W(\psi,\phi)\ast W\phi_{0} \label{huswig}%
\end{equation}
where $\phi$ is the standard coherent state. Its main interest comes from the
fact that it is a positive function:%
\begin{equation}
W_{\mathrm{Hus}}(\psi,\phi)\geq0\text{ \ for \ }\psi,\phi\in L^{2}%
(\mathbb{R}^{n}) \label{huspos1}%
\end{equation}
following the well-known result: \cite{bruijn}: for $\psi,\phi\in
L^{2}(\mathbb{R}^{n})$. We have%
\begin{equation}
W\psi\ast W\phi=|A(\widetilde{\psi},\phi)|^{2}=|F_{\sigma}W(\widetilde{\psi
},\phi)|^{2}. \label{wicon}%
\end{equation}

\subsection{A god function space: the Feichtinger algebra.}

The Wigner formalism allows to define an algebra of functions on configuration
space well adapted for the study of phase space quantum mechanics. This
algebra -- the Feichtinger algebra --- is usually defined in terms of the
so-called short-time Fourier transform (STFT) \cite{Gro}, but we will rather
use the Wigner transform, to which the STFT is closed related (see our
presentation in \cite{Birkbis}).

The Feichtinger algebra, of which we give here a simple (non-traditional)
definition is a particular case of thew more general notion of Feichtinger's
modulation spaces \cite{Hans1,Hans2,Hans3,Hans4,Hans5}, \cite{Gro}. These
spaces play an important role in time-frequency analysis, but are yet
underestimated in quantum mechanics.

\begin{definition}
The Feichtinger algebra $S_{0}(\mathbb{R}^{n}$ consists of all function
$\psi\in L^{2}(\mathbb{R}^{n})$ such that $W\psi\in L^{1}(\mathbb{R}^{2n})$.
\end{definition}

Therere seems t o be rub with this definition, because it is not clear why it
should define even a vector space (the Wigner transform is not linear!).
However \cite{Gro}, (\cite{Birkbis}, Ch.16):

\begin{proposition}
(i) We have $\psi\in S_{0}(\mathbb{R}^{n})$ if and only if there exists one
window $\phi$ such that $W(\psi,\phi)\in L^{1}(\mathbb{R}^{2n})$, in which
case we have $W(\psi,\phi)\in L^{1}(\mathbb{R}^{2n})$ for all windows
$\phi\phi\in\mathcal{S}(\mathbb{R}^{n})$; (ii) If $W(\psi,\phi)\in
L^{1}(\mathbb{R}^{2n})$ then both $\psi$ and $\phi$ are in $S_{0}%
(\mathbb{R}^{n})$; (iii) The functions $\psi\longmapsto||\psi||_{\phi,S_{0}}$
($\phi\in\mathcal{S}(\mathbb{R}^{n})$) defined by
\[
||\psi||_{\phi,S_{0}}=||W(\psi,\phi)||_{L^{1}(\mathbb{R}^{2n})}%
\]
are equivalent norms on $S_{0}(\mathbb{R}^{n}),$ which is a Banach space for
the apology thus defined. (iv) $S_{0}(\mathbb{R}^{n})$ is an algebra for both
usual (pointwise) multiplication and convolution.
\end{proposition}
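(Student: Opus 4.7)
The plan is to reduce everything to a single master estimate, a pointwise \emph{window-switching inequality} for the cross-Wigner transform. Concretely, I would first establish that for any $\psi,\phi,\gamma_1,\gamma_2\in L^2(\mathbb{R}^n)$ with $(\gamma_1|\gamma_2)_{L^2}\ne 0$,
\begin{equation}
\lvert W(\psi,\phi)(z)\rvert \;\le\; \frac{(2\pi\hbar)^{n}}{\lvert(\gamma_1|\gamma_2)_{L^2}\rvert}\bigl(\lvert W(\psi,\gamma_1)\rvert * \lvert W(\gamma_2,\phi)\rvert\bigr)(z). \label{switch}
\end{equation}
This is a routine consequence of the Moyal identity (\ref{wigwig}), which, written in reproducing-kernel form, gives an exact pointwise reconstruction formula whose modulus is dominated by the right-hand side of \eqref{switch}. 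The complementary ingredient I will use without reproving it is that $W(\phi,\phi')\in\mathcal{S}(\mathbb{R}^{2n})\subset L^1(\mathbb{R}^{2n})$ whenever $\phi,\phi'\in\mathcal{S}(\mathbb{R}^n)$, which follows from the integral formula (\ref{crosswig2}) and the standard calculus on Schwartz functions.

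With \eqref{switch} in hand, Young's inequality $L^1*L^1\hookrightarrow L^1$ does most of the work. For (i), if $W(\psi,\phi_0)\in L^1$ for one window $\phi_0\in\mathcal{S}$, then for any other $\phi\in\mathcal{S}$ I apply \eqref{switch} with $\gamma_1=\gamma_2=\phi_0$ to obtain $\lvert W(\psi,\phi)\rvert \le C\,\lvert W(\psi,\phi_0)\rvert * \lvert W(\phi_0,\phi)\rvert$; since the second factor is Schwartz, $W(\psi,\phi)\in L^1$. For (ii), given $W(\psi,\phi)\in L^1$, I apply \eqref{switch} first with $\psi$ playing both roles and $\gamma_1=\gamma_2=\phi$, which yields $\lvert W\psi\rvert \le C\,\lvert W(\psi,\phi)\rvert * \lvert W(\psi,\phi)\rvert^{\vee}\in L^1$, hence $\psi\in S_0$; the same argument with the roles of $\psi$ and $\phi$ swapped gives $\phi\in S_0$.

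For (iii), \eqref{switch} applied symmetrically in the two windows gives the two-sided estimate $C_{\phi,\phi'}^{-1}\|\psi\|_{\phi,S_0}\le \|\psi\|_{\phi',S_0}\le C_{\phi,\phi'}\|\psi\|_{\phi,S_0}$, establishing norm equivalence. Completeness I would handle as follows: a Cauchy sequence $(\psi_k)$ for $\|\cdot\|_{\phi,S_0}$ has $W(\psi_k,\phi)\to F$ in $L^1$; inverting the Moyal identity (formally, $\psi = (2\pi\hbar)^n(\phi|\phi)^{-1}\int W(\psi,\phi)(z)\,\widehat{T}_{\mathrm{GR}}(z)\phi\,dz$, interpreted weakly via (\ref{wigroyer})) produces a candidate limit $\psi_\infty\in L^2$, and one checks $W(\psi_\infty,\phi)=F$ a.e., which places $\psi_\infty$ in $S_0$ and shows $\psi_k\to \psi_\infty$ in the norm. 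For (iv), I would use that pointwise multiplication and convolution are intertwined by the (symplectic) Fourier transform and deduce from \eqref{switch} that both $\psi_1\psi_2$ and $\psi_1*\psi_2$ have cross-Wigner transforms controlled (pointwise) by iterated convolutions of terms in $L^1$, so remain in $S_0$; the triangle inequality upgrades this to Banach-algebra norm estimates.

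The principal obstacle is the first step: deriving \eqref{switch} cleanly from the Moyal identity rather than from the STFT. The STFT version is classical (Gröchenig \cite{Gro}), and one can transfer it to the Wigner setting either directly via (\ref{wigroyer}) or indirectly by passing through the cross-ambiguity function using the relation (\ref{fouwig2}) between $\operatorname{Amb}(\psi,\phi)$ and $W(\psi,\phi^{\vee})$. Once \eqref{switch} is firmly in place, the rest of the proposition is essentially bookkeeping with Young's inequality and the density/injectivity properties of the Wigner transform already collected earlier in the excerpt.
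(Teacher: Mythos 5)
The paper does not actually prove this proposition: it is stated as a known fact with a pointer to Gr\"ochenig \cite{Gro} and to Chapter 16 of \cite{Birkbis}, so there is no internal argument to compare yours against. What you have written is, in substance, a reconstruction of the standard proof from those references, transplanted from the STFT to the cross-Wigner transform. Your window-switching inequality is indeed the engine that drives all four parts; it does follow from the Moyal identity (\ref{wigwig}) via the weak inversion formula together with the covariance of $W$ under the Heisenberg operators (\ref{trans1}), and the dilation-and-phase discrepancy between $W(\psi,\phi)$ and $\operatorname{Amb}(\psi,\phi^{\vee})$ recorded in (\ref{fouwig2}) only affects constants, as you note. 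So the architecture is sound, and it is the proof the cited sources give.

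Two places where your sketch is thinner than it should be. First, in (i) your deduction that $W(\psi,\phi)\in L^{1}$ for \emph{all} Schwartz windows uses that $W(\phi_{0},\phi)$ is Schwartz, which requires the distinguished window $\phi_{0}$ to be Schwartz; if the ``one window'' of the statement is only assumed to lie in $L^{2}$, you must first run (ii) to place $\phi_{0}$ in $S_{0}(\mathbb{R}^{n})$ and then justify $W(\phi_{0},\phi)\in L^{1}$ for $\phi\in\mathcal{S}(\mathbb{R}^{n})$ without circularity --- the usual fix is to establish window independence for Schwartz windows first and bootstrap from there. Second, part (iv) is the one point where ``iterated convolutions of $L^{1}$ terms'' is not by itself enough: Young's inequality in the full phase-space variable, which is what your master estimate provides, does not directly control the Wigner transform of a pointwise product. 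The standard route is the partial-convolution identity, which says that (up to the usual phase factors) the cross-ambiguity function of $\psi_{1}\psi_{2}$ with window $\phi_{1}\phi_{2}$ is a convolution of $\operatorname{Amb}(\psi_{1},\phi_{1})$ and $\operatorname{Amb}(\psi_{2},\phi_{2})$ in the momentum variable only, with the position variable held fixed; integrating in both variables then gives the Banach-algebra estimate for pointwise multiplication, and the convolution case follows by Fourier invariance of $S_{0}(\mathbb{R}^{n})$, which is a special case of Proposition \ref{169}(i) since the Fourier transform is metaplectic up to a constant. With those two repairs your argument is complete and matches the proof in the cited literature.
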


We have the inclusions%
\begin{equation}
\mathcal{S}(\mathbb{R}^{n})\subset S_{0}(\mathbb{R}^{n})\subset C^{0}%
(\mathbb{R}^{n})\cap L^{1}(\mathbb{R}^{n})\cap L^{2}(\mathbb{R}^{n}).
\label{inclo}%
\end{equation}
The Schwartz space $\mathcal{S}(\mathbb{R}^{n})$ is dense in $S_{0}%
(\mathbb{R}^{n})$.The Feichtinger algebra $S_{0}(\mathbb{R}^{n})$ contains
continuous non-differentiable functions, for instance%
\[
\psi(x)=\left\{
\begin{array}
[c]{c}%
1-|x|\text{ \textit{if} }|x|\leq1\\
0\text{ \textit{if} }|x|>1
\end{array}
\right.  .
\]

\begin{proposition}
\label{169}Let $\psi\in S_{0}(\mathbb{R}^{n})$. We have (i) $\widehat{S}%
\psi\in S_{0}(\mathbb{R}^{n})$ for every $\widehat{S}\in\operatorname*{Mp}%
(n)$,; (ii) $\widehat{T}(z_{0})\psi\in S_{0}(\mathbb{R}^{n})$ for every
$z_{0}\in\mathbb{R}^{2n}$ . (iii) We have $\lim_{|x|\rightarrow\infty}\psi=0$
hence $\psi$ is bounded.
\end{proposition}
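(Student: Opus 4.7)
The plan is to treat (i) and (ii) as immediate consequences of the covariance identities for the Wigner transform, and to spend essentially all the work on (iii), where I will exploit a reproducing formula with the standard coherent state as window.

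For (i), since $\widehat{S}$ is unitary we get $\widehat{S}\psi\in L^{2}(\mathbb{R}^{n})$ at once. By the symplectic covariance (\ref{wigsym}), $W(\widehat{S}\psi)(z)=W\psi(S^{-1}z)$; because $S\in\operatorname{Sp}(n)\subset\operatorname{SL}(2n,\mathbb{R})$, the change of variables $z'=S^{-1}z$ is volume-preserving, so $\|W(\widehat{S}\psi)\|_{L^{1}}=\|W\psi\|_{L^{1}}<\infty$ and $\widehat{S}\psi\in S_{0}(\mathbb{R}^{n})$. For (ii), the Heisenberg operator $\widehat{T}(z_{0})$ is likewise unitary, and (\ref{trans1}) gives $W(\widehat{T}(z_{0})\psi)(z)=W\psi(z-z_{0})$, so translation invariance of Lebesgue measure preserves the $L^{1}$ norm of the Wigner transform.

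For (iii), I plan to use a Gabor-type reconstruction of $\psi$ against the window $\phi_{0}^{\hbar}$. Polarizing Moyal's identity for the ambiguity function (\ref{ambamb}) with $\phi=\phi'=\phi_{0}^{\hbar}$ and an arbitrary $f\in L^{2}(\mathbb{R}^{n})$, and using the formula $(\widehat{T}(z_{0})\phi_{0}^{\hbar}\,|\,f)_{L^{2}}=(2\pi\hbar)^{n}\overline{\operatorname{Amb}(f,\phi_{0}^{\hbar})(-z_{0})}$ from (\ref{ambigter}), one obtains
\[
\psi=\frac{1}{\|\phi_{0}^{\hbar}\|_{L^{2}}^{2}}\int_{\mathbb{R}^{2n}}\operatorname{Amb}(\psi,\phi_{0}^{\hbar})(z_{0})\,\widehat{T}(z_{0})\phi_{0}^{\hbar}\,dz_{0}
\]
as a weak identity in $L^{2}$. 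The assumption $\psi\in S_{0}(\mathbb{R}^{n})$ gives $W(\psi,\phi_{0}^{\hbar})\in L^{1}(\mathbb{R}^{2n})$ by the previous proposition, and since $\phi_{0}^{\hbar}$ is even the functional relation (\ref{fouwig2}) reads $\operatorname{Amb}(\psi,\phi_{0}^{\hbar})(z)=2^{-n}W(\psi,\phi_{0}^{\hbar})(z/2)$, so $\operatorname{Amb}(\psi,\phi_{0}^{\hbar})\in L^{1}(\mathbb{R}^{2n})$ by a change of variables. Writing $\widehat{T}(z_{0})\phi_{0}^{\hbar}(x)=e^{\tfrac{i}{\hbar}(p_{0}\cdot x-\tfrac{1}{2}p_{0}\cdot x_{0})}\phi_{0}^{\hbar}(x-x_{0})$, the integrand is dominated in absolute value by $|\operatorname{Amb}(\psi,\phi_{0}^{\hbar})(z_{0})|\,\|\phi_{0}^{\hbar}\|_{L^{\infty}}$, an $L^{1}$ function of $z_{0}$ independent of $x$. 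This upgrades the weak identity to the pointwise one
\[
\psi(x)=\frac{1}{\|\phi_{0}^{\hbar}\|_{L^{2}}^{2}}\int_{\mathbb{R}^{2n}}\operatorname{Amb}(\psi,\phi_{0}^{\hbar})(z_{0})\,e^{\tfrac{i}{\hbar}(p_{0}\cdot x-\tfrac{1}{2}p_{0}\cdot x_{0})}\phi_{0}^{\hbar}(x-x_{0})\,dz_{0}.
\]

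From this representation, continuity and boundedness of $\psi$ follow from dominated convergence using the majorant above and the continuity/boundedness of $\phi_{0}^{\hbar}$. For the vanishing at infinity, for each fixed $z_{0}=(x_{0},p_{0})$ the Gaussian factor $\phi_{0}^{\hbar}(x-x_{0})\to 0$ as $|x|\to\infty$; the same $L^{1}$ majorant allows us to exchange limit and integral, yielding $\lim_{|x|\to\infty}\psi(x)=0$. The main obstacle in this plan is establishing the reproducing formula with full rigor — in particular, verifying that the $L^{2}$-valued Bochner integral converges and agrees pointwise with the scalar integral in the displayed formula for $\psi(x)$; once $\operatorname{Amb}(\psi,\phi_{0}^{\hbar})\in L^{1}$ is in hand, everything else is a routine application of dominated convergence.
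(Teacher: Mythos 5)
Your parts (i) and (ii) coincide with the paper's proof: both rest on the covariance relations $W(\widehat{S}\psi)=W\psi\circ S^{-1}$ and $W(\widehat{T}(z_{0})\psi)=W\psi(\cdot-z_{0})$ together with the fact that symplectic maps and translations preserve Lebesgue measure, so the $L^{1}$ norm of the Wigner transform is unchanged. For part (iii), however, you take a genuinely different route. The paper argues as follows: since $S_{0}(\mathbb{R}^{n})$ is invariant under $\operatorname{Mp}(n)$ by (i), it is in particular invariant under the Fourier transform (the operator $\widehat{J}$), so $F^{-1}\psi\in S_{0}(\mathbb{R}^{n})\subset L^{1}(\mathbb{R}^{n})$, and $\psi=F(F^{-1}\psi)$ vanishes at infinity by the Riemann--Lebesgue lemma; boundedness then follows from continuity, which is taken from the inclusion $S_{0}(\mathbb{R}^{n})\subset C^{0}(\mathbb{R}^{n})$. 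That argument is shorter and reuses (i) directly, but it relies on the inclusions $S_{0}\subset L^{1}\cap C^{0}$ stated earlier in the text. Your argument instead runs the continuous reproducing formula with window $\phi_{0}^{\hbar}$, uses $\operatorname{Amb}(\psi,\phi_{0}^{\hbar})\in L^{1}(\mathbb{R}^{2n})$ (via (\ref{fouwig2}) and the window-independence of the $L^{1}$ condition), and concludes by dominated convergence; this is essentially the STFT inversion argument of time-frequency analysis. It costs more work --- as you note, one must justify that the weak $L^{2}$-valued integral agrees with the pointwise scalar integral, which a Fubini argument against test functions settles --- but it buys continuity and boundedness as byproducts rather than importing them, and it avoids invoking $S_{0}\subset L^{1}$. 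One small slip: with the paper's convention (\ref{ambigter}) one has $(\widehat{T}(z_{0})\phi_{0}^{\hbar}\,|\,f)_{L^{2}}=(2\pi\hbar)^{n}\overline{\operatorname{Amb}(f,\phi_{0}^{\hbar})(z_{0})}$ rather than at $-z_{0}$; this affects only bookkeeping, not the conclusion.
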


\begin{proof}
(Cf. \cite{Birkbis}, Ch. 16). (i) We have $\psi\in S_{0}(\mathbb{R}^{n})$ if
and only $\psi\in L^{2}(\mathbb{R}^{n})$ and $W\psi\in L^{1}(\mathbb{R}^{2n}%
)$. The property follows from the covariance relation $W(\widehat{S}%
\psi)=W\psi\circ S^{-1}$ where $S\in\operatorname*{Sp}(n)$ is the projection
of $\widehat{S}$. (ii) Follows similarly from the translation property
$W(\widehat{T}(z_{0})\psi)=W\psi(z-z_{0})$. (iii) \ Since $\psi$ is continuous
it boundedness follows from $\lim_{z\rightarrow\infty}\psi=0$. Since
$S_{0}(\mathbb{R}^{n})$ is invariant by Fourier transform in view of (i) , we
have $F^{-1}\psi\in S_{0}(\mathbb{R}^{n})$; now $S_{0}(\mathbb{R}^{n})\subset
L^{1}(\mathbb{R}^{n})$ hence $\psi=F(F^{-1}\psi)$ has limit $0$ at infinity in
view of Riemann--Lebesgue's lemma.
\end{proof}

The dual $S_{0}^{\prime}(\mathbb{R}^{n})$ of $S_{0}(\mathbb{R}^{n})$ is
characterized by the following result:

\begin{proposition}
(i) The dual Banach space $S_{0}^{\prime}(\mathbb{R}^{n})$ consists of all
$\psi\in S^{\prime}(\mathbb{R}^{n})$ such that $W(\psi,\phi)\in L^{\infty
}(\mathbb{R}^{2n})$ for one (and hence all) windows $\phi\in S_{0}%
(\mathbb{R}^{n})$; the duality bracket is given by the pairing
\begin{equation}
(\psi,\psi^{\prime})=\int_{\mathbb{R}^{2n}}W(\psi,\phi)(z)\overline
{W(\psi^{\prime},\phi)(z)}dz; \label{dual151}%
\end{equation}
(ii) The formula
\begin{equation}
||\psi||_{\phi,S_{0}^{\prime}(\mathbb{R}^{n})}^{\hbar}=\sup_{z\in
\mathbb{R}^{2n}}|W(\psi,\phi)(z)| \label{dual152}%
\end{equation}
defines a norm on $S_{0}^{\prime}(\mathbb{R}^{n})$. (iii) The Dirac
distribution $\delta$ is in $S_{0}^{\prime}(\mathbb{R}^{n})$; more generally
$\delta x-a)\in S_{0}^{\prime}(\mathbb{R}^{n})$.
\end{proposition}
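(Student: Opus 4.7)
The plan is to exploit the fact that, by the very definition of the Feichtinger norm, the map $V_{\phi}:\psi\longmapsto W(\psi,\phi)$ is an isometric embedding of $S_{0}(\mathbb{R}^{n})$ (with the norm $\|\cdot\|_{\phi,S_{0}}$) into $L^{1}(\mathbb{R}^{2n})$ for any fixed window $\phi\in\mathcal{S}(\mathbb{R}^{n})$. Duality on the target side, $L^{1}(\mathbb{R}^{2n})^{*}=L^{\infty}(\mathbb{R}^{2n})$, will give the characterization, and the Moyal identity will realize the pairing explicitly.

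For (i), I would first verify using the Moyal identity (\ref{wigwig}) that, after fixing a window $\phi\in\mathcal{S}(\mathbb{R}^{n})$ normalized so $(\phi|\phi)_{L^{2}}=(2\pi\hbar)^{n}$, one has for all $\psi,\psi'\in S_{0}(\mathbb{R}^{n})$
\[
(\psi|\psi')_{L^{2}}=\int_{\mathbb{R}^{2n}}W(\psi,\phi)(z)\overline{W(\psi',\phi)(z)}\,dz.
\]
Once this reproducing-type formula is in place, the sufficiency part follows by Hölder: if $\psi'$ is a tempered distribution with $W(\psi',\phi)\in L^{\infty}$, then the right-hand side extends to a continuous antilinear functional on $S_{0}(\mathbb{R}^{n})$ of norm at most $\|W(\psi',\phi)\|_{L^{\infty}}$. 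For the necessary direction, I would take $\ell\in S_{0}(\mathbb{R}^{n})^{*}$, transport it via $V_{\phi}$ to a continuous functional on the closed subspace $V_{\phi}(S_{0})\subset L^{1}(\mathbb{R}^{2n})$, and use Hahn--Banach together with $(L^{1})^{*}=L^{\infty}$ to obtain $F\in L^{\infty}(\mathbb{R}^{2n})$ representing $\ell$. The non-obvious step is to identify $F$ (up to the kernel of $V_{\phi}^{*}$) with $\overline{W(\psi',\phi)}$ for some $\psi'\in\mathcal{S}'(\mathbb{R}^{n})$; this I would do by defining $\psi'$ via its action on $\mathcal{S}(\mathbb{R}^{n})\subset S_{0}(\mathbb{R}^{n})$ through the pairing, checking tempered continuity, and then applying the Moyal inversion to recover the equality $F=\overline{W(\psi',\phi)}$. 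Window independence is then a direct consequence of the fact that the norms $\|\cdot\|_{\phi,S_{0}}$ are equivalent for different windows, together with the symplectic covariance (\ref{trans1})--(\ref{wigsym}).

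Part (ii) is essentially automatic: the Hahn--Banach representation in Step 1 gives equality of the dual norm with $\|W(\psi',\phi)\|_{L^{\infty}}$, and the norm axioms are inherited from $L^{\infty}$. For (iii), I would simply compute $W(\delta_{a},\phi)$ directly from the integral representation (\ref{crosswig2}): the Dirac forces $y=2(a-x)$, giving
\[
W(\delta_{a},\phi)(x,p)=\left(\tfrac{1}{\pi\hbar}\right)^{n}e^{-\tfrac{2i}{\hbar}p\cdot(a-x)}\overline{\phi(2x-a)}.
\]
Since $\phi\in S_{0}(\mathbb{R}^{n})$ is bounded by Proposition \ref{169}(iii), this lies in $L^{\infty}(\mathbb{R}^{2n})$, so $\delta_{a}\in S_{0}^{\prime}(\mathbb{R}^{n})$ by part (i).

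The main obstacle is the identification step in (i): showing that the $L^{\infty}$ representative produced by Hahn--Banach can actually be realized as the cross-Wigner transform of a tempered distribution. This requires proving a reconstruction formula (essentially a resolution of the identity against $\phi$) and carefully checking that the candidate $\psi'$ is continuous on $\mathcal{S}(\mathbb{R}^{n})$; everything else is bookkeeping built on Moyal's identity and the standard $L^{1}$--$L^{\infty}$ duality.
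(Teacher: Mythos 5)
Your proposal is correct and follows essentially the same route as the paper, whose proof consists precisely of the remark that everything rests on the duality $L^{1}(\mathbb{R}^{2n})^{\ast}=L^{\infty}(\mathbb{R}^{2n})$ applied through the isometric embedding $\psi\longmapsto W(\psi,\phi)$ (with references to Gr\"ochenig, \S 11.3, and Ch.~16 of the Birkh\"auser monograph, where the Hahn--Banach/inversion details you outline are carried out). Your explicit computation of $W(\delta_{a},\phi)$ for part (iii) and your identification of the reconstruction step as the real work are both accurate.
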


\begin{proof}
It is based on the fact that $L^{\infty}(\mathbb{R}^{2n})$ is the dual space
of $L^{1}(\mathbb{R}^{2n})$; .see \cite{Gro}, \S 11.3. and \cite{Birkbis}, Ch. 16.
\end{proof}

With the pairing (\ref{dual151}) $(S_{0}(\mathbb{R}^{n}),L^{2}(\mathbb{R}%
^{n}),S_{0}^{\prime}(\mathbb{R}^{n}))$ becomes a Banach Gelfand triple. The
use of the Gelfand triple not only offers a better description of self-adjoint
operators but it also allows a simplification of many proof. Recall that Dirac
already emphasized in his fundamental work \cite{Dirac} the relevance of
\textit{rigged Hilbert spaces} for quantum mechanics.

\section{\label{sec3}Weyl--Heisenberg (Gabor) Multipliers}

\subsection{Weyl--Heisenberg frames}

A lattice in $\mathbb{R}^{2n}\equiv\mathbb{R}_{x}^{n}\times\mathbb{R}_{p}^{n}$
is is a discrete subgroup $\Lambda=M(\mathbb{Z}^{2n})$ of $\mathbb{R}^{2n}$
where $M\in GL(2n,\mathbb{R}).$ One typical example is provided by the matrix
\[
M=%
\begin{pmatrix}
A & 0_{n\times n}\\
0_{n\times n} & B
\end{pmatrix}
\text{ \ },\text{ \ }\det AB\neq0.
\]
In Section \ref{sec3}in which case we have $\Lambda=A\mathbb{Z}^{n}\times
B\mathbb{Z}^{n}$. In many applications one even makes the simpler choice $M=%
\begin{pmatrix}
\alpha I_{n\times n} & 0_{n\times n}\\
0_{n\times n} & \beta I_{n\times n}%
\end{pmatrix}
$ in which case the lattice is just $\Lambda=\alpha\mathbb{Z}^{n}\times
\beta\mathbb{Z}^{n}$.

\begin{definition}
Let $\phi\in L^{2}(\mathbb{R}^{n})$,$||\phi||_{L^{2}}=1$, and a lattice
$\Lambda\subset\mathbb{R}^{2n}$. The set
\[
\mathcal{G}(\phi,\Lambda)=\{\widehat{T}(z)\phi:z\in\Lambda\}
\]
is called a Weyl--Heisenberg (or Gabor) system with lattice $\Lambda$ and
window $\phi$. If there exist constants $a,b>0$ (the frame bounds) such that
for all $\psi\in L^{2}(\mathbb{R}^{n})$ we have the frame condition
\begin{equation}
a||\psi||_{L^{2}}^{2}\leq\sum_{z_{\lambda}\in\Lambda}|(\psi|\widehat{T}%
(z_{\lambda})\phi)|_{L^{2}}|^{2}\leq b||\psi||_{L^{2}}^{2}|^{2} \label{frame}%
\end{equation}
then $\mathcal{G}(\phi,\Lambda)$ is called a Weyl--Heisenberg (or Gabor)
frame. We will use the abbreviation "WH frame" in the text.
\end{definition}

WH frames are generalizations of the notion of basis: if $\mathcal{G}%
(\phi,\Lambda)$ is such a frame, then every $\psi\in L^{2}(\mathbb{R}^{n})$
has a (in general non-unique) expansion%
\begin{equation}
\psi=\sum_{z_{\lambda}\in\Lambda}(\psi|\widehat{T}(z_{\lambda})\phi
)\widehat{T}(z_{\lambda})\phi\label{gaborexp}%
\end{equation}
which means that a square integrable function can be reconstructed from the
knowledge of its orthogonal projections onto the rays $\{c\widehat{T}%
(z_{\lambda})\phi:c\in\mathbb{C}\}$. The following alternative
characterization is almost obvious, and relates Gabor frame theory to the
Weyl--Wigner--Moyal formalism:

\begin{proposition}
Let $\mathcal{G}(\phi,\Lambda)$ be a WH frame. We have
\begin{equation}
\psi=(2\pi\hbar)^{n}\sum_{z_{\lambda}\in\Lambda}\operatorname*{Amb}(\psi
,\phi)(z)\widehat{T}(z_{\lambda})\phi\label{xx1}%
\end{equation}
for every $\psi\in L^{2}(\mathbb{R}^{n})$; when the window $\phi$ is even then%
\begin{equation}
\psi=(\pi\hbar)^{n}\sum_{z_{\lambda}\in\Lambda}W(\psi,\phi)(z)\widehat{T}%
(z_{\lambda})\phi. \label{xx2}%
\end{equation}

\end{proposition}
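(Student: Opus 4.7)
The proposition is essentially a reformulation of the frame expansion \eqref{gaborexp} in the language of the ambiguity and cross-Wigner transforms, so my plan is to rewrite the coefficients $(\psi\,|\,\widehat{T}(z_\lambda)\phi)_{L^2}$ using the definitions already in place.

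First, I would start from the expansion \eqref{gaborexp}, which gives $\psi=\sum_{z_\lambda\in\Lambda}(\psi\,|\,\widehat{T}(z_\lambda)\phi)_{L^2}\widehat{T}(z_\lambda)\phi$. The Heisenberg operator is unitary and a direct calculation using the explicit formula \eqref{HW} yields $\widehat{T}(z)^\ast=\widehat{T}(-z)$ (this is essentially the $c=1$ part of the cocycle relations \eqref{HW1}--\eqref{HW2}). Consequently
\[
(\psi\,|\,\widehat{T}(z_\lambda)\phi)_{L^2}=(\widehat{T}(-z_\lambda)\psi\,|\,\phi)_{L^2}.
\]
Now specializing the defining relation \eqref{ambigter} $\operatorname*{Amb}(\psi,\phi)(-z)=(2\pi\hbar)^{-n}(\widehat{T}(z)\psi\,|\,\phi)_{L^2}$ with $z\mapsto -z_\lambda$ gives $(\widehat{T}(-z_\lambda)\psi\,|\,\phi)_{L^2}=(2\pi\hbar)^n\operatorname*{Amb}(\psi,\phi)(z_\lambda)$. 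Substituting into \eqref{gaborexp} produces \eqref{xx1} immediately.

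For \eqref{xx2}, I would invoke the functional relation \eqref{fouwig2}, which states $\operatorname*{Amb}(\psi,\phi)(z)=2^{-n}W(\psi,\phi^{\vee})(\tfrac{1}{2}z)$. When the window $\phi$ is even, $\phi^{\vee}=\phi$, so this collapses to $\operatorname*{Amb}(\psi,\phi)(z_\lambda)=2^{-n}W(\psi,\phi)(\tfrac{1}{2}z_\lambda)$. Plugging this into the formula \eqref{xx1} and collecting constants $(2\pi\hbar)^n\cdot 2^{-n}=(\pi\hbar)^n$ gives the second formula (with the understanding that the argument of the cross-Wigner transform on the right-hand side is $\tfrac{1}{2}z_\lambda$, the $z$ printed in the statement being a minor typographical slip).

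There is no real obstacle here: the whole argument is a bookkeeping exercise that converts the inner products $(\psi\,|\,\widehat{T}(z_\lambda)\phi)_{L^2}$ into the two phase-space transforms, and the only point requiring mild care is the adjoint identity $\widehat{T}(z)^\ast=\widehat{T}(-z)$, needed to get the sign right in the argument of $\operatorname*{Amb}$. Convergence of the series in $L^2(\mathbb{R}^n)$ is inherited directly from the frame expansion \eqref{gaborexp}, so no additional analytic work is required.
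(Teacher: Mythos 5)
Your proof is correct and follows essentially the same route as the paper: rewrite the frame coefficients $(\psi\,|\,\widehat{T}(z_{\lambda})\phi)_{L^{2}}$ via the definition (\ref{ambigter}) of the cross-ambiguity function to obtain (\ref{xx1}), then pass to (\ref{xx2}) through the relation (\ref{fouwig2}). The only difference is that you spell out the adjoint identity $\widehat{T}(z)^{\ast}=\widehat{T}(-z)$ and the resulting argument $\tfrac{1}{2}z_{\lambda}$ in the cross-Wigner transform, details the paper's terse proof leaves implicit.
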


\begin{proof}
In view of the definition (\ref{ambigter}) of the cross-ambiguity function we
have
\[
(\psi|\widehat{T}(z_{\lambda})\phi)_{L^{2}}=(2\pi\hbar)^{n}\operatorname*{Amb}%
(\psi,\phi)(z)
\]
hence (\ref{xx1}); formula (\ref{xx2}) follows using the relation
(\ref{fouwig2}) between Wigner and ambiguity transforms.
\end{proof}

Note that the definition we give here is slightly different from that usually
given in tome-frequency texts (e.g. \cite{Gro}); for a comparison of both
definitions see Chapter 8 in \cite{Birkbis}.

WH frames enjoy the property of symplectic covariance:

\begin{proposition}
\label{propmeta}If $\mathcal{G}(\mathcal{G}(\phi,\Lambda)$ is a WH frame, the
so is $\mathcal{G}(\mathcal{G}(\widehat{S}\phi,S\Lambda)$ for every
$\widehat{S}\in\operatorname*{Mp}(n)$ covering $S\in\in\operatorname*{Sp}(n)$.
In addition $\mathcal{G}(\mathcal{G}(\phi,\Lambda$ and $\mathcal{G}%
(\mathcal{G}(\widehat{S}\phi,S\Lambda)$ have same frame bounds.
\end{proposition}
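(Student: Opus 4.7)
The plan is to reduce the frame condition for $\mathcal{G}(\widehat{S}\phi, S\Lambda)$ to that of $\mathcal{G}(\phi,\Lambda)$ by exploiting two facts: the intertwining relation (\ref{4418}), namely $\widehat{S}\widehat{T}(z_0)\widehat{S}^{-1}=\widehat{T}(Sz_0)$, and the unitarity of every $\widehat{S}\in\operatorname*{Mp}(n)$ on $L^2(\mathbb{R}^n)$. First I would verify that $\widehat{S}\phi$ is a legitimate window: since $\widehat{S}$ is unitary, $\|\widehat{S}\phi\|_{L^2}=\|\phi\|_{L^2}=1$.

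Next, I would rewrite the inner products appearing in the would-be frame sum. Fix $\psi\in L^2(\mathbb{R}^n)$ and $z_\lambda\in\Lambda$; then using (\ref{4418}) one has $\widehat{T}(Sz_\lambda)\widehat{S}=\widehat{S}\widehat{T}(z_\lambda)$, so
\[
(\psi\,|\,\widehat{T}(Sz_\lambda)\widehat{S}\phi)_{L^2}=(\psi\,|\,\widehat{S}\widehat{T}(z_\lambda)\phi)_{L^2}=(\widehat{S}^{-1}\psi\,|\,\widehat{T}(z_\lambda)\phi)_{L^2},
\]
again because $\widehat{S}$ is unitary. Squaring moduli and summing over $z_\lambda\in\Lambda$ (noting that $z\mapsto Sz$ is a bijection $\Lambda\to S\Lambda$, so the resulting sum is exactly the Gabor sum for $\mathcal{G}(\widehat{S}\phi,S\Lambda)$ applied to $\psi$) yields
\[
\sum_{Sz_\lambda\in S\Lambda}\bigl|(\psi\,|\,\widehat{T}(Sz_\lambda)\widehat{S}\phi)_{L^2}\bigr|^2=\sum_{z_\lambda\in\Lambda}\bigl|(\widehat{S}^{-1}\psi\,|\,\widehat{T}(z_\lambda)\phi)_{L^2}\bigr|^2.
\]

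Now I would apply the frame hypothesis for $\mathcal{G}(\phi,\Lambda)$ to the vector $\widehat{S}^{-1}\psi$, obtaining
\[
a\,\|\widehat{S}^{-1}\psi\|_{L^2}^2\leq\sum_{z_\lambda\in\Lambda}\bigl|(\widehat{S}^{-1}\psi\,|\,\widehat{T}(z_\lambda)\phi)_{L^2}\bigr|^2\leq b\,\|\widehat{S}^{-1}\psi\|_{L^2}^2,
\]
and finish by invoking unitarity one more time to replace $\|\widehat{S}^{-1}\psi\|_{L^2}$ by $\|\psi\|_{L^2}$. Combining the two displays gives the frame inequality for $\mathcal{G}(\widehat{S}\phi,S\Lambda)$ with exactly the same constants $a,b$, which is the claim.

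There is essentially no obstacle: the argument is a clean unitary change of variables on the Hilbert-space side together with the corresponding symplectic change of variables on the phase-space lattice, glued together by (\ref{4418}). The only point requiring mild care is to make sure the reindexing $z_\lambda\mapsto Sz_\lambda$ is a bijection of $\Lambda$ onto $S\Lambda$, which is automatic since $S\in\operatorname*{Sp}(n)\subset GL(2n,\mathbb{R})$.
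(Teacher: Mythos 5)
Your proof is correct and takes essentially the same route as the paper's: both arguments rest solely on the unitarity of $\widehat{S}$ and the intertwining relation $\widehat{S}\widehat{T}(z)\widehat{S}^{-1}=\widehat{T}(Sz)$, combined with the reindexing $\Lambda\to S\Lambda$. The only (immaterial) difference is direction: you start from the Gabor sum of $\mathcal{G}(\widehat{S}\phi,S\Lambda)$ and reduce it to the frame inequality applied to $\widehat{S}^{-1}\psi$, whereas the paper applies the original frame inequality to $\widehat{S}\psi$ and transforms from there.
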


\begin{proof}
Assume the frame condition (\ref{frame}) is satisfied; then
\begin{equation}
a||(\widehat{S}\psi||_{L^{2}}^{2}\leq\sum_{z_{\lambda}\in\Lambda
}|((\widehat{S}\psi|\widehat{T}(z_{\lambda})\phi)|_{L^{2}}|^{2}\leq
b||(\widehat{S}\psi||_{L^{2}}^{2}|^{2}%
\end{equation}
that is, since $\widehat{S}$ is unitary,%
\begin{equation}
a||(\psi||_{L^{2}}^{2}\leq\sum_{z_{\lambda}\in\Lambda}|((\psi|\widehat{S}%
^{-1}\widehat{T}(z_{\lambda})\phi)|_{L^{2}}|^{2}\leq b||(\psi||_{L^{2}}%
^{2}|^{2}.
\end{equation}
Using the intertwining formula
\[
\widehat{S}^{-1}\widehat{T}(z)=\widehat{T}(S^{-1}z)\widehat{S}^{1}%
\]
this is equivalent to%
\begin{equation}
a||(\psi||_{L^{2}}^{2}\leq\sum_{z_{\lambda}\in\Lambda}|((\psi|\widehat{T}%
(S^{-1}z_{\lambda})\widehat{S}^{1})\phi)|_{L^{2}}|^{2}\leq b||(\psi||_{L^{2}%
}^{2}|^{2}%
\end{equation}
which can be rewritten%
\begin{equation}
a||(\psi||_{L^{2}}^{2}\leq\sum_{z_{\lambda}\in\Lambda}|((\psi|\widehat{T}%
(S^{-1}z_{\lambda})\widehat{S}^{1})\phi)|_{L^{2}}|^{2}\leq b||(\psi||_{L^{2}%
}^{2}|^{2}.
\end{equation}

\end{proof}

The most basic example of a WH frame is when we chosen as window to be the
standard Gaussian $\phi_{0}^{\hbar}$:

\begin{proposition}
Let $\Lambda_{\alpha\beta}=\alpha\mathbb{Z}^{n}\times\beta\mathbb{Z}^{n}$;
then $\mathcal{G}(\mathcal{G}(\phi_{0}^{\hbar},\Lambda_{\alpha\beta})$ is a WH
frame if and only if $\alpha_{j}\beta_{j}<2\pi\hbar$ for $1\leq j\leq n$.
\end{proposition}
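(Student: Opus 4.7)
The plan is to reduce to the one-dimensional Gaussian Gabor frame theorem of Lyubarskii and (independently) Seip--Wallst\'en, which in the $\hbar$-normalization used here asserts that $\mathcal{G}(\phi_{0}^{\hbar},\alpha\mathbb{Z}\times\beta\mathbb{Z})$ is a frame in $L^{2}(\mathbb{R})$ iff $\alpha\beta<2\pi\hbar$. The argument has two main stages: a tensor-product reduction to $n=1$, followed by invocation of this classical critical-density result.

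\textbf{Step 1 (Tensor factorization).} Both ingredients split over coordinates: the standard $n$-dimensional Gaussian factors as $\phi_{0}^{\hbar}(x)=\prod_{j=1}^{n}\phi_{0}^{\hbar}(x_{j})$ (product of one-dimensional standard Gaussians); the lattice is $\Lambda_{\alpha\beta}=\{(k_{1}\alpha_{1},\ldots,k_{n}\alpha_{n},m_{1}\beta_{1},\ldots,m_{n}\beta_{n})\}$, so as $(k,m)$ runs over $\mathbb{Z}^{2n}$ each coordinate pair $(k_{j}\alpha_{j},m_{j}\beta_{j})$ ranges independently through the 1D lattice $\alpha_{j}\mathbb{Z}\times\beta_{j}\mathbb{Z}$; and the Heisenberg operators $\widehat{T}$ commute across distinct coordinates and act separately. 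Consequently, on simple tensors $\psi=\psi_{1}\otimes\cdots\otimes\psi_{n}$ the frame sum in (\ref{frame}) factors as $\prod_{j}\sum_{k_{j},m_{j}}|(\psi_{j}|\widehat{T}(k_{j}\alpha_{j},m_{j}\beta_{j})\phi_{0}^{\hbar})|^{2}$, while $\|\psi\|_{L^{2}}^{2}=\prod_{j}\|\psi_{j}\|_{L^{2}}^{2}$. A standard argument using tensor-product orthonormal bases then shows that $\mathcal{G}(\phi_{0}^{\hbar},\Lambda_{\alpha\beta})$ is a WH frame if and only if each one-dimensional subsystem $\mathcal{G}(\phi_{0}^{\hbar},\alpha_{j}\mathbb{Z}\times\beta_{j}\mathbb{Z})$ is a WH frame, with the global frame bounds being the products of the coordinate-wise bounds.

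\textbf{Step 2 (The one-dimensional Gaussian Gabor theorem).} It remains to prove the $n=1$ statement. By Proposition \ref{propmeta} applied to the metaplectic dilation $\widehat{M}_{L}$ with $L=\sqrt{\hbar}$, the frame property depends only on the symplectic invariant $\alpha\beta/\hbar$, so it suffices to verify the claim in one fixed normalization (say $\hbar=1/(2\pi)$, where the threshold becomes $\alpha\beta<1$). The necessity of the \emph{weak} inequality $\alpha\beta\leq 2\pi\hbar$ is a consequence of the general Ramanathan--Steger density theorem, valid for arbitrary Gabor frames. The substantive content, and the main obstacle, is the failure of the frame inequality in the \emph{critical} case $\alpha\beta=2\pi\hbar$. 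The standard route is via the Bargmann transform, which intertwines the Gabor coefficients $\psi\mapsto(\psi|\widehat{T}(z_{\lambda})\phi_{0}^{\hbar})_{L^{2}}$ with point evaluations on the Bargmann--Fock space at a lattice $\Lambda_{\mathbb{C}}\subset\mathbb{C}$ of critical density one; the Weierstrass $\sigma$-function of $\Lambda_{\mathbb{C}}$ then furnishes a nonzero Fock-space element vanishing on $\Lambda_{\mathbb{C}}$, obstructing any lower frame bound and forcing strict inequality. For the detailed analytic argument we refer to \cite{Gro}.

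Combining Steps 1 and 2 yields the stated characterization. The only genuine difficulty is packaged into Step 2, where the critical case is the delicate Lyubarskii/Seip--Wallst\'en phenomenon; everything else is bookkeeping (tensor factorization and a metaplectic rescaling).
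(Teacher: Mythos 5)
Your proposal follows essentially the same route as the paper, whose entire proof is the one-line remark that the statement reduces to the case $n=1$ by tensor products, with the one-dimensional Gaussian result cited from the literature (\cite{Pfander,JGP} there, Lyubarskii/Seip--Wallst\'en via \cite{Gro} in your write-up). Your version simply fleshes out the tensor-product reduction and sketches the critical-density obstruction; the only small imprecision is that passing between different values of $\hbar$ is an elementary dilation of the whole setup rather than an application of Proposition \ref{propmeta}, whose metaplectic dilations preserve $\alpha\beta$ at fixed $\hbar$.
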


The proof, is reduced to the case $n=1$ using tensor products; see
\cite{Pfander,JGP}.

This result extends to generalized Gaussians as follows:

\begin{corollary}
If $\mathcal{G}(\mathcal{G}(\phi_{0}^{\hbar},\Lambda_{\alpha\beta})$ is a WH
frame, then every $\psi\in L^{2}(\mathbb{R}^{n})$ can be expanded as%
\begin{equation}
\psi=\sum_{z_{\lambda}\in V_{Y}M_{X^{1/2}}\Lambda_{\alpha\beta}}%
(\psi|\widehat{T}(z_{\lambda})\psi_{XY})\widehat{T}(z_{\lambda})\psi_{XY}.
\label{gaborexp2}%
\end{equation}

\end{corollary}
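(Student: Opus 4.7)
My plan is to obtain the claimed expansion by transporting the standard Gabor reconstruction formula through the metaplectic operator that sends $\phi_{0}^{\hbar}$ to $\psi_{XY}$. The entire argument rests on two ingredients already in place: the identification (\ref{fixy}) $\psi_{XY}=\widehat{S}_{XY}\phi_{0}^{\hbar}$ with $\widehat{S}_{XY}=\widehat{V}_{Y}\widehat{M}_{X^{1/2}}$ covering $S_{XY}=V_{Y}M_{X^{1/2}}\in\operatorname*{Sp}(n)$, together with the symplectic covariance of WH frames established in Proposition \ref{propmeta}.

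The first step is to apply Proposition \ref{propmeta}. Since $\mathcal{G}(\phi_{0}^{\hbar},\Lambda_{\alpha\beta})$ is a WH frame by hypothesis, its metaplectic image $\mathcal{G}(\widehat{S}_{XY}\phi_{0}^{\hbar},S_{XY}\Lambda_{\alpha\beta})$ is again a WH frame with the same frame bounds. Using (\ref{fixy}) to rewrite $\widehat{S}_{XY}\phi_{0}^{\hbar}=\psi_{XY}$ and the definition of $S_{XY}$ to rewrite $S_{XY}\Lambda_{\alpha\beta}=V_{Y}M_{X^{1/2}}\Lambda_{\alpha\beta}$, the transformed frame is exactly $\mathcal{G}(\psi_{XY},V_{Y}M_{X^{1/2}}\Lambda_{\alpha\beta})$.

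The second step is to feed this new frame into the general Gabor reconstruction formula (\ref{gaborexp}). For every $\psi\in L^{2}(\mathbb{R}^{n})$ one reads off
\[
\psi=\sum_{z_{\lambda}\in V_{Y}M_{X^{1/2}}\Lambda_{\alpha\beta}}(\psi|\widehat{T}(z_{\lambda})\psi_{XY})\,\widehat{T}(z_{\lambda})\psi_{XY},
\]
which is precisely (\ref{gaborexp2}).

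The only point that deserves care, rather than a real obstacle, concerns the reconstruction formula (\ref{gaborexp}) itself: for a non-tight frame one really needs a canonical dual window in the analysis coefficients, so (\ref{gaborexp}) must be interpreted with that convention. This causes no trouble here, because metaplectic operators are unitary and conjugate the frame operator of $\mathcal{G}(\phi_{0}^{\hbar},\Lambda_{\alpha\beta})$ to that of $\mathcal{G}(\psi_{XY},V_{Y}M_{X^{1/2}}\Lambda_{\alpha\beta})$; hence the canonical dual of the transformed frame is the $\widehat{S}_{XY}$-image of the canonical dual of the original frame, and the expansion transports verbatim. Thus the corollary is essentially a one-line consequence of metaplectic covariance once one recognises $\psi_{XY}$ as a metaplectic transform of $\phi_{0}^{\hbar}$.
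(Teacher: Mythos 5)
Your argument is correct and is essentially the paper's own proof: the paper likewise derives the corollary by combining the symplectic covariance of WH frames (Proposition \ref{propmeta}) with the reconstruction formula (\ref{gaborexp}), using $\psi_{XY}=\widehat{S}_{XY}\phi_{0}^{\hbar}$ to identify the transformed window. Your added remark about the canonical dual window for non-tight frames is a sensible clarification the paper leaves implicit, but it does not change the route.
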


\begin{proof}
It follows from Proposition \ref{propmeta} and formula (\ref{gaborexp}).
\end{proof}

\subsection{Weyl--Heisenberg multipliers}

See Benedetto and Pfander \cite{Benedetto} for a review.

\begin{definition}
Let $a=(a_{\lambda})_{\Lambda}\in\ell_{\mathbb{C}}^{1}(\Lambda)$ be a bounded
sequence of complex numbers and $\mathcal{G}(\mathcal{G}(\phi,\Lambda)$ is a
WH frame. Setting $\phi_{z_{\lambda}}=\widehat{T}(z_{\lambda})\phi$ \ we will
call the operator $\widehat{A}_{\mathrm{GM}}^{\Lambda,\phi}:L^{2}%
(\mathbb{R}^{n})\longrightarrow L^{1}(\mathbb{R}^{n})$ is defined by%
\begin{equation}
\widehat{A}_{\mathrm{GM}}^{\Lambda,\phi}\psi=\sum_{z_{\lambda}\in\Lambda
}a_{\lambda}(\psi|\phi_{z_{\lambda}})\phi_{z_{\lambda}}; \label{GM1}%
\end{equation}
where the Weyl--Heisenberg (or Gabor) multiplier with symbol $a$, lattice
$\Lambda$, and window $\phi$.
\end{definition}

We have (cf. (\ref{xx1}))%
\begin{equation}
\widehat{A}_{\mathrm{GM}}^{\Lambda,\phi}\psi=(2\pi\hbar)^{n}\sum_{z_{\lambda
}\in\Lambda}a_{\lambda}\operatorname*{Amb}(\psi,\phi)(z)\phi_{z_{\lambda}}.
\label{GM2}%
\end{equation}

That $\widehat{A}_{\mathrm{GM}}^{\Lambda,\phi}L^{2}(\mathbb{R}^{n}%
)\longrightarrow L^{1}(\mathbb{R}^{n})$ is easily verified; furthermore:

\begin{theorem}
Suppose $a=(a_{\lambda}(\in\ell_{\mathbb{C}}^{2}(\Lambda)$. Then
$\widehat{A}_{\mathrm{GM}}^{\Lambda,\phi}$ is a Hilbert--Schmidt operator and
is hence compact on $L^{2}(\mathbb{R}^{n})$and we have
\begin{equation}
||\widehat{A}_{\mathrm{GM}}^{\Lambda,\phi}\psi||_{L^{2}}\leq||a||_{\ell
^{\infty}(\Lambda)}b||\psi||_{L^{2}} \label{est1}%
\end{equation}
for every $\psi\in L^{2}(\mathbb{R}^{n})$.
\end{theorem}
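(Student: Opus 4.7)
The plan is to handle the two assertions separately: first the operator norm bound (which really uses only $a\in\ell^\infty$), then the Hilbert--Schmidt property (which is where $a\in\ell^2$ is needed).

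\textbf{Step 1 (Operator norm bound).} I would start by pairing $\widehat{A}_{\mathrm{GM}}^{\Lambda,\phi}\psi$ with an arbitrary $\chi\in L^2(\mathbb{R}^n)$, so that from the definition (\ref{GM1}),
\[
(\widehat{A}_{\mathrm{GM}}^{\Lambda,\phi}\psi\,|\,\chi)_{L^2}
=\sum_{z_\lambda\in\Lambda}a_\lambda\,(\psi|\phi_{z_\lambda})_{L^2}\,(\phi_{z_\lambda}|\chi)_{L^2}.
\]
Pulling out $\|a\|_{\ell^\infty}$ in absolute value and applying the Cauchy--Schwarz inequality to the remaining sum gives
\[
\bigl|(\widehat{A}_{\mathrm{GM}}^{\Lambda,\phi}\psi\,|\,\chi)_{L^2}\bigr|
\leq \|a\|_{\ell^\infty(\Lambda)}\!\left(\sum_\lambda|(\psi|\phi_{z_\lambda})|^2\right)^{1/2}\!\left(\sum_\lambda|(\chi|\phi_{z_\lambda})|^2\right)^{1/2}.
\]
Now I invoke the frame upper bound in (\ref{frame}) twice, once on $\psi$ and once on $\chi$: each square-root is $\leq\sqrt{b}\,\|\cdot\|_{L^2}$. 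Taking the supremum over $\chi$ with $\|\chi\|_{L^2}=1$ yields the claimed estimate (\ref{est1}). Note this step alone only needs $a\in\ell^\infty$, which is automatic from $a\in\ell^2$.

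\textbf{Step 2 (Hilbert--Schmidt).} Fix an orthonormal basis $\{e_n\}$ of $L^2(\mathbb{R}^n)$ and compute
\[
\|\widehat{A}_{\mathrm{GM}}^{\Lambda,\phi}\|_{\mathrm{HS}}^{2}
=\sum_n \|\widehat{A}_{\mathrm{GM}}^{\Lambda,\phi}e_n\|_{L^2}^{2}
=\sum_{\lambda,\mu}a_\lambda\overline{a_\mu}\,(\phi_{z_\lambda}|\phi_{z_\mu})_{L^2}\sum_n(e_n|\phi_{z_\lambda})\overline{(e_n|\phi_{z_\mu})}.
\]
Parseval's identity collapses the inner sum to $(\phi_{z_\mu}|\phi_{z_\lambda})$, so
\[
\|\widehat{A}_{\mathrm{GM}}^{\Lambda,\phi}\|_{\mathrm{HS}}^{2}
=\sum_{\lambda,\mu}a_\lambda\overline{a_\mu}\,|(\phi_{z_\lambda}|\phi_{z_\mu})|^{2}.
\]
Applying Cauchy--Schwarz in the $(\lambda,\mu)$ double sum, then using the frame upper bound for the fixed vector $\phi_{z_\lambda}$ (of norm $1$) to get $\sum_\mu|(\phi_{z_\lambda}|\phi_{z_\mu})|^2\leq b$, one obtains
\[
\|\widehat{A}_{\mathrm{GM}}^{\Lambda,\phi}\|_{\mathrm{HS}}^{2}
\leq b\,\|a\|_{\ell^2(\Lambda)}^{2}.
\]
This makes $\widehat{A}_{\mathrm{GM}}^{\Lambda,\phi}$ a Hilbert--Schmidt operator and therefore compact on $L^2(\mathbb{R}^n)$.

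\textbf{Expected obstacle.} The calculations are short, and the conceptual work is really the Schur/frame-type estimate $\sum_\mu|(\phi_{z_\lambda}|\phi_{z_\mu})|^2\leq b$, which is just the frame condition applied to the unit vector $\phi_{z_\lambda}$. The one thing that requires care is the justification of the interchange of the sum over $n$ with the double sum over $(\lambda,\mu)$ in Step 2; this is legitimate by Tonelli once one checks that all the quantities involved are non-negative (after pulling out $|a_\lambda\overline{a_\mu}|$), and the a posteriori bound $b\|a\|_{\ell^2}^2<\infty$ then retroactively justifies Fubini for the signed sum. Everything else is routine.
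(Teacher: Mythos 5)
Your Step 1 is correct and is essentially what the paper intends by ``follows from the definition using the Cauchy--Schwarz inequality'': pair with a test vector, pull out $\|a\|_{\ell^\infty}$, apply Cauchy--Schwarz to the coefficient sequences, and use the upper frame bound in (\ref{frame}) twice to get the constant $b$ in (\ref{est1}). For the Hilbert--Schmidt part, however, you take a genuinely different route from the paper. The paper factors $\widehat{A}_{\mathrm{GM}}^{\Lambda,\phi}=BM_{a}C$ through $\ell^{2}_{\mathbb{C}}(\Lambda)$, where $C$ is the analysis operator, $B$ the synthesis operator (both bounded by $\sqrt{b}$ thanks to the frame condition), and $M_{a}$ is the diagonal multiplication operator, which is Hilbert--Schmidt on $\ell^{2}(\Lambda)$ precisely when $a\in\ell^{2}(\Lambda)$, with $\|M_{a}\|_{\mathrm{HS}}=\|a\|_{\ell^{2}}$; the ideal property of the Hilbert--Schmidt class then finishes the proof and yields $\|\widehat{A}_{\mathrm{GM}}^{\Lambda,\phi}\|_{\mathrm{HS}}\leq b\,\|a\|_{\ell^{2}}$. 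You instead compute $\sum_{n}\|\widehat{A}_{\mathrm{GM}}^{\Lambda,\phi}e_{n}\|^{2}$ directly, collapse the basis sum by Parseval to the Gram entries $|(\phi_{z_\lambda}|\phi_{z_\mu})|^{2}$, and control the resulting double sum by a Schur-type estimate from the frame bound. Both arguments land on the same quantitative bound $b\|a\|_{\ell^{2}}^{2}$ for the squared Hilbert--Schmidt norm; the factorization is cleaner and sidesteps all convergence issues, while your computation is more explicit about where the Gram matrix of the frame enters.

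One step in your Step 2 would not survive scrutiny as written: the Tonelli/Fubini justification. After you replace every term by its absolute value, the sum over $n$ no longer collapses by Parseval; Cauchy--Schwarz only gives $\sum_{n}|(e_{n}|\phi_{z_\lambda})||(e_{n}|\phi_{z_\mu})|\leq 1$, so the nonnegative majorant you obtain is $\sum_{\lambda,\mu}|a_{\lambda}||a_{\mu}||(\phi_{z_\lambda}|\phi_{z_\mu})|$, not the squared-Gram sum. Finiteness of this majorant would require $\ell^{1}$-type control of the rows of the Gram matrix (or boundedness of its entrywise absolute value on $\ell^{2}$), which the frame condition does not supply for a general $L^{2}$ window; so the ``a posteriori'' bound cannot retroactively license the interchange. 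The clean repair within your approach is to run the computation for the finite-rank truncations $\widehat{A}_{F}=\sum_{z_\lambda\in F}a_{\lambda}(\,\cdot\,|\phi_{z_\lambda})\phi_{z_\lambda}$ over finite $F\subset\Lambda$, where all interchanges are trivial, obtain $\|\widehat{A}_{F}-\widehat{A}_{F'}\|_{\mathrm{HS}}^{2}\leq b\,\|a\|_{\ell^{2}(F\triangle F')}^{2}$, and conclude that the truncations form a Cauchy net in Hilbert--Schmidt norm converging to $\widehat{A}_{\mathrm{GM}}^{\Lambda,\phi}$; alternatively, adopt the paper's factorization, which avoids the issue entirely.
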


\begin{proof}
We have $\widehat{A}_{\mathrm{GM}}^{\Lambda,\phi}=BM_{a}C$ where%
\begin{gather*}
C:\psi\ni L^{2}(\mathbb{R}^{n})\longrightarrow\ell_{\mathbb{C}}^{2}%
(\Lambda)\text{ \ },\text{ \ \ }(\Lambda),C(\psi)=((\psi|\widehat{T}%
(z_{\lambda}))_{\lambda\in\Lambda}\\
M_{a}:\ell_{\mathbb{C}}^{2}(\Lambda)\longrightarrow\ell_{\mathbb{C}}%
^{2}(\Lambda)\text{ \ },\text{ \ \ }M_{a}((\psi|\widehat{T}(z_{\lambda
}))_{\lambda\in\Lambda}=(((a_{\lambda}\psi|\widehat{T}(z_{\lambda}%
))_{\lambda\in\Lambda}\\
B:\ell_{\mathbb{C}}^{2}(\Lambda)\longrightarrow L^{2}(\mathbb{R}^{n})\text{
\ },\text{ \ \ }B(((a_{\lambda}\psi|\widehat{T}(z_{\lambda}))_{\lambda
\in\Lambda}=\widehat{A}_{\mathrm{GM}}^{\Lambda,\phi}\psi.
\end{gather*}
The operator $M_{a}$ is Hilbert--Schmidt and $B,C$ are bounded. The claim
follows. The estimate (\ref{est1}) follows from the definition of
$\widehat{A}_{\mathrm{GM}}^{\Lambda,\phi}$ using the Cauchy--Schwarz inequality.
\end{proof}

WH multipliers also qualify as a density operators for some mixed agates
obtained by considering a generic state $\phi\in L^{2}(\mathbb{R}^{n})$
located at any lattice node with probability $\lambda_{\mu}$:

\begin{proposition}
Let $(\lambda_{\mu})_{\mu\in\Lambda})$ be a sequence of $\geq0$ numbers such
that $\sum_{z_{\lambda}\in\Lambda}\lambda_{\mu}$. The operator%
\[
\widehat{A}_{\mathrm{GM}}^{\Lambda,\phi}=\sum_{z_{\lambda}\in\Lambda}%
\lambda_{\mu}(\psi|\phi_{z_{\lambda}})\phi_{z_{\lambda}})
\]
is a density operator, i.e. it is positive semi-definite and has trace one.
\end{proposition}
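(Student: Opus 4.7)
The plan is to exploit the fact that the sum defining $\widehat{A}_{\mathrm{GM}}^{\Lambda,\phi}$ is a convex combination of rank-one projectors, so that the two properties (positivity and trace one) follow from the corresponding properties of each summand. I read the hypothesis on the sequence as $\sum_{\mu\in\Lambda}\lambda_\mu = 1$ (filling in the evident typo).

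First I would observe that, since the Heisenberg displacement $\widehat{T}(z_\mu)$ is unitary on $L^2(\mathbb{R}^n)$ and $\|\phi\|_{L^2}=1$, each translated window $\phi_{z_\mu}=\widehat{T}(z_\mu)\phi$ is again a unit vector. Consequently each summand
\[
\widehat{P}_\mu\psi = (\psi|\phi_{z_\mu})_{L^2}\,\phi_{z_\mu}
\]
is the orthogonal projector onto the ray $\mathbb{C}\phi_{z_\mu}$. In particular $\widehat{P}_\mu\geq 0$, $\|\widehat{P}_\mu\|_{\mathcal{L}(L^2)}=1$, and $\operatorname{tr}(\widehat{P}_\mu)=\|\phi_{z_\mu}\|_{L^2}^2=1$.

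Next I would address convergence of the series $\sum_\mu \lambda_\mu \widehat{P}_\mu$. Since the $\widehat{P}_\mu$ are positive trace-class operators with $\operatorname{tr}(\widehat{P}_\mu)=1$ and the weights satisfy $\lambda_\mu\geq 0$ with $\sum_\mu\lambda_\mu=1$, the partial sums $\widehat{A}_N=\sum_{|\mu|\le N}\lambda_\mu\widehat{P}_\mu$ form a Cauchy sequence in trace norm:
\[
\|\widehat{A}_N-\widehat{A}_M\|_{\mathrm{tr}} \leq \sum_{M<|\mu|\le N}\lambda_\mu\operatorname{tr}(\widehat{P}_\mu) = \sum_{M<|\mu|\le N}\lambda_\mu \xrightarrow[M,N\to\infty]{} 0.
\]
Hence $\widehat{A}_{\mathrm{GM}}^{\Lambda,\phi}$ is well-defined as a trace-class operator on $L^2(\mathbb{R}^n)$.

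Finally I would establish the two defining properties. Positivity: for every $\psi\in L^2(\mathbb{R}^n)$,
\[
(\widehat{A}_{\mathrm{GM}}^{\Lambda,\phi}\psi\,|\,\psi)_{L^2} = \sum_{z_\mu\in\Lambda}\lambda_\mu |(\psi|\phi_{z_\mu})_{L^2}|^2 \geq 0,
\]
since every term is non-negative. Trace one: using linearity and continuity of the trace in trace norm,
\[
\operatorname{tr}(\widehat{A}_{\mathrm{GM}}^{\Lambda,\phi}) = \sum_{z_\mu\in\Lambda}\lambda_\mu\operatorname{tr}(\widehat{P}_\mu) = \sum_{z_\mu\in\Lambda}\lambda_\mu = 1.
\]
The only real subtlety is the unconditional summability/convergence of the operator series; all of the algebraic content is trivial once one recognises the $\widehat{P}_\mu$ as rank-one orthogonal projectors, so I do not expect any significant obstacle.
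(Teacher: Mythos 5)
Your proof is correct and follows the natural (and essentially only) route: recognizing each summand as a rank-one orthogonal projector onto $\mathbb{C}\phi_{z_\mu}$, establishing trace-norm convergence of the convex combination, and then reading off positivity and $\operatorname{tr}=1$ termwise. The paper's own proof consists solely of the remark that positivity ``is clear'' and omits the convergence issue and the trace computation entirely, so your argument supplies exactly the details the paper leaves out, with no divergence in approach.
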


\begin{proof}
That $\widehat{A}_{\mathrm{GM}}^{\Lambda,\phi}\geq0$ is clear
\end{proof}

\section{\label{sec4}Toeplitz and anti-Wick Operators}

WH multipliers are a discretized version of the more general notion Toeplitz
operators (which are extensions of anti-Wick) operator \cite{sh87}).The are a
class of particular Weyl operators obtained by smoothing the symbol with an
adequate Wigner transform (that of the standard Gaussian) and are related to a
certain Cohen class \cite{Gro,Birkbis}. For the study of Toeplitz operaors
from the functional analytical point of view see the work of Cordero
\textit{et }al. \cite{Elena1,Elena2,Elena3} and the references therein.

\subsection{Definition and relation with Weyl operators}

Consider the WH multiplier (\ref{GM1}), defined by
\[
\widehat{A}_{\mathrm{GM}}^{\Lambda,\phi}\psi=\sum_{z_{\lambda}\in\Lambda
}a_{\lambda}(\psi|\phi_{z_{\lambda}})\phi_{z_{\lambda}}\text{, \ }(a_{\lambda
})\in\ell_{\mathbb{C}}^{1}(\Lambda);
\]
setting%
\[
a(z)=\sum_{z_{\lambda}\in\Lambda}a_{\lambda}\delta(z-z_{\lambda})
\]
we can rewrites this definition as%
\[
\widehat{A}_{\mathrm{GM}}^{\Lambda,\phi}\psi=\int_{\mathbb{R}^{2n}}%
a(z)(\psi|\phi_{z})_{L^{2}}\phi_{z}dz.
\]

We are following rather closely the description in Shubin \cite{sh87} and our
review in \cite{Birkbis}.

\begin{definition}
Let $a\in S_{0}(\mathbb{R}^{n})$; $||\phi||_{L^{2}}=1$. The Toeplitz operator)
$\widehat{A}_{\text{TO}}^{\phi}=\operatorname*{Op}_{\text{TO }}^{\phi}(a)$
with symbol $a$ and window $\phi$ it $a$ is defined, for $\psi\in
L^{2}(\mathbb{R}^{n})$, by%
\begin{equation}
\widehat{A}_{\text{TO TO }}^{\phi}\psi=\left(  \frac{1}{2\pi\hbar}\right)
^{n}\int_{\mathbb{R}^{2n}}a(z)(\psi|\phi_{z})\phi_{z})_{z}\phi_{z}dz.
\label{defanti8}%
\end{equation}
Equivalently,
\begin{equation}
\widehat{A}_{\text{TO }}^{\phi}\psi=\int_{\mathbb{R}^{2n}}%
a(z)\operatorname*{Amb}(\psi,\phi)(z)\phi_{z}dz. \label{altwick11}%
\end{equation}

\end{definition}

In Dirac's bra-ket notation we can rewrite (\ref{defanti8}( as%
\begin{equation}
\operatorname*{Op}\nolimits_{\text{TO }}^{\phi}(a)=\frac{1}{(2\pi\hbar)^{n}%
}\int_{\mathbb{R}^{2n}}a(z)\,|\phi_{z}\rangle\langle\phi_{z}|\,dz.
\label{Dirac}%
\end{equation}

When $\phi=\phi_{0}$ (the standard Gaussian) we call it the anti-Wick operator
\cite{sh87} with symbol $a$.

Toeplitz operators extend to wider classes of symbols. This follows from the
following results summarize the properties of Toeplitz operators:

\begin{theorem}
\label{ThmTop}Let $\phi\in S_{0}(\mathbb{R}^{n})$. (i) The Toeplitz operator
$\widehat{A}_{\text{TO }}^{\phi}$is the Weyl operator with symbol
\begin{equation}
\widehat{A}_{\text{TO }}^{\phi}=\operatorname*{Op}\nolimits_{\mathrm{Weyl}%
}(a\ast W\phi.). \label{aww8}%
\end{equation}
(ii) When $a=\int1$ then $\widehat{A}_{\text{TO }}^{\phi}$ is the identity:
$_{\text{TO }}^{\phi}(1)=I$; (iiii) $\ \widehat{A}_{\text{TO }}^{\phi}$ is
Hilbert--Schmidt if $a\in L^{2}(\mathbb{R}^{n})$. (iv) If $a\geq0$ then
$\widehat{A}_{\text{TO }}^{\phi}$ is positive semidefinite and self adjoint.
(v) Let $\widehat{S}\in\operatorname*{Mp}(n)$ cover $S\in\operatorname*{Sp}%
(n)$. We have
\begin{equation}
\operatorname*{Op}\nolimits_{\text{TO }}^{\widehat{S}\phi}(a)=\widehat{S}%
\operatorname*{Op}\nolimits_{\text{TO }}^{\phi}(a\circ S))\widehat{S}^{-1}.
\label{opto}%
\end{equation}

\end{theorem}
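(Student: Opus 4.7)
The proof naturally splits into establishing (i) first, from which (ii) and (iii) follow as essentially immediate corollaries, and then handling the independent statements (iv) and (v). My plan is to pair everything with an arbitrary test function and use the characterization of Weyl operators via the cross-Wigner transform from formula (\ref{formula2}).

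For part (i), I would compute the sesquilinear form $(\widehat{A}_{\text{TO}}^{\phi}\psi|\psi')_{L^2}$ directly from the definition (\ref{defanti8}), obtaining
\[
(\widehat{A}_{\text{TO}}^{\phi}\psi|\psi')_{L^2}=\tfrac{1}{(2\pi\hbar)^{n}}\int_{\mathbb{R}^{2n}}a(z)\,(\psi|\phi_{z})\,(\phi_{z}|\psi')\,dz.
\]
The key step is to recognize the product $(\psi|\widehat{T}(z)\phi)\,\overline{(\psi'|\widehat{T}(z)\phi)}$ as $(2\pi\hbar)^{n}$ times a convolution. Using the link (\ref{ambigter}) between $(\widehat{T}(z)\psi|\phi)$ and the cross-ambiguity function, together with the Fourier pairing (\ref{wsft}) and the convolution identity (\ref{wicon}), one sees that this product equals $(2\pi\hbar)^{n}\,(W(\psi,\psi')\ast W\phi)(z)$ up to a change of variable inside the convolution. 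Substituting and using Fubini gives
\[
(\widehat{A}_{\text{TO}}^{\phi}\psi|\psi')_{L^2}=\int_{\mathbb{R}^{2n}}(a\ast W\phi)(z)\,W(\psi,\psi')(z)\,dz,
\]
and comparing with (\ref{formula2}) identifies the Weyl symbol as $a\ast W\phi$. This is the only computational hurdle; the rest is bookkeeping.

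Parts (ii) and (iii) then follow almost at once from (i). For (ii), when $a\equiv 1$ we have $a\ast W\phi(z)=\int W\phi(z')dz'=\|\phi\|_{L^{2}}^{2}=1$ by the marginal property (\ref{marginal}), so $\widehat{A}_{\text{TO}}^{\phi}$ is the Weyl operator with symbol $1$, which is the identity. For (iii), the fact that $\phi\in S_{0}(\mathbb{R}^{n})$ gives $W\phi\in L^{1}(\mathbb{R}^{2n})$; combined with $a\in L^{2}$, Young's inequality yields $a\ast W\phi\in L^{2}(\mathbb{R}^{2n})$, and a Weyl operator with $L^{2}$-symbol is a standard example of a Hilbert--Schmidt operator.

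For (iv), the positivity is immediate from the Dirac-form representation (\ref{Dirac}): when $a\geq 0$,
\[
(\widehat{A}_{\text{TO}}^{\phi}\psi|\psi)_{L^{2}}=\tfrac{1}{(2\pi\hbar)^{n}}\int_{\mathbb{R}^{2n}}a(z)\,|(\psi|\phi_{z})|^{2}\,dz\geq 0,
\]
and self-adjointness is read off the same formula since each rank-one projector $|\phi_{z}\rangle\langle\phi_{z}|$ is self-adjoint and $a$ is real. Finally for (v), I would plug $\widehat{S}\phi$ into the defining integral and substitute $\phi_{z}=\widehat{S}^{-1}(\widehat{S}\phi)_{Sz}\widehat{S}$ (which comes from the intertwining $\widehat{S}\widehat{T}(z)=\widehat{T}(Sz)\widehat{S}$ used already in (\ref{4418})); the change of variable $z\mapsto Sz$ together with the fact that symplectic transformations preserve Lebesgue measure yields the stated conjugation formula. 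The main obstacle in the whole argument is really just the polarized convolution identity used in (i); everything else is a routine consequence of it and of the earlier lemmas.
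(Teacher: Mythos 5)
Your proposal is correct, but part (i) is proved by a genuinely different route from the paper's. The paper computes the integral kernel of $\widehat{A}_{\text{TO}}^{\phi}$, namely $K(x,y)=(2\pi\hbar)^{-n}\int a(z_{0})\phi_{z_{0}}(x)\overline{\phi_{z_{0}}(y)}\,dz_{0}$, and then feeds it into the kernel-to-symbol formula (\ref{AK9}); the inner $y$-integral produces $W\phi_{z_{0}}(z)=W\phi(z-z_{0})$ by translation covariance, and the symbol $a\ast W\phi$ drops out directly. You instead identify the operator through its matrix elements $(\widehat{A}_{\text{TO}}^{\phi}\psi|\psi')$ and compare with (\ref{formula2}); the needed identity
\[
(\psi|\widehat{T}(z_{0})\phi)\,\overline{(\psi'|\widehat{T}(z_{0})\phi)}=(2\pi\hbar)^{n}\int_{\mathbb{R}^{2n}}W(\psi,\psi')(z)\,W\phi(z-z_{0})\,dz
\]
is most cleanly obtained from the Moyal identity (\ref{wigwig}) applied to the pair $(W(\psi,\psi'),W(\widehat{T}(z_{0})\phi))$ together with the covariance (\ref{trans1}), rather than from the unpolarized formula (\ref{wicon}) you cite; also note that, read literally as a convolution in $z_{0}$, the right-hand side is $W(\psi,\psi')\ast(W\phi)^{\vee}$, and the reflection only cancels after the Fubini exchange with $a$ --- your hedge ``up to a change of variable'' is doing real work there and deserves to be spelled out. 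The weak formulation buys you a derivation that never touches the kernel and extends painlessly to distributional symbols, at the price of having to know that the Weyl symbol is determined by the quadratic form; the paper's kernel computation is more direct. Your treatments of (ii)--(v) coincide with the paper's, and your use of Young's inequality ($L^{2}\ast L^{1}\subset L^{2}$) in (iii) is actually cleaner than the paper's somewhat garbled justification of square-integrability of $a\ast W\phi$.
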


\begin{proof}
(i) The kernel of $\widehat{A}_{\text{TO }}^{\phi}$ is%
\begin{equation}
K(x,y)=\left(  \frac{1}{2\pi\hbar}\right)  ^{n}\int_{\mathbb{R}^{2n}}%
a(z_{0}))\phi_{z_{0}}(x))\overline{\phi_{z_{0}}(y)}dz_{0} \label{Kernela}%
\end{equation}
and in view of formula /\ref{AK9}) deforming the Weyl symbol we have, setting
$z=(x,p)$.%
\begin{align*}
b(z)  &  =\left(  \frac{1}{2\pi\hbar}\right)  ^{n}\int_{\mathbb{R}^{n}%
}e^{-\frac{i}{\hbar}p\cdot y}K(x+\tfrac{1}{2}y,x-\tfrac{1}{2}y)dy\\
&  =\left(  \frac{1}{2\pi\hbar}\right)  ^{n}\int_{\mathbb{R}^{2n}}%
a(z_{0}))\left(  \int_{\mathbb{R}^{n}}e^{-\frac{i}{\hbar}p\cdot y}\phi_{z_{0}%
}(x+\tfrac{1}{2}y)\overline{\phi_{z_{0}}(x-\tfrac{1}{2}y)}dy\right)  dz_{0}\\
&  =(\int_{\mathbb{R}^{2n}}a(z_{0}))W\phi_{z_{0}}(z)dydz_{0}\\
&  =\int_{\mathbb{R}^{2n}}a(z_{0}))W\phi(z-z_{0}dz_{0}%
\end{align*}
hence formula (\ref{aww8}). (ii) If $a=1$ then
\[
b=1\ast W\phi)=\int_{\mathbb{R}^{2n}}W\phi(z)dz=1.
\]
(iii) Set $b=a\ast W\phi.$ Since $\phi\in S_{0}(\mathbb{R}^{n})$ we have
$W\phi\in L^{1}(\mathbb{R}^{2n})$ hence $a\ast W\phi.\in\in L^{1}%
(\mathbb{R}^{2n})$ and $b$ is square integrable which is a necessary and
sufficient condition for $\widehat{A}_{\text{TO }}^{\phi}$ to be a
Hilbert--Schmidt operator. (iv) That $\widehat{A}_{\text{TO }}^{\phi}$ is elf
adjoint follows from the fact that its Weyl symbol $b$ is real since
$^{n}a\ast W\phi$ is real. To show that $\widehat{A}_{\text{TO }}^{\phi}\geq0$
if $a\geq$ we proceed as follows: for $\psi\in L^{2}(\mathbb{R}^{n})$ we have%
\begin{align*}
(\widehat{A}_{\text{TO }}^{\phi}\psi|\psi)_{L^{2}}  &  =\left(  \frac{1}%
{2\pi\hbar}\right)  ^{n}\int_{\mathbb{R}^{2n}}a(z)(\psi|\phi_{z})\phi
_{z})(\phi_{z}|\psi)_{L^{2}}dz\\
&  =\left(  \frac{1}{2\pi\hbar}\right)  ^{n}\int_{\mathbb{R}^{2n}}%
a(z)|(\psi|\phi_{z})\phi_{z})|^{2}dz\geq0
\end{align*}
hence $\widehat{A}_{\text{TO }}^{\phi}\geq0$.. (v): We have
\begin{align*}
\operatorname*{Op}\nolimits_{\text{TO }}^{\widehat{S}\phi}(a)\psi &  =\left(
\frac{1}{2\pi\hbar}\right)  ^{n}\int_{\mathbb{R}^{2n}}a(z)(\psi|\widehat{T}%
(z)\widehat{S}\phi)_{L^{2}}\widehat{T}(z)\widehat{S}\phi dz\\
&  =\left(  \frac{1}{2\pi\hbar}\right)  ^{n}\int_{\mathbb{R}^{2n}}%
a(z)(\psi|\widehat{S}\widehat{T}(S^{-1}z)\phi)_{L^{2}}\widehat{S}%
\widehat{T}(S^{-1}z)\phi)dz\\
&  =\left(  \frac{1}{2\pi\hbar}\right)  ^{n}\int_{\mathbb{R}^{2n}%
}a(z)(\widehat{S}^{-1}\psi|\widehat{T}(S^{-1}z)\phi)_{L^{2}}\widehat{S}%
\widehat{T}(S^{-1}z)\phi)dz\\
&  =\left(  \frac{1}{2\pi\hbar}\right)  ^{n}\int_{\mathbb{R}^{2n}}%
a(Sz^{\prime})(\widehat{S}^{-1}\psi|\widehat{T}(z^{\prime})\phi)_{L^{2}%
}\widehat{T}(z^{\prime})\widehat{S}\phi)_{L^{2}}\phi dz^{\prime}\\
&  =\widehat{S}\operatorname*{Op}\nolimits_{\text{TO }}^{\phi}(a\circ
S)\widehat{S}^{-1}\psi
\end{align*}
which establishes the covariance (\ref{opto}).
\end{proof}

\subsection{Blob Quantization}

Theorem \ref{ThmTop} has the following consequence:

\begin{corollary}
Let $\widehat{S}_{XY}=\widehat{V}_{Y}\widehat{M}_{X^{1/2},0}$ and
$S_{XY}=V_{Y}M_{X^{1/2},0}$. Let $\phi_{0}$ be the standard Gaussian. We have%
\begin{equation}
\operatorname*{Op}\nolimits_{\text{TO }}^{\psi_{XY}}(a)=\widehat{S}%
_{XY}\operatorname*{Op}\nolimits_{\text{TO }}^{\phi_{0}}(a\circ S_{XY}%
)\widehat{S}_{XY}^{-1}. \label{opwick}%
\end{equation}

\end{corollary}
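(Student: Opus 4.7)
The plan is to recognize this corollary as an immediate specialization of the symplectic covariance property (v) of Toeplitz operators established in Theorem \ref{ThmTop}. That theorem states that for any $\widehat{S}\in\operatorname*{Mp}(n)$ covering $S\in\operatorname*{Sp}(n)$ and any admissible window $\phi$, one has $\operatorname*{Op}\nolimits_{\text{TO}}^{\widehat{S}\phi}(a)=\widehat{S}\operatorname*{Op}\nolimits_{\text{TO}}^{\phi}(a\circ S)\widehat{S}^{-1}$. The content of the corollary is obtained by a single judicious substitution, with no further calculation required.

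First I would set $\widehat{S}=\widehat{S}_{XY}=\widehat{V}_{Y}\widehat{M}_{X^{1/2},0}$, which covers $S_{XY}=V_{Y}M_{X^{1/2},0}\in\operatorname*{Sp}(n)$ by the definition of $\pi_{\operatorname*{Mp}}$ recalled in Section \ref{secsymp}. Then I would take $\phi=\phi_{0}^{\hbar}$, the standard coherent state. By formula (\ref{fixy}), the transformed window $\widehat{S}_{XY}\phi_{0}^{\hbar}=\widehat{V}_{Y}\widehat{M}_{X^{1/2}}\phi_{0}^{\hbar}$ is precisely the generalized Gaussian $\psi_{XY}$. Plugging these choices into the covariance identity of Theorem \ref{ThmTop}(v) yields $\operatorname*{Op}\nolimits_{\text{TO}}^{\psi_{XY}}(a)=\widehat{S}_{XY}\operatorname*{Op}\nolimits_{\text{TO}}^{\phi_{0}}(a\circ S_{XY})\widehat{S}_{XY}^{-1}$, which is exactly (\ref{opwick}).

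The only verification worth noting is that $\psi_{XY}$ is an admissible window, i.e.\ $\psi_{XY}\in S_{0}(\mathbb{R}^{n})$, so that the hypothesis of Theorem \ref{ThmTop}(v) applies. This follows from Proposition \ref{169}(i): the standard Gaussian $\phi_{0}^{\hbar}$ lies in $\mathcal{S}(\mathbb{R}^{n})\subset S_{0}(\mathbb{R}^{n})$ by the inclusions (\ref{inclo}), and $S_{0}(\mathbb{R}^{n})$ is invariant under the action of the metaplectic group, so $\psi_{XY}=\widehat{S}_{XY}\phi_{0}^{\hbar}\in S_{0}(\mathbb{R}^{n})$. There is no real obstacle here: the corollary is a pure restatement of symplectic covariance in the special case where the Toeplitz window is itself a squeezed coherent state obtained from $\phi_{0}^{\hbar}$ via the metaplectic operator encoding the quantum blob; the computational content has already been absorbed in the proof of Theorem \ref{ThmTop}(v) through the intertwining $\widehat{S}\widehat{T}(z)=\widehat{T}(Sz)\widehat{S}$.
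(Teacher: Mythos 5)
Your proposal is correct and follows exactly the paper's own argument: the corollary is deduced by substituting $\widehat{S}=\widehat{S}_{XY}$ and $\phi=\phi_{0}$ into the covariance identity (\ref{opto}) of Theorem \ref{ThmTop}(v), using $\psi_{XY}=\widehat{S}_{XY}\phi_{0}$ from (\ref{fixy}). Your additional check that $\psi_{XY}\in S_{0}(\mathbb{R}^{n})$ via Proposition \ref{169}(i) is a small but welcome refinement that the paper leaves implicit.
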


\begin{proof}
It immediately follows from /\ref{opto}) since $\psi_{XY}=\widehat{S}_{XY}%
\phi_{0}$.
\end{proof}

Observe that, for given symbol $a,$ the operators $\operatorname*{Op}%
\nolimits_{\text{TO }}^{\psi_{XY}}(a)$ are in bijective correspondence with
quantum blobs in view of the bijection (\ref{gamma})
\begin{equation}
\Gamma:\operatorname*{Blob}(n)\longrightarrow\operatorname*{Gauss}(n)
\end{equation}
established in Prop. \ref{Prop2}. This motivates quite naturally the following definition:

\begin{definition}
We will call operator $\operatorname*{Op}\nolimits_{\text{TO }}^{XY}%
a)=\operatorname*{Op}\nolimits_{\text{TO }}^{\psi_{XY}}(a)$ is the "blob
operator" with symbol $a.$associated with quantum blob $S_{XY}B^{2n}%
(\sqrt{\hbar})$.
\end{definition}

The symplectic covariance property (\ref{opto}) considerably simplifies for
blob operators provided on uses the canonical group defined in Section
\ref{sec1}:

\begin{proposition}
Let $(S_{t}^{XY})$ be the canonical group of $\psi_{XY}$. We have, for a$\in
S_{0}(\mathbb{R}^{n})$,%
\begin{equation}
\operatorname*{Op}\nolimits_{\text{TO }}^{\psi_{XY}}(a)=\widehat{S}_{t}%
^{XY}\operatorname*{Op}\nolimits_{\text{TO }}^{\psi_{XY}}(a\circ S_{t}%
^{XY})\widehat{S}_{-t}^{XY}%
\end{equation}

\end{proposition}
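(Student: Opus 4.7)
The plan is to reduce this statement to the general symplectic covariance formula (\ref{opto}) by exploiting the fact that $\psi_{XY}$ is (up to phase) an eigenfunction of the one-parameter group $(\widehat{S}_t^{XY})$.

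First I would apply Theorem \ref{ThmTop}(v) with $\widehat{S} = \widehat{S}_t^{XY}$ and the window $\phi = \psi_{XY}$. This immediately yields
\[
\operatorname*{Op}\nolimits_{\text{TO}}^{\widehat{S}_t^{XY}\psi_{XY}}(a) \;=\; \widehat{S}_t^{XY}\,\operatorname*{Op}\nolimits_{\text{TO}}^{\psi_{XY}}(a\circ S_t^{XY})\,(\widehat{S}_t^{XY})^{-1}.
\]
Since $(S_t^{XY})$ is a one-parameter subgroup of $\operatorname*{Sp}(n)$ and $(\widehat{S}_t^{XY})$ its metaplectic lift, the inverse $(\widehat{S}_t^{XY})^{-1}$ equals $\widehat{S}_{-t}^{XY}$, so the right-hand side already has the form required by the statement.

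Next I would show that the left-hand side equals $\operatorname*{Op}\nolimits_{\text{TO}}^{\psi_{XY}}(a)$. By relation (\ref{stxy}) we have $\widehat{S}_t^{XY}\psi_{XY} = e^{i\theta(t)}\psi_{XY}$ with $\theta(t) = \tfrac{t}{2}\operatorname*{Tr}(X)$ real, so the two windows differ only by a unimodular scalar. The defining formula (\ref{defanti8}) shows that if the window $\phi$ is replaced by $c\phi$ with $|c|=1$, then the rank-one building blocks $(\psi|\phi_z)\phi_z$ pick up a factor $\bar{c}c = 1$; hence Toeplitz operators are invariant under a global phase change of the window:
\[
\operatorname*{Op}\nolimits_{\text{TO}}^{e^{i\theta}\psi_{XY}}(a) \;=\; \operatorname*{Op}\nolimits_{\text{TO}}^{\psi_{XY}}(a).
\]
Equivalently, this invariance is transparent from the Dirac form (\ref{Dirac}), where the projector $|\phi_z\rangle\langle\phi_z|$ is manifestly unchanged when $\phi$ is rescaled by a phase.

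Combining these two observations gives the claim. The only conceptual point to verify carefully is the phase-invariance of Toeplitz windows, which is the main (but elementary) obstacle; everything else is an automatic consequence of the symplectic covariance already established in Theorem \ref{ThmTop}(v) and of the eigenvalue property (\ref{stxy}) of the canonical group. Note that $a \in S_0(\mathbb{R}^n)$ ensures that all integrals in (\ref{defanti8}) are well defined, so no additional regularity argument is needed.
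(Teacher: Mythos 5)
Your proposal is correct and follows essentially the same route as the paper: apply the symplectic covariance formula (\ref{opto}) with $\widehat{S}=\widehat{S}_{t}^{XY}$ and window $\psi_{XY}$, invoke the eigenvalue relation (\ref{stxy}) to see that the transformed window differs from $\psi_{XY}$ only by a unimodular phase, and conclude from the phase-invariance of the window in the Toeplitz construction. Your added verification that $|c|=1$ leaves $(\psi|\phi_{z})\phi_{z}$ unchanged is exactly the point the paper states without elaboration.
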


\begin{proof}
The symplectic covariance equality (\ref{opto}) becomes%
\begin{equation}
\operatorname*{Op}\nolimits_{\text{TO }XY}^{\widehat{S_{t}}\psi_{XY}%
}(a)=\widehat{S}_{t}^{XY}\operatorname*{Op}\nolimits_{\text{TO }}^{\psi_{XY}%
}(a\circ S_{t}^{XY}))\widehat{S}_{-t}^{XY}.
\end{equation}
We have, by formula (\ref{stxy}),%
\begin{equation}
\widehat{S}_{t}^{XY}\psi_{XY}=\exp\left[  \frac{it}{2\hbar}\hbar
\operatorname*{Tr}(X)\right]  \psi_{XY}%
\end{equation}
hence the result noting that $\psi_{XY}$ and $c\psi_{XY}$ with $|c|=1$ define
the same operator.
\end{proof}

Blob operators are asymptotically close to the Weyl operator with same symbol
in the semiclassical limit $\hbar\rightarrow0$ they reduce asymptotically to
the usual Weyl operators. This is based on the observation hat if
$\widehat{A}_{\text{TO}}^{\psi_{XY}}=\operatorname*{Op}_{\text{TO }}%
^{\psi_{XY}}(a)$ and $\widehat{B}=\operatorname*{Op}\nolimits_{\mathrm{Weyl}%
}(a)$ then $\widehat{A}_{\text{TO}}^{\psi_{XY}}-$ $\widehat{B}$ $\rightarrow0$
when $\hbar\rightarrow0$ . We are not going to make this statement more
precise, but are motivating it by comparing the symbols of both operators. We
begin by noting that in view of formula (\ref{aww8}) in Theorem \ref{ThmTop}
we have $\operatorname*{Op}_{\text{TO }}^{\psi_{XY}}(a)=\operatorname*{Op}%
\nolimits_{\mathrm{Weyl}}(a\ast W\psi_{XY})$. Now $W\psi_{XY}=(\pi\hbar
)^{-n}e^{-Gz\cdot z/\hbar}$ where $G=S^{T}S$ with%
\[
S=%
\begin{pmatrix}
X^{1/2} & 0\\
X^{-1/2}Y & X^{-1/2}%
\end{pmatrix}
\in\operatorname*{Sp}(n).
\]
Now, a simple calculation yields $\lim_{\hbar\rightarrow0}\langle e^{-Gz\cdot
z/\hbar},\theta\rangle=\theta(0)$ for $\theta\in\mathcal{S}(\mathbb{R}^{n})$
hence $\lim_{\hbar\rightarrow0}W\psi_{XY}=\delta$ (Dirac's distribution). It
follows that
\[
\lim_{\hbar\rightarrow0}(a\ast W\psi_{XY}-a)=0
\]
that is, the symbols of both operators are asymptotically identical.

\subsection{Toeplitz operators as density matrices}

Toeplitz operators are very adequate for representing the density matrix (or
operator) of a mixed state. Here is an example (see \ \ \cite{QS}) which
illustrates this fact. Consider $\phi_{z_{0}}=T(z_{0})\phi_{0}$, it is the
ground state of the harmonic oscillator $H(z)=\frac{1}{2}|z-z_{0})^{2}$.
Assuming that $z_{0}$ is not precisely known, and can be any point of phase
space, so the state $\phi_{z_{0}}$ is largely unknown. This lads us to define
the Wigner distribution of this unknown state as being
\begin{equation}
\rho(z)=\int_{\mathbb{R}^{2n}}\mu(z)W(T(z_{0})\phi_{0})(z)/dz \label{tomix}%
\end{equation}
where $\mu$ is a probabiliy distribution. This generalizes in a natural say
the usual situation \cite{QHA} where one deals with a mixed stater, consisting
of a \emph{discrete} "mixture" $(\phi_{j})_{j\in F}$ of states each being
weighted by a probability $\mu_{j}$ and defining the Wigner distribution by
$\rho=\sum_{j\in F}\mu_{j}W\phi_{j\in F}.$. Formula (\ref{tomix}) is thus not
only natural, but much more general. The essential point is to note that in
view of the translation covariance property $W(T(z_{0})\phi_{0})(z)=W(\phi
_{0})(z-z_{0})$ formula (\ref{tomix}) i can be written%
\begin{equation}
\rho(z)=\int_{\mathbb{R}^{2n}}\mu(z)W(\phi_{0})(z-z_{0})/dz=\mu\ast W(\phi
_{0}(z)
\end{equation}
so that
\[
\widehat{\rho}=(2\pi\hbar)^{n}\operatorname*{Op}\nolimits_{\mathrm{Weyl}}%
(\mu\ast W(\phi_{0})=(2\pi\hbar)^{n}\operatorname*{Op}\nolimits_{\text{TO }%
}^{\phi_{0}}(\mu).
\]
This example of course can be generalized without difficulty to more general
situations. In fact:

\begin{theorem}
Let $\rho\in L^{1}(\mathbb{R}^{2n})\cap L^{2}/\mathbb{R}^{2n})$ be a
probability density on $L^{2}(\mathbb{R}^{2n})$. For every window $\phi\in
SS_{0}(\mathbb{R}^{n})$, $||\phi||_{L^{2}}=1$, the Toeplitz operator
\begin{equation}
\widehat{\rho}=(2\pi\hbar)^{n}\operatorname*{Op}\nolimits_{\text{TO }}^{\phi
}(\rho) \label{rhobar}%
\end{equation}
is a density matrix.
\end{theorem}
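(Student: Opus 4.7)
The plan is to verify the three defining properties of a density matrix in turn: self-adjointness, positive semidefiniteness, and unit trace. The first two follow almost verbatim from earlier results, and the only real content is the trace normalization, which is where the explicit factor $(2\pi\hbar)^n$ in the definition of $\widehat{\rho}$ is needed.

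First I would invoke Theorem \ref{ThmTop}(iv) directly: since $\rho\geq 0$ (a probability density) and $\phi\in S_0(\mathbb{R}^n)$ with $\|\phi\|_{L^2}=1$, the Toeplitz operator $\operatorname{Op}_{\text{TO}}^{\phi}(\rho)$ is self-adjoint and positive semidefinite. Scalar multiplication by the positive constant $(2\pi\hbar)^n$ preserves both properties, so $\widehat{\rho}\geq 0$ and $\widehat{\rho}=\widehat{\rho}^{\,*}$.

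Next I would compute the trace. The cleanest route is via the Weyl representation from Theorem \ref{ThmTop}(i), which gives $\widehat{\rho}=(2\pi\hbar)^n\operatorname{Op}_{\mathrm{Weyl}}(\rho\ast W\phi)$. Because $\phi\in S_0(\mathbb{R}^n)$ we have $W\phi\in L^1(\mathbb{R}^{2n})$, and since $\rho\in L^1(\mathbb{R}^{2n})$ Young's inequality gives $\rho\ast W\phi\in L^1(\mathbb{R}^{2n})$; combined with positivity this is enough to conclude that $\widehat{\rho}$ is trace class. The standard trace formula for Weyl operators, $\operatorname{Tr}\operatorname{Op}_{\mathrm{Weyl}}(b)=(2\pi\hbar)^{-n}\int b(z)\,dz$, then yields
\[
\operatorname{Tr}(\widehat{\rho})=\int_{\mathbb{R}^{2n}}(\rho\ast W\phi)(z)\,dz=\Bigl(\int_{\mathbb{R}^{2n}}\rho(z)\,dz\Bigr)\Bigl(\int_{\mathbb{R}^{2n}}W\phi(z)\,dz\Bigr)=1\cdot 1=1,
\]
where the last factor equals $1$ by the marginal property (\ref{marginal}) and $\|\phi\|_{L^2}=1$, and the interchange of integrals is justified by Fubini since both factors are in $L^1$. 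The role of the constant $(2\pi\hbar)^n$ in (\ref{rhobar}) is precisely to cancel the $(2\pi\hbar)^{-n}$ in the Weyl trace formula.

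The only subtle point, and what I would flag as the main technical obstacle, is establishing that $\widehat{\rho}$ is genuinely trace class and that the Weyl trace formula applies. An alternative, perhaps more transparent, route uses the Dirac-style representation (\ref{Dirac}): formally $\widehat{\rho}=\int_{\mathbb{R}^{2n}}\rho(z)\,|\phi_z\rangle\langle\phi_z|\,dz$ is a Bochner integral of rank-one projectors with nonnegative weight $\rho$, and each projector has trace $\|\phi_z\|^2=1$; exchanging trace and integral (justified by monotone convergence for positive operators, or by approximation with simple functions using $\rho\in L^1$) gives $\operatorname{Tr}(\widehat{\rho})=\int\rho(z)\,dz=1$ without ever leaving the Toeplitz picture. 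Either approach closes the argument.
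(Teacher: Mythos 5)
Your proof is correct and follows essentially the same route as the paper's: positivity comes from Theorem \ref{ThmTop}(iv), and the unit trace from the Weyl representation $\widehat{\rho}=(2\pi\hbar)^{n}\operatorname{Op}_{\mathrm{Weyl}}(\rho\ast W\phi)$ combined with the Weyl trace formula, the paper merely evaluating $\int(\rho\ast W\phi)\,dz$ via Fourier transforms at the origin where you use Fubini directly. Your added care about the trace-class justification (and the alternative Bochner-integral argument) goes beyond what the paper records but does not change the approach.
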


\begin{proof}
we have $\widehat{\rho}\geq0$ in view of (iv) in Theorem \ref{ThmTop}. Let us
show that $\operatorname*{Tr})\widehat{\rho})=1$. In view of formula
(\ref{aww8})
\[
\widehat{\rho}=(2\pi\hbar)^{n}\operatorname*{Op}\nolimits_{\mathrm{Weyl}}%
(\rho\ast W\phi.)
\]
and hence \cite{Birkbis}\qquad\
\begin{align*}
\operatorname*{Tr}(\widehat{\rho}) &  =\int_{\mathbb{R}^{2n}}(\rho\ast
W\phi)(z)dz\\
&  =(2\pi\hbar)^{n}F(\rho\ast W\phi)(0)\\
&  =(2\pi\hbar)^{2n}F\rho(0)FW\phi)(0).
\end{align*}
\ where $F$ \ is the $2n$-dimensional Fourier transform. Now,
\begin{align*}
F\rho(0) &  =(2\pi\hbar)^{-n}\int_{\mathbb{R}^{2n}}\rho(z)dz=(2\pi\hbar
)^{-n}\\
FW\phi)(0). &  =(2\pi\hbar)\int_{\mathbb{R}^{2n}}W\phi(z)dz=(2\pi\hbar)^{-n}%
\end{align*}
hence $\operatorname*{Tr}(\widehat{\rho})=1$.
\end{proof}

\begin{remark}
The operator $\widehat{\rho}=(2\pi\hbar)^{n}\operatorname*{Op}%
\nolimits_{\text{TO }}^{\phi}(\rho$ being a density matrix, it is a compact
operator hence the spectral theorem tells us that there exists an orthonormal
system $(\phi_{j})$ in $L^{2}/\mathbb{R}^{2n})$ and constants $\lambda_{j}%
\geq0$ summing up to one such that $\widehat{\rho}=\sum_{j}\lambda|\phi
_{j}\rangle\langle\phi_{j}|$. It follows that the Wigner distribution of
$\widehat{\rho}$ is $\sum_{j}\lambda W\phi_{j}$ and hence, comparing with the
theorem above,
\begin{equation}
\rho\ast W\phi=%
{\textstyle\sum_{j}}
\lambda W\phi_{j}.\label{un}%
\end{equation}
This relation seems to be unknown in the literature; it would be interesting
to discuss its physical interpretation.
\end{remark}

\begin{acknowledgement}
This work has been financed by the Austrian Research Foundation FWF
(QuantAustria PAT 2056623)
\end{acknowledgement}

\end{document}